\documentclass[preprint]{vldb}
\pdfoutput=1
\usepackage{hyperref}
\usepackage[vldb,techreport]{manuprep}

\usepackage[aboveskip=0pt,belowskip=0pt]{subcaption}
\iftech
\usepackage[font=small, labelfont=bf, skip=0pt]{caption}
\else
\usepackage[font=scriptsize, labelfont=bf, skip=0pt]{caption}
\fi
\usepackage{float}
\captionsetup[subfigure]{subrefformat=simple,labelformat=simple}

\pagestyle{plain}

\setlength{\textfloatsep}{0pt}%

\makeatletter
\def\@copyrightspace{\relax}
\def\@copyrightblurb{\relax}
\def\@mkbibcitation{\relax}
\makeatother

\iftech
\setlength{\abovecaptionskip}{5pt plus 1pt minus 1pt}
\setlength{\belowcaptionskip}{2pt plus 1pt minus 1pt}
\else
\setlength{\abovecaptionskip}{1pt plus 1pt minus 1pt}
\setlength{\belowcaptionskip}{1pt plus 1pt minus 1pt}
\fi

\input{macros.tex}            %
\def\pma{\textsf{PMA}\xspace}
\def\pos{\textsf{PHOS}\xspace}
\def\ssi{\textsf{SSI}\xspace}

\def\rangetrim{{\sf RangeTrim}\xspace}
\def\optstop{{\sf OptStop}\xspace}

\def\dataset{\mathcal{D}\xspace}

\def\dslt#1{\dataset_{< #1}\xspace}
\def\dsgt#1{\dataset_{> #1}\xspace}

\def\ecdf{\widehat{F}\xspace}

\def\iter{i}

\def\whs{\widehat{\sigma}}

\def\lbound{\texttt{Lbound}\xspace}
\def\rbound{\texttt{Rbound}\xspace}

\def\hatg{\hat{g}}
\def\ghat{\hatg}
\def\gell{g_\ell}
\def\garr{g_r}
\def\interval{[\gell, \garr]}

\declarecommand{\numflightstuples}{606 million\xspace}

\declarecommand{\flights}{\textsc{Flights}\xspace}
\declarecommand{\sensor}{\textsc{Sensor}\xspace}
\declarecommand{\taxi}{\textsc{Taxi}\xspace}
\declarecommand{\police}{\textsc{Police}\xspace}
\declarecommand{\synth}{\textsc{Synthetic}\xspace}

\def\hoef{\textsf{Hoeffding}\xspace}
\def\hoefrt{\textsf{Hoeffding+RT}\xspace}
\def\bern{\textsf{Bernstein}\xspace}
\def\bernrt{\textsf{Bernstein+RT}\xspace}
\def\plusrt{\textsf{+RT}\xspace}
\def\exact{\textsf{Exact}\xspace}

\def\peek{\textsf{ActivePeek}\xspace}
\def\sync{\textsf{ActiveSync}\xspace}
\def\scan{\textsf{Scan}\xspace}

\def\topseparated#1{top-$#1$ separated\xspace}
\def\bottomseparated#1{bottom-$#1$ separated\xspace}
\def\ordered{groups ordered\xspace}

\def\intervalsmallerthan#1{$\hat{g}_r-\hat{g}_\ell < #1$}
\def\intervalgeq#1{$\hat{g}_r-\hat{g}_\ell \geq #1$}
\def\notinside#1{$#1\notin\interval$}
\def\isinside#1{$#1\in\interval$}
\def\enough{$c\geq m$}
\def\relintervalsmallerthan#1{$\max\{\frac{g_r-\hat{g}}{g_r}, \frac{\hat{g}-g_\ell)}{g_\ell}\} < #1$}
\def\relintervalgeq#1{$\max\{\frac{g_r-\hat{g}}{g_r}, \frac{\hat{g}-g_\ell)}{g_\ell}\} \geq #1$}

\newcounter{fqnum}
\newcommand{\qref}[1]{\hyperref[#1]{$q_{\ref*{#1}}$}}

\makeatletter
\newcommand{\fqtechlabel}[1]{\refstepcounter{fqnum}\label{external:#1}\hypertarget{#1}{}}
\newcommand{\fqlabel}[1]{\refstepcounter{fqnum}\label{internal:#1}\hypertarget{#1}{}}

\newcommand{\fqref}[1]{%
\@ifundefined{r@internal:#1}{%
\href{https://smacke.net/papers/ddavg.pdf}{F-q\ref*{external:#1}}
}{%
\protect\hyperlink{#1}{F-q\ref*{internal:#1}}%
}%
}

\newcommand{\fqrefp}[2]{
\@ifundefined{r@internal:#1}{%
\href{https://smacke.net/papers/ddavg.pdf}{\protect\hyperlink{#1}{F-q\ref*{external:#1}\blue{[#2\normalsize]}}}%
}{%
\protect\hyperlink{#1}{F-q\ref*{internal:#1}}\blue{[#2\normalsize]}%
}%
}

\makeatother

\newcommand{\fqrefstar}[1]{F-q\ref*{#1}}

\newenvironment{flightsquery}[2][]{%
\fqlabel{#1}\def\qheader{\scriptsize\color{pinegreen}{\# \fqrefstar{internal:#1}: #2}}%
}{}

\def\maxexactspeedup{more than $1000\times$\xspace}
\def\maxhoefspeedup{up to $124\times$\xspace}

\def\mindeptime{\blue{\texttt{\$min\_dep\_time}}\xspace}

\definecolor{codegreen}{rgb}{0,0.6,0}
\definecolor{codeblue}{rgb}{0,0,1.0}
\definecolor{codegray}{rgb}{0.5,0.5,0.5}
\definecolor{codepurple}{HTML}{C42043}
\definecolor{backcolour}{HTML}{F2F2F2}
\definecolor{bookColor}{cmyk}{0,0,0,0.90}

\lstset{upquote=true}

\newcommand\Small{\fontsize{8.5}{8.5}\selectfont}

\newcommand{\sql}[1]{\textsf{#1}}

\lstset{literate=%
    {0}{{{\color{codegreen}{0}}}}1
    {1}{{{\color{codegreen}{1}}}}1
    {2}{{{\color{codegreen}{2}}}}1
    {3}{{{\color{codegreen}{3}}}}1
    {4}{{{\color{codegreen}{4}}}}1
    {5}{{{\color{codegreen}{5}}}}1
    {6}{{{\color{codegreen}{6}}}}1
    {7}{{{\color{codegreen}{7}}}}1
    {8}{{{\color{codegreen}{8}}}}1
    {9}{{{\color{codegreen}{9}}}}1
}

\newcommand*\sqlidfont{\Small\ttfamily}
\newcommand*\sqlkwfont{\Small\ttfamily}
\newcommand*\sqlnumfont{\Small}
\newcommand*\sqlstrfont{\Small}

\lstdefinestyle{sqlstyle}{
    language=SQL,
    commentstyle=\color{pinegreen},
    keywordstyle=\color{codepurple}\sqlkwfont,
    identifierstyle=\sqlidfont,
    stringstyle=\color{codegreen}\sqlstrfont,
    numberstyle=\color{codegreen}\sqlnumfont,
    breakatwhitespace=false,
    breaklines=true,
    captionpos=b,
    keepspaces=true,
    numbersep=7pt,
    showspaces=false,
    showstringspaces=false,
    showtabs=false,
    framexrightmargin=-4pt,
    xleftmargin=4pt,
    framerule=0.5pt,
    escapeinside={<@}{@>},
    basewidth = {\thesissub{.5em}{.42em}},
    columns=flexible,
    basicstyle=\Small
}

\lstnewenvironment{lstsql}[1][]{
\lstset{style=sqlstyle}%
\iftech
\lstset{frame=tblr}
\fi
}{}

\declarecommand{\red}[1]{\textcolor{red}{#1}}
\declarecommand{\blue}[1]{\textcolor{blue}{#1}}
\declarecommand{\cyan}[1]{\textcolor{cyan}{#1}}
\declarecommand{\green}[1]{\textcolor{green}{#1}}
\declarecommand{\gray}[1]{\textcolor{gray}{#1}}
\declarecommand{\darkgreen}[1]{\textcolor{OliveGreen}{#1}}
\declarecommand{\pinegreen}[1]{\textcolor{pinegreen}{#1}}
\declarecommand{\purple}[1]{\textcolor{RoyalPurple}{#1}}

\ifnotes
\input{notes_macros.tex}
\else
\declarecommand{\agp}[1]{\ignorespaces}
\declarecommand{\agpquote}[2]{#1}
\declarecommand{\agpres}[1]{\ignorespaces}
\declarecommand{\smacke}[1]{\ignorespaces}
\declarecommand{\smackeres}[1]{\ignorespaces}
\declarecommand{\maryam}[1]{\ignorespaces}
\declarecommand{\ronitt}[1]{\ignorespaces}
\declarecommand{\agpst}[2]{#2}
\declarecommand{\smackout}[2]{\red{#2}}
\declarecommand{\smackoutres}[2]{\red{#2}}
\declarecommand{\resolved}[1]{\ignorespaces}
\declarecommand{\resolvedres}[1]{\ignorespaces}
\declarecommand{\strike}[1]{\ignorespaces}

\fi

\newcommand{\thesistechdel}[1]{#1}
\newcommand{\thesistechsub}[2]{#1}

\title{Rapid Approximate Aggregation \\ with Distribution-Sensitive Interval Guarantees\techreport{\LARGE\\\string[Technical Report\string]}}

\ifvldb
\author{
\alignauthor Stephen Macke$^{1,2}$ \quad\quad Maryam Aliakbarpour$^3$ \quad\quad Ilias Diakonikolas$^4$ \\ Aditya Parameswaran$^2$ \quad\quad Ronitt Rubinfeld$^3$\\
       \affaddr{\scalebox{1.0}{$^1$University of Illinois (UIUC) \quad\quad $^2$UC Berkeley \quad\quad $^3$MIT \quad\quad $^4$University of Wisconsin, Madison}} \\
       \affaddr{\scalebox{1.0}{\string{smacke,adityagp\string}@berkeley.edu 
       \quad \string{maryama@,ronitt@csail.\string}mit.edu 
       \quad ilias.diakonikolas@gmail.com}}
}
\else
\author{\IEEEauthorblockN{
Stephen Macke\IEEEauthorrefmark{1}\IEEEauthorrefmark{2},
Maryam Aliakbarpour\IEEEauthorrefmark{3},
Ilias Diakonikolas\IEEEauthorrefmark{4},
Aditya Parameswaran\IEEEauthorrefmark{2},
Ronitt Rubinfeld\IEEEauthorrefmark{3}}
\IEEEauthorblockA{
\IEEEauthorrefmark{1}University of Illinois (UIUC)
\quad\quad \IEEEauthorrefmark{2}UC Berkeley
\quad\quad \IEEEauthorrefmark{3}MIT
\quad\quad \IEEEauthorrefmark{4}University of Wisconsin, Madison}
\texttt{\string{smacke,adityagp\string}@berkeley.edu
\quad \string{maryama@,ronitt@csail.\string}mit.edu
\quad ilias.diakonikolas@gmail.com}}
\fi

\begin{document}

\maketitle

\def\root{.}
\def\figs{./}
\begin{abstract}
Aggregating data %
is fundamental 
to data analytics, data exploration, and OLAP. 
Approximate query processing (AQP) techniques
are often used to accelerate computation of aggregates
using samples, for which
confidence intervals (CIs) are widely used
to quantify the associated error.
CIs used in practice fall into two categories:
techniques that are {\em tight but not correct}, i.e., 
they yield tight intervals but only offer asymptotic guarantees, making them unreliable,
or techniques that are {\em correct but not tight}, i.e.,
they offer rigorous guarantees, but are overly conservative, leading to 
confidence intervals that are too loose to be useful. 
In this \work,
we develop a CI technique
that is both correct and tighter than traditional approaches. %
\agpquote{Starting from conservative CIs, we identify two issues
they often face: 
{\em pessimistic
mass allocation} (\pma) and 
{\em phantom outlier sensitivity} (\pos).
By developing a novel {\em range-trimming} 
technique for eliminating \pos
and pairing it with known CI techniques without \pma,
we develop a technique for computing CIs
with strong guarantees 
that requires fewer samples for the same width.}{You could techreport this to save space.}
We implement our techniques underneath a sampling-optimized in-memory column store
and show how to accelerate queries involving aggregates
on \techreport{real and synthetic datasets}
\papertext{a real dataset}
with speedups of \maxhoefspeedup over traditional AQP-with-guarantees
and \maxexactspeedup over exact methods.

\end{abstract}

\section{Introduction}
\label{sec:intro:dd}

Primitives for aggregation like
\AVG, \SUM, and \COUNT are key to making sense of 
and drawing insights from large volumes of
data, powering applications in 
OLAP, exploratory data analysis, 
and visual analytics.
Accelerating their computation 
is therefore of great importance.
Approximate Query Processing (AQP) is commonly used to accelerate
computation of these aggregates by estimating
them on a subset or sample of the full data. 
Reasoning about the error of the estimates
as introduced by approximation is crucial:
consumers of approximate answers---ranging 
from human decision makers to 
automated processes---rely on confidence
intervals (CIs) or error bounds as the foundation for
understanding the quality of the approximate answer.
Therefore, many AQP techniques
come with CIs to allow for 
more confident or informed decisions made 
using approximate estimates.

Error bounding,
or CI computation techniques 
take a confidence parameter $\delta\in[0,1]$, with the semantics that
the returned intervals $[g_\ell, g_r]$
fail to enclose the true aggregate $g^\star$ at most $\delta$
of the time.
One can tune $\delta$ to be as small as needed
($\delta=10^{-15}$ throughout this \work),
at the cost of requiring more samples
to achieve the same interval width $(g_r-g_\ell)$.
Likewise, for a given $\delta$,
taking more samples typically causes the error bounding procedure
to return a narrower confidence interval.
Since $\delta$ is typically small, we use the
phrase ``with high probability'' (\whp)
\agp{OK you do use it later. it's odd to have both with high probability and whp
introduced here at the same time. maybe use the former in the intro, introduce
whp later in the formal section -- with the hope that even systems folks will
follow all of the intro.}
as shorthand for ``with probability
greater than $(1-\delta)$''.
CI computation techniques need to satisfy two goals:
{\bf (i) compactness:} {\em by minimizing the
interval width $g_r - g_\ell$}, and {\bf (ii) correctness:} 
{\em by ensuring that
$g^\star \in [g_\ell, g_r]$ with high probability.}
However, achieving both compactness and correctness is difficult.

We outline the shortcomings of existing techniques,
that either prefer compactness over correctness ({\em asymptotic} techniques),
or correctness over compactness ({\em conservative} techniques),
below:

\begin{figure}[t]
\begin{lstsql}
SELECT Origin, AVG(DepDelay) FROM flights
GROUP BY Origin HAVING AVG(DepDelay) < 0
\end{lstsql}
\vspace{-5pt}
\caption{Origin airports with negative average delay. In this query,
the \AVG aggregates are consumed both by the user and by the system.}
\label{fig:putative:dd}
\end{figure}

\topic{Compactness without Correctness}
{\em Asymptotic} error bounding techniques such as bootstrap
CIs~\techsub{\cite{efron1979bootstrap,zeng2014analytical}}{\cite{efron1979bootstrap,efron1992bootstrap,zeng2014analytical}}
or central limit theorem (CLT)-based
CIs~\cite{student1908probable,hajek1960limiting} %
make assumptions about the distribution taken by the data given a
``large enough'' sample size. These procedures
typically give CIs that are much tighter (and therefore more
useful for drawing inferences about the query results), and have
enjoyed numerous applications in database and visual analytics
systems~\cite{pol2005relational,mozafari2017approximate,mozafari2015handbook,li2018approximate,fisher2011incremental,kwon2017sampling},
including Aqua~\cite{acharya1999aqua}, BlinkDB~\cite{blinkdb,agarwal2012blink},
DBO~\cite{jermaine2008scalable}, and online aggregation~\cite{Hellerstein1997}, and
have furthermore seen a number of DBMS-specific
extensions~\cite{zeng2014analytical,park2018verdictdb}.

However, these
asymptotic techniques result in intervals that only enclose the true
aggregate \whp in the limit as the size of the sample grows to
infinity.\footnote{The error of CLT-based
methods shrinks as $\bigo{1/\sqrt{m}}$,
but with constants depending on unknowns such as the third absolute normalized moment,
according to the Berry-Esseen theorem~\cite{berry1941accuracy,esseen1956moment}.}
That is, these techniques are correct
in the limit as the sample size approaches infinity,
but they provide no real guarantees for any given finite instance,
potentially leading to failures downstream.
For example, consider the query in \Cref{fig:putative:dd}, which determines
origin airports whose departing flights are ahead-of-schedule, on average. An
AQP system could use CIs to facilitate early stopping by using them to infer
on which side of the \sql{HAVING} threshold the various groups appear.
If such a system relies on asymptotic CIs, it is prone to serious
types of error, called {\em subset error} and
{\em superset error}~\cite{mozafari2017approximate}, whereby certain tuples
may be missing, and other tuples may appear spuriously.

\topic{Correctness without Compactness}
Recognizing the downsides of asymptotic approaches,
recent work~\cite{SPS,alabi2016pfunk,Kim2015,rahman2016ve,macke2018adaptive}
has begun to adopt {\em conservative} error bounders, which leverage
concentration inequalities to compute CIs.
These procedures return bounds that follow {\em probably approximately correct}
(PAC)~\cite{valiant2013probably} semantics:
given $\delta \in [0,1]$, the probability that the procedure returns lower and
upper bounds $[g_\ell, g_r]$ around the approximate aggregate
$\hat{g}$ that fail to enclose the true aggregate $g^\star$
should be {\em at most} $\delta$ for {\em any} sample size
(in contrast with asymptotic techniques,
for which the probability converges to $\delta$ given a large enough sample).
These techniques have been used in online aggregation~\cite{Hellerstein1997,haas1996hoeffding}
and more recently in work on visual analytics~\cite{alabi2016pfunk,Kim2015,rahman2016ve,macke2018adaptive}.

\begin{figure}[t]
\begin{densecenter}
\includegraphics[width=\thesissub{1.01}{.85}\linewidth]{\figs/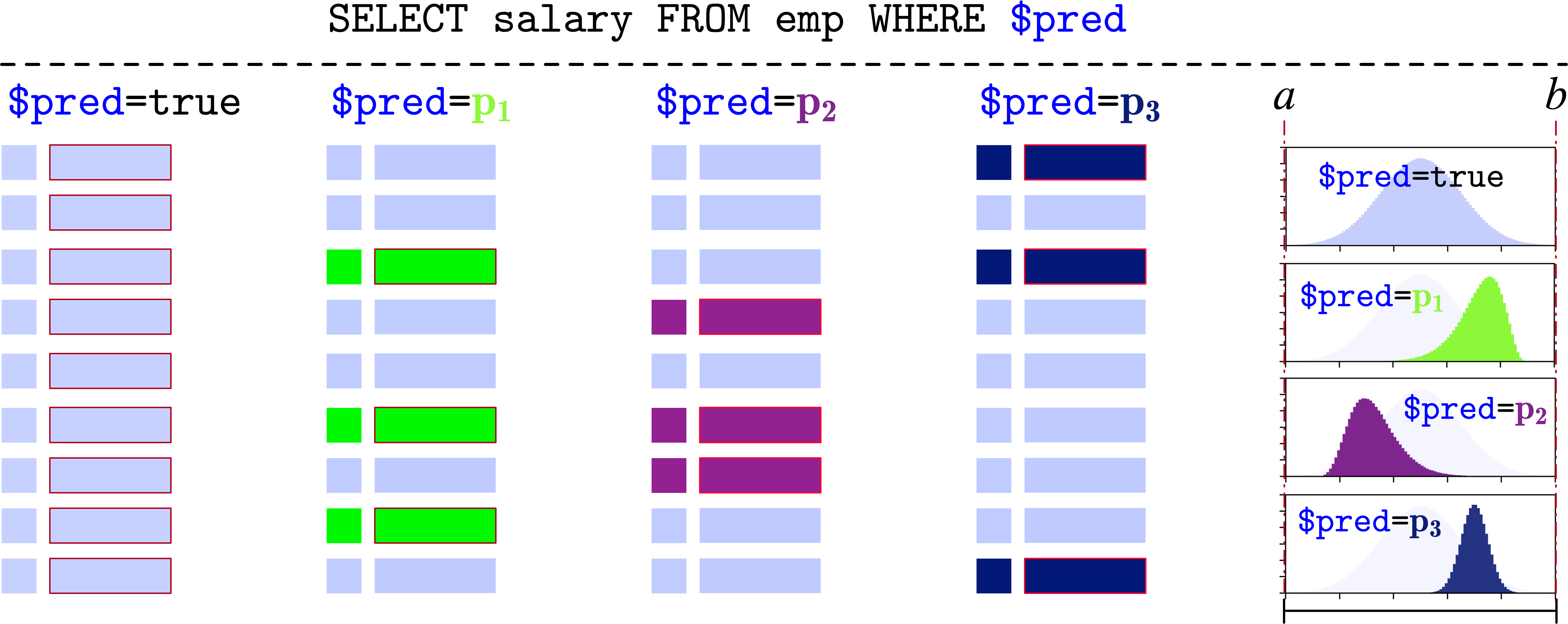}
\caption{Few points may lie near the range bounds $a$ and $b$, and with filters applied,
the true range could be significantly smaller than $(b-a)$.}
\label{fig:predrange:dd}
\end{densecenter}
\end{figure}

In general, conservative methods such as those based on Hoeffding's
inequality~\cite{hoeffding1963probability} or on the
Hoeffding-Serfling inequality~\cite{serfling1974probability}
rely on a-priori knowledge of {\em range bounds}
$a$ and $b$ between which the data fall (typically inferred during data loading). 
Although they achieve the correctness goal of error bounders,
when used for \AVG,
the CI width for Hoeffding-based error bounders
scales with the range size $(b-a)$, creating at least two major issues in the
context of a relational database, illustrated in \Cref{fig:predrange:dd}.
{\bf \em (i) First, the presence of a very few outliers can significantly widen the range $[a, b]$}
(and therefore the CI width), even though most of the data may lie in a much smaller
effective range. 
\techreport{In \Cref{fig:predrange:dd}, we see that even though the range of \texttt{salaries}
is $b-a$ when \texttt{\$pred = true}, most of the data is concentrated
in the center of the range. }
{\bf \em (ii) Second, predicates and groupings
may be applied during data exploration,
so that the filtered data lies in a smaller range than $[a, b]$}\techreport{;
in \Cref{fig:predrange:dd}, with \texttt{\$pred = $\mathbf{p_i}$}, 
we see that the range of filtered salaries is much smaller than 
even the \texttt{\$pred = true} case. 
However, direct application of Hoeffding-based
methods do not account for the tighter range of the filtered data,
instead treating the sampled tuples as if they were
taken from the original (unfiltered) data}.

\topic{Key Research Challenges and Contributions}
With this background in mind, this \work aims to
{\bf \em preserve correctness} (or safety) of conservative error bounders for
\AVG, \SUM, and \COUNT aggregates
{\bf \em while also providing compactness} (for speed).
We encounter a number of challenges toward this end:

\emtitle{1. Identifying conservative error bounder pathologies.}
To improve the viability of approaches with strict
correctness guarantees,
we must first determine the circumstances under which conservative
error bounders are {\em too} conservative, and understand when
fundamental limits prevent improvements without sacrificing guarantees.

\emtitle{Our contribution:}
We identify two issues in range-based
concentration inequalities
that cause unnecessary looseness
when used to compute conservative error bounds for \AVG. %
The first, {\em pessimistic mass allocation} (\pma),
refers to the unnecessary placement of unseen
probability mass at endpoints $a$ and $b$ of the range enclosing the data.
The second, {\em phantom outlier sensitivity} (\pos),
occurs when computation of the lower confidence bound $g_\ell$
depends on the upper range bound $b$ {\em even without
observed samples near $b$},
and vice versa for a dependency from $a$ to $g_r$.
\pos captures the intuition that unobserved
large (small) values should not loosen $g_\ell$ ($g_r$).

\emtitle{2. Correcting error bounder pathologies.}
After identifying correctable issues with existing conservative error
bounders, we need to develop novel techniques that address these
issues, while keeping in mind that these techniques should be
efficient in terms of computation and memory.

\emtitle{Our contribution:}
We develop a simple and general error bounding technique, {\em range trimming},
that corrects \pos without sacrificing desirable PAC semantics.
At a high level, range trimming operates by making
error bounders {\em asymmetric},
so that
$g_\ell$ depends only on the \MAX
value seen (and not on $b$), and
$g_r$ depends only on the
\MIN value seen, yielding tighter intervals
when $(\MAX-\MIN)$ is smaller than $(b-a)$.
Range trimming can be used with any existing conservative range-based
error bounder (\ie, an error bounder whose only assumption is that data falls in $[a, b]$).
We show how range trimming %
can be used to develop an error bounder for \AVG{}
(and by extension \SUM) with neither \pos nor \pma
by using it alongside a bounder based on Bernstein's inequality.

\emtitle{3. Minimizing sampling overhead.}
In order to enjoy the benefits of early termination for
queries with multiple aggregates, we need to ensure that
termination is not bottlenecked on any single aggregate,
allowing query processing to adaptively sample from the
most informative locations on physical storage while
simultaneously minimizing overhead.

\emtitle{Our contribution:}
We show how to couple our approach with a sampling-optimized column store
that takes without-replacement samples in a
locality-aware manner,
and that leverages bitmap indexes to prioritize samples that enable
earlier termination in the case of \sql{GROUP BY} clauses.
\techreport{Furthermore, although existing conservative
error bounders assume knowledge of the dataset size
(an unreasonable assumption when a filter of unknown selectivity is applied),
we show how to circumvent this limitation by
computing an upper bound on this size online.}

\topic{Impact}
We develop error bounding techniques
that more effectively leverage distributional
information of the underlying data, and that therefore
often lead to tighter error bounds as compared with those yielded by
typical conservative error bounders. %
When used in conjunction with a sampling-optimized column store for in-memory analytics,
we demonstrate speedups of \maxexactspeedup over exact techniques and
\maxhoefspeedup over traditional conservative approximate techniques,
all without sacrificing strong correctness guarantees.

\topic{Extensibility}
While our presentation focuses on confidence intervals
for queries over a single table with simple \AVG aggregates,
we note that our techniques are more general and can be used to
facilitate \SUM and \COUNT aggregates, queries over views formed from joins in
a snowflake schema, and queries with general UDFs --- we discuss these extensions
in \Cref{sec:scansample:dd}\thesistechsub{ and more in
\papertext{our extended technical report~\cite{techreport-dd}.}
\techreport{the appendix.}}{.}

\topic{Outline} The rest of this \work is organized as follows.
\Cref{sec:problem:dd} discusses existing conservative error bounders
and their prior usage in the DBMS literature, and
develops a conceptual framework for identifying issues with these
error bounders. In \Cref{sec:algorithm:dd} we develop the theory behind our \rangetrim
technique, and show how to fix issues with previous error bounders
in \Cref{sec:problem:dd}. \Cref{sec:system:dd} addresses systems
issues that appear when sampling without replacement
and develops \fv, our sampling-optimized column store,
and \Cref{sec:experiments:dd} empirically
evaluates our techniques in the context of this system.
We survey additional related work in \Cref{sec:related:dd}.

\section{DBMS Error Bound Integration}
\label{sec:problem:dd}
\begin{table}[t]
  \begin{densecenter}
  {\papertext{\scriptsize}\begin{tabular}{|C{\thesissub{2.3cm}{4.2cm}}|M{\thesissub{5.3cm}{11.2cm}}|}
    \hline
    \rowcolor[HTML]{C0C0C0} 
    {\bf Symbols / Terms} & {\bf Descriptions} \\ \hline
    $\dataset, N, S, \techreport{c,} m$ & 
    Dataset, \numof points in dataset (\ie $|\dataset|$), sample, \numof points \techreport{taken
    (for $c$) or desired (for $m$)} in sample (\ie $|S|$)\\ \hline
    $g^\star, \hat{g}, g_\ell, g_r$ & True aggregate, estimate, error bounds\\ \hline
    $a, b, \sigma^2, \widehat{\sigma}^2$
    $\delta, \veps$ & Range bounds, variance, empirical variance,
    error probability upper bound, error \\ \hline
    \techreport{$F, \ecdf, L, U$ & True / empirical CDF, lower and upper bounds on true CDF \\ \hline}
    \lbound, \rbound & Confidence lower (\resp upper) bounding routines parameterized on
                        $a, b, N$, and other sample state\techreport{(see \S\ref{sec:state:dd})}. \\ \hline
    \ssi, \pma, \pos & Sample-size-independent, 
                       pessimistic mass allocation,
                      phantom outlier sensitivity \\ \hline
    \end{tabular}}
  \caption{Glossary of terms / notation.}
  \label{tab:notation:dd}
  \end{densecenter}
\end{table}

In this section, we first
describe applications of confidence intervals
for facilitating query processing in a database system (\S\ref{sec:apps:dd}).
Next, we survey methods for computing error bounds with guarantees applicable
to DBMS aggregates (\S\ref{sec:computing-cis:dd})
identify their shortcomings (\S\ref{sec:pathologies:dd}) and conclude
with a formal problem statement (\S\ref{sec:prob-statement:dd}).

\subsection{DBMS CI Applications}
\label{sec:apps:dd}
Consider the query in \Cref{fig:putative:dd}.
In this query, \AVG aggregates are both {\em displayed} as output
in the query results, and are also used to {\em filter}
the set of tuples in the output. This reflects two major applications
of confidence intervals in a DBMS setting: CIs that are {\em explicitly used}
downstream, \ie, by an analyst, or CIs that are
{\em implicitly used} by automated processes.

\topic{Explicit Use of Downstream CIs}
When approximating aggregates in a DBMS,
confidence intervals can be included in the output displayed to users.
For this application, in which CIs are {\em explicitly} displayed to users,
the \AVG aggregates belonging to the groups output by the query in \Cref{fig:putative:dd}
are augmented with confidence intervals and included in the output.
This application helps users
{\em to reason about uncertainty in approximate answers}
and has seen prior usage in the database and visual
analytics literature~\cite{Hellerstein1997,fisher2011incremental}.

\topic{Implicit Use of Downstream CIs}
Confidence intervals have been applied toward facilitating
various other kinds of downstream applications, for example in order
to enable early stopping. Example applications from prior literature
include high-level accuracy
contracts~\cite{mozafari2015handbook,park2018verdictdb} %
(\ie, guaranteeing query results are
within $\veps$ of the correct), ranking query results~\cite{Kim2015,macke2018adaptive},
and bounding relative error~\cite{alabi2016pfunk}. In all cases, the user need
not ever observe the interval: the goal is to provide {\em early stopping}
while ensuring {\em correct results}. We consider these
applications later in our experiments in \Cref{sec:experiments:dd}.

\topic{Goal}
In this \work, we are primarily concerned with enabling CI {\em compactness} (to reduce query latency)
without sacrificing CI {\em correctness} (thereby ensuring safety),
for both explicit and implicit applications of CIs.
{\em The major goal is therefore to develop CI techniques that are as tight
as possible, while always enclosing the quantity in question.}
Throughout this section and \Cref{sec:algorithm:dd}, we will focus
our discussion on CIs for \AVG aggregates;
we will cover \SUM and \COUNT aggregates in \Cref{sec:system:dd}.

\subsection{Computing CIs in a DBMS}
\label{sec:computing-cis:dd}
We now describe methods for computing error bounds with accuracy guarantees
in a database system, along with any assumptions required.
Relevant notation is summarized in \Cref{tab:notation:dd}.
We begin by defining %
error bounders, bounds, and confidence intervals.

\begin{definition}[{\protect{$(1-\delta)$} error bounders and bounds}]
\label{def:bounds:dd}
A procedure $P$ that returns error bounds $\interval$ for some
aggregate $g^\star$ given a sample is a {\em $(1-\delta)$ error bounder}
if, across all possible samples,
$\Parg{g^\star \notin [g_\ell, g_r]} < \delta$.
$\interval$ is called the {\em $(1-\delta)$ confidence interval}
for $g^\star$, and
$g_\ell$ and $g_r$ are collectively referred to as
{\em $(1-\delta)$ error} or {\em confidence bounds}.
\end{definition}

In contrast with asymptotic error bounders that only satisfy
$\Parg{g^\star\notin[g_\ell, g_r]} \approx \delta$
for large-enough sample sizes,
the $(1-\delta)$ error bounders from \Cref{def:bounds:dd}
{\em always} satisfy $\Parg{g^\star\notin[g_\ell, g_r]} < \delta$ {\em for any sample size},
so we call them {\em sample-size-independent} (\ssi).

\subsubsection{Assumptions Applicable to Data in a DBMS}
\label{sec:assumptions:dd}

In the case of \AVG aggregates, all error bounding procedures
require some prior knowledge about the data
over which they operate -- otherwise, outliers can have
arbitrarily strong effects on the aggregate in question.
Weaker assumptions are more general, but
typically yield more conservative bounds.

In this \work, we make two assumptions about the data $\dataset$
over which queries operate:
first, that
every datapoint $x \in \dataset$ lies in some interval $[a, b]$;
second, that
datapoints can be effectively sampled {\em without replacement} from $\dataset$.
We now discuss these assumptions in the context of prior work and show
that they can be implemented effectively within real systems.

\topic{Known Range Bounds}
As in prior work~\cite{Hellerstein1997}, we assume that the database
catalog maintains {\em range bounds} $a$ and $b$ for the \MIN and
\MAX of each continuous
column, inferred, for example,
during data loading. (Note that we do not require $[a, b] = [\MIN, \MAX]$,
but only that $[a, b] \supseteq [\MIN, \MAX]$.)
These assumptions are more
applicable in the context of a database
as compared with stronger distributional assumptions (\eg, that the
data are normal or that they obey a tighter sub-Gaussian parameter than that implied
by the range bounds~\cite{wainwright2015basic})
and can be easily maintained in the case of insertions.
We refer to bounders that assume knowledge of $a$ and $b$
as {\em range-based error bounders} throughout this \work.
Furthermore, \thesistechdel{we show in \papertext{our extended technical report~\cite{techreport-dd}}
\techreport{the appendix (\S\ref{sec:exprrange:dd})}
that} it is possible to leverage the range assumption even in the case of
aggregates involving arbitrary expressions over multiple columns
by first solving an optimization problem
for derived range bounds $a'$ and $b'$ that enclose the transformed data.

\topic{Sampling Without Replacement}
\agp{Flow from previous paragraph is a bit abrupt.}
Estimates for \AVG aggregates generally converge faster for samples taken without replacement
than samples taken with replacement~\cite{serfling1974probability,bardenet2015concentration}.
In the context of a DBMS, sampling with replacement has traditionally
been considered easier than sampling without replacement, since the
system does not need to ``remember'' the samples already taken~\cite{olken1993random,Kim2015}.
Sampling as traditionally implemented, however, also has poor locality properties,
as nearly every read operation results in a cache miss. Another approach taken in prior
work~\cite{qin2014pf,wu2010continuous,zeng2015g,macke2018adaptive}
is to materialize samples ahead-of-time by performing a single up-front shuffle
of the entire relation, so that sampling without replacement
can be implemented via a scan of the data {\em regardless} of any applied filters or
other transformations. Since this approach is valid for multiple queries executed
during ad-hoc, exploratory workloads (in contrast with approaches that use workload
assumptions to pre-materialize stratified samples~\cite{icicles,blinkdb}),
we design our system architecture around this approach, described in more detail
in \Cref{sec:system:dd}.

\iftech
\subsubsection{State for DBMS Error Bounds}
\label{sec:state:dd}
OLAP queries must operate over many tuples, so it is desirable
that aggregations and their error bounders maintain small of memory footprints as possible
as new tuples are examined, although we will see in \Cref{sec:bounders:dd}
that some bounders must maintain state which grows with the number of
tuples examined. To better understand implementation details for error
bounders within the context of a DBMS,
we present error bounders in terms
of the following interface:
\begin{denseding}
{\color{linkpurple}\item\leavevmode\vphantom{a}\label{item:init}}\texttt{init\_state()}: Initializes state needed for error bounds.
{\color{linkpurple}\item\leavevmode\vphantom{b}\label{item:update}}\texttt{update\_state($S,v$)}: Given the current state $S$ and a newly-seen value $v$,
      compute new state $S'$.
{\color{linkpurple}\item\leavevmode\vphantom{c}\label{item:lbound}}\texttt{Lbound($S,a,b,N,\delta$)}: Return a confidence lower bound
              for a sample whose relevant statistics are captured in state $S$,
              assuming the sample came from a finite dataset $\dataset$ of $N$ values in $[a,b]$.
              The probability that the sample leads to this function returning a value greater
              than $\AVG(\dataset)$ is $< \delta$.
{\color{linkpurple}\item\leavevmode\vphantom{d}\label{item:rbound}}\texttt{Rbound($S,a,b,N,\delta$)}: Symmetric to \texttt{Lbound} for the confidence upper bound.
      Can typically be implemented in terms of \texttt{Lbound} after
      a suitable transformation of $S$.
\end{denseding}
The state $S$ captures information such as the count of tuples examined
and the current running average, as well as anything else required by
\texttt{Lbound} and \texttt{Rbound}. The state initialization and %
update logic is analogous to state maintenance logic for aggregates functions as
implemented in existing commercial database systems\techsub{~\cite{psqlagg}}{~\cite{mssqlagg,oracleagg,psqlagg}}.

Note that both \lbound and \rbound depend on the range bounds $a$ and $b$,
as well as the data size $N$ (allowing for tighter bounds when sampling without
replacement).
\fi

\subsubsection{Error Bounds for Finite and Bounded Data}
\label{sec:bounders:dd}
\techreport{\smacke{Add statements for applicable inequalities to the appendix.}}
In this section, we review some techniques for computing
confidence intervals that leverage only the assumptions
discussed in the previous subsection: that samples are
taken without-replacement from data bounded in some
a priori-known range $[a, b]$. Our goal is not to be
exhaustive but representative, drawing attention
to previous applications in the DB literature\techreport{ (and lack thereof)}.
Further details about these bounders, such as implementation pseudocode and full restatements
of relevant theorems, are available in our extended technical report~\cite{techreport-dd}.

\techreport{
\begin{algorithm}[t]
\caption{Hoeffding-Serfling error bounder~\protect\cite{serfling1974probability}}\label{alg:hsbounder:dd}
\algrenewcomment[1]{\texttt{/*} #1 \texttt{*/} \\}
\SetKwInOut{Input}{Input}
\SetKwInOut{Output}{Output}
\SetKwRepeat{Do}{do}{while}
\SetKwBlock{Repeat}{Repeat}{}
\SetKwProg{myfunc}{function}{}{}
\SetKwFunction{init}{init\_state}
\SetKwFunction{update}{update\_state}
\SetKwFunction{lbound}{Lbound}
\SetKwFunction{rbound}{Rbound}
{\thesisdel{\scriptsize}
\nonl\ \\
\myfunc{\init{}\quad\ref{item:init}}{
\Return{\upshape{\texttt{\big\{$m$: $0$, $\ghat$: $0$\big\}}}}\;
}
\nonl\ \\
\myfunc{\update{$S, v$}\quad\ref{item:update}}{
$m' \gets S.m + 1$\;
$\ghat' \gets S.\ghat + (v - S.\ghat) / m'$\;
\Return{\upshape{\texttt{\big\{$m$: $m'$, $\ghat$: $\ghat'$\big\}}}}\;
}
\nonl\ \\
\myfunc{\lbound{$S, a, b, N, \delta$}~\upshape{\cite{serfling1974probability}\quad\ref{item:lbound}}}{
$\veps \gets (b-a)\cdot\sqrt{\frac{\log{(1/\delta)}}{2\cdot S.m} \cdot(1 - \frac{S.m-1}{N})}$\;
\Return{\upshape{$S.\ghat - \veps$}}\;
}
\nonl\ \\
\myfunc{\rbound{$S, a, b, N, \delta$}\quad\ref{item:rbound}}{
$S.\ghat \gets (a+b) - S.\ghat$\;
\Return{\upshape{$(a+b) - \lbound(S, a, b, N, \delta)$}}\;
}
}
\end{algorithm}
}
\topic{Hoeffding-Serfling-based Bounder}
An error bounder based on the Hoeffding-Serfling inequality~\cite{serfling1974probability}
computes CIs whose widths depend only on the range $(b-a)$ and the number of samples
$m$, and that have size $\bigo{(b-a)/\sqrt{m}}$\techreport{ (if we ignore the sampling fraction term)}.
While asymptotically optimal for worst-case data distributed with half of the
points at $a$ and the other half at $b$, it is needlessly wide in practice, when
few points occur near $a$ or $b$.
\techreport{An implementation of this bounder in terms of our interface from \Cref{sec:state:dd}
is given in \Cref{alg:hsbounder:dd}.}
\techreport{We give a statement of the Hoeffding-Serfling inequality and derive
the corresponding error bounder.

\begin{lemma}[{Hoeffding-Serfling Inequality~\protect\cite{serfling1974probability}}]
\label{lem:hs:dd}
Let $\dataset = x_1, \ldots, x_N$ be a set of $N$ values in $[a, b]$
with average value $\AVG(\dataset) = \mu$. Let $X_1, \ldots, X_N$ be
a sequence of random variables drawn from $\dataset$ without replacement.
For every $1 \leq m \leq N$ and $\veps > 0$,
\[ \Parg{\max_{1 \leq k \leq m} \frac{\sum_{t=1}^k (X_t - \mu)}{N-k} \geq \frac{m\veps}{N-m}} \leq \delta \] where
\[ \delta = \exp{-\frac{2m\veps^2}{(1-\frac{m-1}{N})(b-a)^2}} \]
\end{lemma}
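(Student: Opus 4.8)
The plan is to uncover a martingale hidden inside the without-replacement sampling process, use Doob's maximal inequality to dispose of the maximum over $k$, and then reduce everything to a single moment generating function (MGF) in which the finite-population correction $(1-\frac{m-1}{N})$ appears. Concretely, write $U_k = \sum_{t=1}^k (X_t - \mu)$ and let $\mathcal{F}_k = \sigma(X_1,\dots,X_k)$. Because $\sum_{i=1}^N (x_i-\mu)=0$, the centered values of the $N-k$ not-yet-sampled points sum to $-U_k$, and since $X_{k+1}$ is uniform over those points, $\Earg{X_{k+1}-\mu \mid \mathcal{F}_k} = -U_k/(N-k)$. A one-line calculation then gives $\Earg{U_{k+1}\mid\mathcal{F}_k}=U_k\,(N-k-1)/(N-k)$, so dividing by $N-k-1$ shows that $Z_k := U_k/(N-k)$ is a martingale with respect to $(\mathcal{F}_k)$. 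The $1/(N-k)$ weighting is precisely what cancels the negative drift that sampling without replacement induces; this identity is the conceptual heart of the argument.

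\emph{Maximal inequality and anchoring.} The event in question is $\{\max_{1\le k\le m} Z_k \ge t\}$ with $t=m\veps/(N-m)$, and at the terminal index $k=m$ this threshold is exactly $\overline{X}_m-\mu\ge\veps$, where $\overline{X}_m = \tfrac{1}{m}\sum_{t=1}^m X_t$ is the mean of the first $m$ draws. For any $\lambda>0$ the process $e^{\lambda Z_k}$ is a nonnegative submartingale, so Doob's maximal inequality gives
\[ \Parg{\max_{1\le k\le m} Z_k \ge t} \le e^{-\lambda t}\,\Earg{e^{\lambda Z_m}}. \]
Reparametrizing via $\theta := \lambda m/(N-m)$ rewrites the right-hand side as $e^{-\theta\veps}\,\Earg{e^{\theta(\overline{X}_m-\mu)}}$, reducing everything to the MGF of the without-replacement sample mean at the single index $m$.

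\emph{The MGF bound (the crux).} I would bound $\Earg{e^{\theta(\overline{X}_m-\mu)}}$ by combining Hoeffding's reduction principle---that for a convex function the without-replacement expectation is dominated by the corresponding with-replacement one---with Hoeffding's lemma applied to each term confined to $[a,b]$. The genuine obstacle is recovering the \emph{exact} factor $(1-\frac{m-1}{N})$: a naive with-replacement estimate only yields $\exp{\theta^2(b-a)^2/(8m)}$ and misses the correction entirely. Obtaining the sharper bound $\Earg{e^{\theta(\overline{X}_m-\mu)}} \le \exp{\theta^2(b-a)^2(1-\tfrac{m-1}{N})/(8m)}$ requires Serfling's conditional-variance bookkeeping, which accounts for how each successive draw shrinks the effective spread of the remaining population; I would invoke his MGF estimate directly here, since it is exactly this step that distinguishes the Hoeffding--Serfling bound from a plain Hoeffding bound.

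\emph{Optimization.} Substituting the Step-3 estimate into the anchored Doob bound gives $\Parg{\max_{1\le k\le m} Z_k \ge t}\le\exp{-\theta\veps + \theta^2(b-a)^2(1-\tfrac{m-1}{N})/(8m)}$, and minimizing the exponent over $\theta>0$ (the optimum is $\theta^\star = 4m\veps/[(b-a)^2(1-\tfrac{m-1}{N})]$) collapses the right-hand side to $\exp{-2m\veps^2/[(1-\tfrac{m-1}{N})(b-a)^2]}=\delta$, matching the claim. The martingale identity, Doob's inequality, and this final optimization are all routine; essentially all of the difficulty is concentrated in the correction-factor MGF estimate, which is where the without-replacement structure must be exploited sharply rather than merely discarded.
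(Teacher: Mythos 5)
The paper never proves \Cref{lem:hs:dd} at all: the lemma is imported verbatim from Serfling~\cite{serfling1974probability} (hence the citation in its header), and the surrounding text only specializes it to $k=m$ and inverts it to derive the error bounder. So there is no in-paper proof to compare against --- the relevant benchmark is Serfling's original argument, and your skeleton is essentially that argument. The steps you spell out are correct: since the unsampled points have centered sum $-U_k$, indeed $\Earg{X_{k+1}-\mu\mid\mathcal{F}_k}=-U_k/(N-k)$, hence $\Earg{U_{k+1}\mid\mathcal{F}_k}=U_k(N-k-1)/(N-k)$ and $Z_k=U_k/(N-k)$ is a martingale; the Doob step, the identification of $\{Z_m\ge m\veps/(N-m)\}$ with $\{\overline{X}_m-\mu\ge\veps\}$, the reparametrization $\theta=\lambda m/(N-m)$, and the final optimization (your $\theta^\star$ and the resulting exponent are both right) all check out.

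The only non-self-contained step is the one you flag yourself: the MGF estimate carrying the factor $(1-\frac{m-1}{N})$ is invoked from Serfling rather than proved, and that estimate is the entire content of the theorem beyond plain Hoeffding. To close it, you do not need any separate ``reduction principle'': apply Hoeffding's lemma~\cite{hoeffding1963probability} conditionally to each martingale increment $Z_{k+1}-Z_k$, whose conditional range is at most $(b-a)/(N-k-1)$, giving
\[ \Earg{e^{\lambda Z_m}} \le \exp{\frac{\lambda^2(b-a)^2}{8}\sum_{k=1}^{m}\frac{1}{(N-k)^2}}, \]
and then bound the sum by $\frac{m(N-m+1)}{N(N-m)^2}=\frac{m(1-\frac{m-1}{N})}{(N-m)^2}$, an elementary inequality that is exactly where the correction factor is born (note this is conditional \emph{range} bookkeeping via Hoeffding's lemma, not variance bookkeeping). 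Under your reparametrization this recovers precisely the MGF bound you quoted, after which the proof completes as you describe. Net assessment: a correct reconstruction of the cited proof with the crux left as a citation --- which still leaves you no worse off than the paper, whose entire ``proof'' is that citation.
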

By focusing on $k=m$ and inverting the probability expression, we may compute
a $1-\delta$ lower confidence bound as
\[ \frac{1}{m}\sum_{t=1}^m X_t -
(b-a)\sqrt{\frac{(1-\frac{m-1}{N})(\log\frac{1}{\delta})}{2m}} \]
and likewise for a upper confidence bound (replacing ``$-$'' with ``$+$''), so that
$(1-\frac{\delta}{2})$ lower and upper confidence bounds may be
combined to yield a $(1-\delta)$ confidence interval (via a union bound).}

\techreport{
\begin{algorithm}[t]
\caption{\techreport{\fontsize{8.799}{8.799}\selectfont} Empirical Bernstein-Serfling err.\ bounder~\protect\cite{bardenet2015concentration}}\label{alg:ebsbounder:dd}
\algrenewcomment[1]{\texttt{/*} #1 \texttt{*/} \\}
\SetKwInOut{Input}{Input}
\SetKwInOut{Output}{Output}
\SetKwRepeat{Do}{do}{while}
\SetKwBlock{Repeat}{Repeat}{}
\SetKwProg{myfunc}{function}{}{}
\SetKwFunction{init}{init\_state}
\SetKwFunction{update}{update\_state}
\SetKwFunction{lbound}{Lbound}
\SetKwFunction{rbound}{Rbound}
{\thesisdel{\scriptsize}
\nonl\ \\
\myfunc{\init{}\quad\ref{item:init}}{
\Return{\upshape{\texttt{\big\{$m$: $0$, $\ghat$: $0$, $M_2$: $0$\big\}}}}\;
}
\nonl\ \\
\myfunc{\update{$S, v$}\quad\ref{item:update}}{
$m' \gets S.m + 1$\;
$\ghat' \gets S.\ghat + (v - S.\ghat) / m'$\;
$M_2' \gets S.M_2 + v^2$\;
\Return{\upshape{\texttt{\big\{$m$: $m'$, $\ghat$: $\ghat'$, $M_2$: $M_2'$\big\}}}}\;
}
\nonl\ \\
\myfunc{\lbound{$S, a, b, N, \delta$}~\upshape{\cite{bardenet2015concentration}\quad\ref{item:lbound}}}{
$\kappa \gets 7/3 + 3/\sqrt{2}$\;
$\rho \gets \indic{S.m\leq N/2}\cdot(1-\frac{S.m-1}{N})$\;
$\rho \gets \rho + \indic{S.m>N/2}\cdot((1-\frac{S.m}{N})\cdot(1+\frac{1}{S.m}))$\;
$\veps \gets \sqrt{S.M_2 / S.m - S.\ghat^2} \cdot \sqrt{\frac{2\rho \cdot \log{(5/\delta)}}{S.m}} + \kappa \cdot (b-a) \cdot \frac{\log{(5/\delta)}}{S.m}$\;
\Return{\upshape{$S.\ghat - \veps$}}\;
}
\nonl\ \\
\myfunc{\rbound{$S, a, b, N, \delta$}\quad\ref{item:rbound}}{
$S.\ghat \gets (a+b) - S.\ghat$\;
\Return{\upshape{$(a+b) - \lbound(S, a, b, N, \delta)$}}\;
}
}
\end{algorithm}
}
\topic{Empirical Bernstein-Serfling-based Bounder}
\iftech
A concentration inequality
for sampling without replacement
given in~\cite{bardenet2015concentration},
the {\em Bernstein-Serfling} inequality
assumes knowledge of both $(b-a)$ and
$\VAR(\dataset) = \sigma^2 = \frac{1}{N}\sum_{x\in\dataset}(x-\AVG(\dataset))^2$.
We defer a statement of the full result\thesisdel{ to the appendix}.
Here we note that inverting
the inequality gives error bounds as
\[ \frac{1}{m}\sum_{t=1}^m X_t \pm \bigo{\sigma/\sqrt{m} + (b-a)/m} \]
if we again ignore the sampling fraction term.
Comparing these error bounds
to those of Hoeffding-Serfling, which has widths of size $\bigo{(b-a)/\sqrt{m}}$
(again ignoring the sampling fraction),
we see that error bounds derived from the Bernstein-Serfling inequality can be
significantly tighter when $\sigma$ is small compared to $(b-a)$.

Knowledge of $\VAR(\dataset)$ typically cannot be
assumed in a setting where $\AVG(\dataset)$ is unknown.
Fortunately, there also exists an
{\em empirical} variant of the Bernstein-Serfling inequality (also given
in~\cite{bardenet2015concentration}, like the non-empirical variant).
The analysis for the empirical Bernstein-Serfling inequality proceeds by
augmenting the analysis for the non-empirical variant
with a concentration inequality relating the estimator
$\widehat{\sigma}^2 = \frac{1}{m}\sum_{t=1}^m(X_t - \bar{X})^2$ to $\VAR(\dataset)$.
We again deferring the full statement\thesisdel{ to the appendix}.
This yields $(1-\delta)$ \techreport{error bounds given by
\[ \frac{1}{m}\sum_{t=1}^m X_t \pm \bigo{\widehat{\sigma}/\sqrt{m} + (b-a)/m} \]}
\papertext{CIs whose widths have size $\bigo{\widehat{\sigma}/\sqrt{m} + (b-a)/m}$.}
Note that these error bounds differ from the those of the
non-empirical variant only in that $\sigma$ is
replaced by $\widehat{\sigma}$ (modulo slightly worse constants hidden
by the asymptotic notation).
Although $\widehat{\sigma}$ is a random quantity,
it concentrates near $\sigma$, so that an error bounder
based on the empirical Bernstein-Serfling bound returns bounds of asymptotically the same
width as those returned by an error bounder based on the non-empirical
variant and with full access to $\sigma^2$, \whp. \Cref{alg:ebsbounder:dd}
gives an implementation of an empirical
Bernstein-Serfling-based error bounder
in terms of our interface from \Cref{sec:state:dd}.
Note that \Cref{alg:ebsbounder:dd} as presented
shows computation of the sample variance in terms
of the second moment $M_2=\sum v^2$ for the sake of exposition;
a real implementation might use a more numerically stable one-pass
algorithm for the
variance~\cite{welford1962note,chan1983algorithms,ling1974comparison}.
\else
The {\em empirical Bernstein-Serfling inequality}
is another concentration inequality for sampling
without replacement, given in~\cite{bardenet2015concentration}.
A corresponding error bounder yields $(1-\delta)$
CIs whose widths have size $\bigo{\widehat{\sigma}/\sqrt{m} + (b-a)/m}$,
where $\widehat{\sigma}^2 = \frac{1}{m}\sum_{t=1}^m(X_t - \bar{X})^2$
is the maximum likelihood estimator for $\VAR(\dataset)$.
Although $\widehat{\sigma}$ is a random quantity, it concentrates
near $\sigma$; thus,
we see that error bounds derived from the Bernstein-Serfling inequality can be
significantly tighter than those derived from Hoeffding-Serfling
(which has widths of size $\bigo{(b-a)/\sqrt{m}}$)
when $\sigma$ is small compared to $(b-a)$.
In fact, we will see shortly (\S\ref{sec:pathologies:dd})
that Bernstein-based bounders do not suffer from one of the
problems that causes Hoeffding-based bounders to be {\em overly-conservative}.
\fi

\techreport{
\begin{algorithm}[t]
\caption{\techreport{\small} Anderson/DKW error bounder~\protect\cite{dvoretzky1956asymptotic,anderson1969confidence,massart1990tight}}\label{alg:dkwbounder:dd}
\algrenewcomment[1]{\texttt{/*} #1 \texttt{*/} \\}
\SetKwInOut{Input}{Input}
\SetKwInOut{Output}{Output}
\SetKwRepeat{Do}{do}{while}
\SetKwBlock{Repeat}{Repeat}{}
\SetKwProg{myfunc}{function}{}{}
\SetKwFunction{init}{init\_state}
\SetKwFunction{update}{update\_state}
\SetKwFunction{lbound}{Lbound}
\SetKwFunction{rbound}{Rbound}
{\thesisdel{\scriptsize}
\nonl\ \\
\myfunc{\init{}\quad\ref{item:init}}{
\Return{$\{\}$}
}
\nonl\ \\
\myfunc{\update{$S, v$}\quad\ref{item:update}}{
\Return{$S\cup \{v\}$}
}
\nonl\ \\
\myfunc{\lbound{$S, a, b, N, \delta$}\quad\ref{item:lbound}}{
$\veps \gets \sqrt{\frac{\log{(1/\delta)}}{2\cdot |S|}}$\;
$\ecdf \gets \text{ empirical CDF based on $S$}$\;
$S' \gets \{x \in S : \ecdf(x) \leq 1-\veps\}$\;
\Return{\upshape{$\veps\cdot a + (1-\veps)\cdot\AVG(S')$}}\;
}
\nonl\ \\
\myfunc{\rbound{$S, a, b, N, \delta$}\quad\ref{item:rbound}}{
\Return{\upshape{$(a+b) - \lbound((a+b)-S, a, b, N, \delta)$}}\;
}
}
\end{algorithm}
}
\topic{Anderson/DKW-based Bounder}
Anderson described a way to compute distribution-free / nonparametric error bounds for the mean
given error bounds for the cumulative distribution function (CDF) in~\cite{anderson1969confidence}.
\thesistechdel{\papertext{We describe the high-level idea behind this error bounder
in our extended technical report~\cite{techreport-dd}.}}
\techreport{Denoting the true and empirical CDF
for some distribution
supported on $[a,b]$ with $F$ and $\ecdf$, respectively,
Anderson showed how to use high-probability bounds $\alpha$ and $\beta$ such that
\[ \ecdf - \alpha \preceq F \preceq \ecdf + \beta \]
to get high-probability bounds on the mean of $F$.
To see how, recall the following identity:
\begin{restatable}{lemma}{cdfmean}
\label{lem:cdfmean:dd}
Consider a CDF $F$ supported on $[a,b]$. Then the mean $\mu$
of the distribution corresponding to $F$ satisfies
\[ \mu = b - \int_a^b F(x)dx \]
\end{restatable}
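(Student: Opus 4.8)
The plan is to avoid manipulating the Stieltjes integral $\int_a^b x\,dF(x)$ directly---where atoms at the endpoints $a$ or $b$ would complicate an integration-by-parts argument---and instead to use the tail (layer-cake) representation of the mean. The key observation is that for any fixed realization $X \in [a,b]$ we have the pointwise identity $X = a + \int_a^b \indic{X > t}\,dt$, since the integrand equals $1$ exactly for $t \in [a, X)$ and $0$ for $t \in [X, b]$, so the inner integral evaluates to $X - a$.

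First I would take expectations of both sides. Because the integrand $\indic{X > t}$ is nonnegative and jointly measurable in the underlying sample point and in $t$, Tonelli's theorem justifies interchanging the expectation with the Lebesgue integral over $t$, yielding
\[ \mu = \Earg{X} = a + \int_a^b \Earg{\indic{X > t}}\,dt = a + \int_a^b \Parg{X > t}\,dt. \]
Next I would rewrite the tail probability in terms of the CDF via $\Parg{X > t} = 1 - F(t)$, so that $\int_a^b \Parg{X > t}\,dt = \int_a^b (1 - F(t))\,dt = (b-a) - \int_a^b F(t)\,dt$.

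Finally, substituting back gives $\mu = a + (b-a) - \int_a^b F(t)\,dt = b - \int_a^b F(t)\,dt$, which is exactly the claimed identity.

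The only step requiring care is the interchange of expectation and integration, which I expect to be the main (though mild) obstacle. It is handled cleanly by Tonelli since the integrand is nonnegative, and no assumption beyond boundedness on $[a,b]$ (which already guarantees $\mu$ is finite) is needed. This tail-based route is preferable to integration by parts precisely because it is insensitive to whether $F$ places atoms at the endpoints $a$ or $b$.
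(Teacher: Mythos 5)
Your proof is correct, but note that the paper never actually proves this lemma: it is introduced with ``recall the following identity'' and used as a known fact, so there is no in-paper argument to compare against. Your layer-cake derivation---writing $X = a + \int_a^b \indic{X > t}\,dt$ pointwise, applying Tonelli to exchange expectation with the integral over $t$, and substituting $\Parg{X > t} = 1 - F(t)$---is the standard self-contained way to establish the identity, and each step is sound: the pointwise identity holds for every $X \in [a,b]$ (including the edge case $X = a$, where both sides equal $a$), the indicator $(\omega, t) \mapsto \indic{X(\omega) > t}$ is jointly measurable and nonnegative so Tonelli applies, and the closing algebra is immediate. Your remark about robustness is also well taken: the alternative route via integration by parts on the Stieltjes integral $\int_a^b x\,dF(x)$ requires care when $F$ has atoms (in particular at $b$, where $F$ may jump to $1$), whereas the tail representation is insensitive to atoms. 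So relative to the paper, your write-up supplies an actual proof of a step the paper takes on faith---a reasonable thing to do, since this identity is precisely what makes Anderson's reduction from CDF confidence bands to confidence bounds on the mean go through.
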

\noindent Thus, given lower and upper bounds $L$ and $U$ on the CDF $F$ that satisfy
$\forall x\in[a,b], L(x) \leq F(x) \leq U(x)$, error bounds around the
mean may be computed as
\[ \left[b - \int_a^b U(x)dx\ ,\quad b-\int_a^b L(x)dx\right] \]
since $L \preceq F \preceq U$ implies %
$-U \preceq -F \preceq -L$.

Anderson used 
the Dvoretzky-Kiefer-Wolfowitz (DKW) inequality~\cite{dvoretzky1956asymptotic} to
compute $\alpha$ and $\beta$.
Informally, DKW states that the empirical CDF $\ecdf$ computed
from \iid samples taken from a distribution with CDF $F$ concentrates around $F$ everywhere:

\begin{lemma}[{DKW Inequality~\protect\cite{dvoretzky1956asymptotic,massart1990tight}}]
\label{lem:dkw:dd}
Let $X_1, \ldots, X_m \iidsim F$, and let $\ecdf$ be the empirical CDF
corresponding to the sample $\{X_i\}$. Then for every $\veps > 0$,
\[ \Parg{\sup_{t\in\text{dom}(F)}|\ecdf(t) - F(t)| > \veps} \leq 2\exp{-2m\veps^2} \]
\end{lemma}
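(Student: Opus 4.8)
The plan is to reduce the general statement to the special case where $F$ is the uniform distribution on $[0,1]$, and then to split the two-sided deviation into two one-sided deviations. For the reduction I would apply the probability integral transform: when $F$ is continuous, the variables $U_i = F(X_i)$ are \iid $\Unif[0,1]$, and the process $\sup_t |\ecdf(t) - F(t)|$ is distributionally identical to $\sup_{s\in[0,1]} |G_m(s) - s|$, where $G_m$ is the empirical CDF of the $U_i$. Discontinuities of $F$ only collapse probability mass onto atoms, which (via a coupling through the generalized inverse) can only decrease the sup deviation, so the continuous case is the worst case. This step is routine but must be handled carefully to account for ties and jumps.

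Having reduced to the uniform case, I would bound each tail separately, using a union bound to write
\[ \Parg{\sup_{s} |G_m(s) - s| > \veps} \leq \Parg{\sup_s (G_m(s) - s) > \veps} + \Parg{\sup_s (s - G_m(s)) > \veps}. \]
By the symmetry $U_i \mapsto 1 - U_i$ the two terms are equal, so it suffices to prove the one-sided bound $\Parg{\sup_s (G_m(s) - s) > \veps} \leq \exp{-2m\veps^2}$; the leading factor of $2$ in the statement is exactly this union bound.

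For a quick but loose version of the one-sided bound, I would apply McDiarmid's bounded-differences inequality to $Z = \sup_s (G_m(s) - s)$. Replacing a single sample moves $G_m$ by at most $1/m$ at every point, so $Z$ has bounded differences $1/m$, yielding $\Parg{Z - \Earg{Z} > t} \leq \exp{-2mt^2}$. The trouble is that this concentrates $Z$ around $\Earg{Z}$ rather than around $0$, and $\Earg{Z}$ is of order $1/\sqrt{m}$; absorbing that offset costs a constant factor and spoils the sharp exponent $2m\veps^2$.

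The main obstacle is therefore obtaining the \emph{tight} constants --- both the leading factor $2$ and the exponent $2$ inside $\exp{-2m\veps^2}$ --- which McDiarmid-type arguments cannot deliver. The sharp one-sided bound is precisely Massart's theorem, whose proof is a delicate combinatorial analysis of the uniform empirical process, controlling (in reflection-principle style) the first time the path $G_m(s) - s$ crosses the level $\veps$. I would not reproduce this argument; instead I would invoke~\cite{massart1990tight} for the sharp one-sided inequality and assemble the final statement from the reduction and union-bound steps above, noting that the original~\cite{dvoretzky1956asymptotic} already yields the same form with an unspecified leading constant $C$, which would suffice for our downstream error bounder up to that constant.
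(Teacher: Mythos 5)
Your proposal is sound, but note that the paper does not actually prove this lemma: it is imported verbatim from the literature, with the citations to Dvoretzky--Kiefer--Wolfowitz (for the original inequality) and to Massart (for the tight constant) doing all of the work. Your outline supplies the standard scaffolding around that citation --- the probability-integral-transform reduction to the uniform case (with the correct observation that atoms of $F$ only decrease the sup-deviation), the union bound splitting the two-sided deviation into two one-sided ones, and the symmetry $U_i \mapsto 1-U_i$ identifying the two tails --- and then, exactly like the paper, defers the genuinely hard combinatorial content (the sharp one-sided bound) to Massart. Two small refinements are worth making. First, Massart's sharp one-sided bound $\Parg{\sup_s (G_m(s)-s) > \veps} \leq \exp{-2m\veps^2}$ holds only under the side condition $m\veps^2 \geq \tfrac{1}{2}\ln 2$ (equivalently, when the right-hand side is at most $1/2$); for smaller $\veps$ your union-bound assembly should simply observe that $2\exp{-2m\veps^2} > 1$, so the two-sided statement is vacuous in that regime and nothing needs to be proved. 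Second, the McDiarmid digression, while correctly diagnosed as too weak to recover the constants $2$ and $2$, plays no role in any step and could be dropped. With those adjustments, your argument is correct and is in fact a more self-contained treatment than the paper's, which is a bare citation; the essential point of agreement is that neither you nor the paper reproves Massart's reflection-style analysis of the uniform empirical process.
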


The DKW inequality provides a method to
obtain the values of $\alpha$ and $\beta$, since it implies that
\[ \ecdf - \sqrt{\frac{\log{2/\delta}}{2m}} \preceq F \preceq \ecdf + \sqrt{\frac{\log{2/\delta}}{2m}}\]
with probability greater than $1-\delta$.
At the time~\cite{anderson1969confidence} was published, however,
the constant in front of the DKW inequality had not yet been proved by Massart~\cite{massart1990tight},
so it appears that Anderson computed $\alpha$ and $\beta$ using a lookup table.

Although \Cref{lem:dkw:dd} as stated applies for sampling with replacement from an infinite population,
please see \Cref{sec:dkwnoreplace:dd}
for a proof that DKW still holds when $X_1, \ldots, X_m$ are drawn without
replacement from a finite population of size $N$, for any $N > 0$, stated
as the following theorem:
\begin{restatable}{theorem}{dkwnoreplace}
\label{THM:DKWNOREPLACE:DD} %
For any $N > 0$, the DKW inequality applies for sampling without replacement
from a finite dataset of size $N$.
\end{restatable}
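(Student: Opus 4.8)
The plan is to reduce the without-replacement, finite-population version of \Cref{lem:dkw:dd} to a tail bound on the classical two-sample Kolmogorov--Smirnov statistic. First I would fix notation: let the dataset be $\dataset = \{x_1, \ldots, x_N\}$ with population CDF $F_N(t) = \frac{1}{N}|\{i : x_i \le t\}|$, and let $\ecdf$ be the empirical CDF of $m$ points drawn without replacement from $\dataset$. Since both $F_N$ and $\ecdf$ are step functions that only jump at values in $\dataset$, and since the statistic is invariant under a monotone relabeling of the values, I may assume the values are distinct and that $\dataset = \{1, \ldots, N\}$, so that a without-replacement sample corresponds to a uniformly random size-$m$ subset $A \subseteq \{1, \ldots, N\}$ and $F_N(k) = k/N$ (ties only refine the grid and can only shrink the deviation, so this is the worst case).

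The key step is an exact algebraic identity. For each threshold $k$, write $a(k) = |A \cap \{1, \ldots, k\}|$ for the number of sampled values $\le k$ and $u(k) = k - a(k)$ for the number of unsampled values $\le k$. A short calculation gives
\[ \ecdf(k) - F_N(k) \;=\; \frac{a(k)}{m} - \frac{a(k) + u(k)}{N} \;=\; \frac{N-m}{N}\left(\frac{a(k)}{m} - \frac{u(k)}{N-m}\right). \]
Taking the supremum over $k$, the right-hand side is exactly $\frac{N-m}{N}$ times the two-sample Kolmogorov--Smirnov statistic $D_{m,\,N-m}$ comparing the empirical CDF of the sample (size $m$) against that of its complement (size $N-m$). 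Because the sample is a uniformly random subset, the joint law of $(a(k), u(k))_k$ is precisely the permutation null distribution for which two-sample KS bounds are designed.

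It then remains to invoke a two-sample DKW-type tail bound of the form
\[ \Parg{D_{n_1, n_2} > \lambda} \;\le\; 2\exp{-2\,\tfrac{n_1 n_2}{n_1 + n_2}\,\lambda^2}, \]
with effective sample size $\frac{n_1 n_2}{n_1+n_2}$. Substituting $n_1 = m$, $n_2 = N-m$, and $\lambda = \frac{N}{N-m}\veps$ (the threshold forced by the identity above), the exponent becomes $-2m\veps^2 \cdot \frac{N}{N-m} \le -2m\veps^2$, so that $\Parg{\sup_t|\ecdf(t) - F_N(t)| > \veps} \le 2\exp{-2m\veps^2}$, which is exactly the bound of \Cref{lem:dkw:dd}. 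The factor $\frac{N}{N-m} \ge 1$ even shows that without-replacement sampling is strictly more concentrated, consistent with intuition.

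The main obstacle is establishing the two-sample tail bound with the clean constant $2$ and the harmonic effective sample size non-asymptotically; the asymptotics are standard, since $\sqrt{n_1 n_2/(n_1+n_2)}\,D_{n_1,n_2}$ converges to the supremum of a Brownian bridge, whose tail is $2\exp{-2x^2}$. I expect to obtain the finite-sample version by the combinatorial device underlying Massart's one-sample proof: count the lattice paths of the trajectory $k \mapsto a(k)$ that ever deviate from the diagonal by more than $\lambda$, bound them via a reflection/ballot argument over uniformly random subsets, and reduce the two-sided statistic to the one-sided one at the cost of the factor $2$. An alternative, more conceptual route I would keep in reserve is to argue that the empirical process under sampling without replacement is dominated in the convex order by the with-replacement process, so that any exponential-moment bound underlying the with-replacement DKW transfers; the difficulty there is that the supremum deviation is not itself a convex functional of the process, so this route requires more care than the two-sample reduction.
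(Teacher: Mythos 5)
Your reduction is correct as far as it goes, and it is genuinely different from the paper's own argument: the paper, after the same Wald--Wolfowitz reduction to the uniform population on $\{1,\ldots,N\}$, proves a coupling/monotonicity statement (the deviation probability can only increase when the population is doubled from $N$ to $2N$) and then passes to the limit $N'\to\infty$, where only the \emph{classical one-sample} DKW--Massart inequality is needed. Your identity
\[ \ecdf(k) - F_N(k) \;=\; \frac{N-m}{N}\left(\frac{a(k)}{m} - \frac{u(k)}{N-m}\right) \]
is exact, and you are right that the joint law of $(a(k),u(k))_k$ is precisely the permutation null of the two-sample Kolmogorov--Smirnov statistic, so that
\[ \Parg{\sup_t|\ecdf(t)-F_N(t)| > \veps} \;=\; \Parg{D_{m,\,N-m} > \tfrac{N}{N-m}\,\veps}. \]

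The genuine gap is the lemma you then invoke: $\Parg{D_{n_1,n_2} > \lambda} \leq 2\exp{-2\tfrac{n_1n_2}{n_1+n_2}\lambda^2}$ for \emph{all} $(n_1,n_2)$ is not an available result. This two-sample analogue of DKW--Massart has been proved with the sharp constant only for equal sample sizes (Wei and Dudley, Statist.\ Probab.\ Lett.\ 2012); for unequal sizes only weaker forms (larger prefactors or degraded exponents) are known, and the sharp general statement is, to our knowledge, open --- so your plan defers the entire difficulty to a lemma at least as hard as the theorem itself. The slack factor $\tfrac{N}{N-m}\geq 1$ does not rescue you: it tends to $1$ exactly when $m \ll N$ (the small-sampling-fraction regime that matters here), which forces $n_1=m$ and $n_2=N-m$ to be maximally lopsided --- precisely the case where the Massart-style reflection/ballot device you propose is known to break, since reflection on an $m\times(N-m)$ lattice does not preserve path counts when the sides are unequal (this is also why exact null distributions for the unequal-size two-sample KS statistic are so unwieldy). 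Nor does the obvious fallback work: representing both samples as iid from a common $F$ and writing $D_{m,n}\leq D_m + D_n$ with one-sample DKW applied to each term yields at best $4\exp{-2\lambda^2 mn/(\sqrt{m}+\sqrt{n})^2}$, which is strictly worse in both the prefactor and the exponent and does not imply the required bound. Your reserve route via convex-order domination fails for the reason you yourself flag. To salvage the two-sample reduction you would need to actually prove the lopsided-case tail bound; absent that, the paper's coupling argument, which leans only on the known one-sample inequality, is the one that closes.
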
 
\noindent The procedure just described
for computing error bounds around the mean of a distribution given \iid samples thus also
works for computing error bounds around $\AVG(\dataset)$ given without-replacement samples
from the finite dataset $\dataset$. It is presented in terms of our interface from
\Cref{sec:state:dd} in \Cref{alg:dkwbounder:dd}.}

\topic{Applications in Prior DB Literature}
To our knowledge, Hoeffding and Hoeffding-Serfling-based bounders are the only
\ssi bounders that have
seen extensive use in the DB literature for computing
error bounds for \AVG~\cite{Kim2015,alabi2016pfunk,Hellerstein1997,haas1996hoeffding}.
We are aware of one incorrect application of the empirical Bernstein-Serfling
inequality~\cite{chen2016effective} (incorrect because the procedure
given in~\cite{chen2016effective} continuously recomputes confidence
$(1-\delta)$ intervals as more samples are taken, so that the overall procedure is no longer guaranteed
to fail with probability at most $\delta$). Overall, we consider it somewhat
surprising that error bounders derived from
the empirical Bernstein-Serfling inequality from~\cite{bardenet2015concentration}
have not seen more widespread usage, as they are nearly as simple to compute as
those derived from the Hoeffding-Serfling inequality and typically yield error bounds that are
much tighter.

\subsection{Error Bounder Pathologies}
\label{sec:pathologies:dd}
\begin{table}[t]
\begin{densecenter}
\begin{tabular}{|c|c|c|c|c|}
\hline
\rowcolor[HTML]{C0C0C0} 
\textbf{Error Bounder}              & \pma                 & \pos                 & Sampling         & Memory     \\ \hline
Hoeffding(-Serfling)                & \checkmark           & \checkmark           & \textsf{R* (NR)} & $\bigo{1}$ \\ \hline
Berstein(-Serfling)                 & \cellcolor{green!25} & \checkmark           & \textsf{R* (NR)} & $\bigo{1}$ \\ \hline
Anderson/DKW                        & \checkmark           & \cellcolor{green!25} & \textsf{R, NR}   & $\bigo{m}$ \\ \hline
\end{tabular}
\caption{Summary of \techsub{error bounder properties}{properties exhibited by various error bounders}.
\textsf{R} = sampling with replacement, \textsf{NR} = without. A \textsf{*}
indicates that the non-Serfling variant also holds for \textsf{NR} sampling.}
\label{tab:pathologies:dd}
\end{densecenter}
\end{table}

As a major technical contribution of this work, we identify
two problems that cause \ssi error bounders to be
{\em too} conservative. These pathologies, which we refer to
as {\em pessimistic mass allocation (\pma)} and {\em phantom outlier sensitivity (\pos)},
are based on simple intuitions about how
error bounders should behave: namely, they should return tighter
bounds when observing samples with fewer extreme values, and error lower bounds
(respectively error upper bounds) should only be looser due to potential
large values (\resp small values) if such values are actually observed.

\subsubsection{Pessimistic Mass Allocation}

\noindent \pma, defined as follows, captures the intuition that error bounders
should be sensitive to the observed sample values:
\begin{definition}[\pma]
\label{def:pma:dd}
An error bounding procedure $P$ exhibits {\em pessimistic mass allocation (\pma)}
if there exists a dataset $\dataset$ bounded in $[a, b]$,
a value $a'$ with $a < a' < b$,
and a set $S \subseteq \dataset$ with values in $[a, a')$ such that,
for $S' = \{\max{(x, a')} : x \in S\}$,
$P$ returns a confidence interval of the same width for both $S$ and $S'$.
$P$ likewise exhibits \pma if there exists some $b'$ with $a < b' < b$
and an $S$ with values in $(b', b]$ such that,
for $S' = \{\min{(x, b')} : x \in S\}$,
$P$ returns a confidence interval of the same width for both $S$ and $S'$.
\end{definition}
That is, an error bounder $P$ has \pma if it is possible to
replace the smallest (largest) elements in a sample with
something larger (\resp smaller) without shrinking the width of
$P$'s returned confidence interval.
Intuitively, $P$ is
overly-pessimistic about how mass in the \techreport{underlying} distribution
from which it is sampling is allocated, despite contrary evidence
observed in the sample.

\subsubsection{Phantom Outlier Sensitivity}

\noindent \pos captures the intuition that unobserved
extreme values should not affect both lower and upper error bounds:
\begin{definition}[\pos]
\label{def:pos:dd}
An error bounding procedure $P$ exhibits {\em phantom outlier sensitivity (\pos)}
if, for data falling in $[a, b]$, $P$'s returned confidence lower bound $g_\ell$
depends on the value of $b$, and similarly if the $g_r$ returned by $P$ depends on $a$.
\end{definition}
To understand \pos intuitively, consider the case of computing a confidence lower
bound. Given a sample $S$, %
the worse $P$ ``believes'' $S$ could be shifted (on average) toward larger values as compared to $\dataset$, %
the smaller of a confidence lower bound it should return. In what ways
could $S$ be shifted toward higher values? %
One possibility is if small elements
are underrepresented in $S$. The other possibility, and the one we
are interested in, is if large elements are overrepresented in $S$. For this reason,
a confidence lower bound should only be affected by datapoints near the upper
range bound $b$ {\em if it actually observes them}, and the appearance of $b$
in the computation of a confidence lower bound is a potential source of unnecessary
conservativeness.

\subsubsection{Examples of \pma and \pos in Error Bounders}

\techreport{\begin{figure}[t]
\begin{densecenter}
\includegraphics[width=\thesissub{.45\textwidth}{.85\linewidth}]{\figs/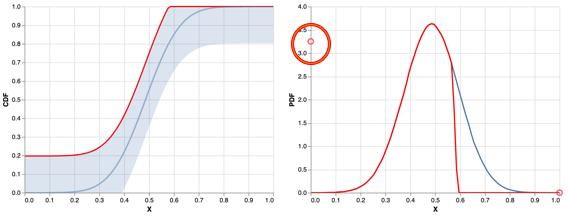}
\caption{Error bounds from the DKW inequality exhibit pessimistic mass allocation.
\smacke{Fix labels, bigger fonts, increased resolution}}
\label{fig:dkwpma:dd}
\end{densecenter}
\end{figure}}

In this section, we give examples of \pma and \pos in the context
of previously-discussed error bounders.
\Cref{tab:pathologies:dd} summarizes pathologies exhibited by
various \ssi error bounders.

\topic{Hoeffding-based} Hoeffding-based
error bounders suffer from both \pma and \pos. They have \pma
since their returned CIs have widths depending only on the range of the data, $(b-a)$,
and the number of samples. As such, replacing values in the sample
with larger or smaller values does not affect the width of the returned error bounds.
Such bounders also have \pos since
they have symmetric error, with both ends of the confidence interval
depending on both \techreport{range bounds} $a$ and $b$.

\topic{Berstein-based} Bernstein-based error bounders
do {\em not} suffer from \pma. To see this, notice that
increasing the smallest values in some sample will also reduce
the sample variance, affecting the width of the returned confidence interval,
and likewise for decreasing the largest values in the sample.
These bounders do, however, suffer from \pos. Like Hoeffding-based
bounders, they return confidence intervals with symmetric error, so that each end
of the confidence interval is affected by both ends of the data
range $a$ and $b$.

We will see in our experiments that these bounds can yield
significant speedups as compared with Hoeffding-based bounds,
when used to facilitate early termination of approximate queries.

\topic{Anderson/DKW-based} Anderson/DKW-based error bounders
are interesting in that they suffer from \pma, but not \pos.
\thesistechdel{\papertext{Please see our extended technical
report~\cite{techreport-dd} for a discussion.}}
\techreport{Consider the $\veps$ mass unaccounted for when computing a
confidence lower bound using an Anderson/DKW-based bounder. As shown
in \Cref{fig:dkwpma:dd}, it all goes toward to lower range bound, $a$,
which is sufficient for \pma. On the other hand, where does it come from?
It comes from the $\veps$-fraction largest observed points. This does not
depend at all on the value of the upper range bound $b$, indicating
that the confidence lower bound does not suffer from \pos. Symmetric statements
hold for the confidence upper bound, of course.}

\techreport{\smacke{Add others}}

\subsection{Problem Statement}
\label{sec:prob-statement:dd}
We are now ready to give a formal problem statement.
\begin{problem}
\label{prob:errorbounds:dd}
Design an \ssi error bounder that,
given a without-replacement sample from any $\dataset$
with elements from $[a,b] \subseteq \reals$,
suffers from neither \pma nor \pos when computing
$(1-\delta)$ error bounds for $\AVG(\dataset)$, for any $0<\delta<1$. %
\end{problem}
\vspace{-0.5em}
\techreport{\smacke{Mention unknown size since we might have filters of unknown selectivity?}}
Our solution to %
\Cref{prob:errorbounds:dd}
is given in \Cref{sec:algorithm:dd} and relies
on a technique we call {\em range trimming} in order to systematically eliminate \pos
from any range-based error bounder.

\techreport{Although the solution as presented in \Cref{sec:algorithm:dd} additionally
assumes knowledge of the size of $\dataset$, \Cref{sec:system:dd} shows how our real-world
implementation circumvents this limitation.}

\section{Fixing Bounder Pathologies}
\label{sec:algorithm:dd}
From our discussion in \Cref{sec:pathologies:dd}, we see that
there do exist error bounders with either \pma or \pos, but not both.
We first argue that error bounders without \pos must be {\em asymmetric};
that is, they cannot compute bounds of the form $\hat{g} \pm \veps$, where
the same $\veps$ is both added and subtracted to the sample average $\hat{g}$
in order to compute bounds. Next, we describe how to use a process we call
{\em range trimming} to convert any symmetric, ranged-based error bounder
to an asymmetric one without \pos.

\subsection{Decoupling Lower and Upper Bounds}

Excepting an error bounder based on DKW, all of the error bounders
surveyed suffer from \pos. This is because all the other error bounders
are based on concentration inequalities with {\em symmetric error} ---
that is, they return confidence intervals $[g_\ell, g_r]$ of the form
$[\hat{g}-\veps, \hat{g}+\veps]$. At a high level, it is precisely this
symmetry that causes \pos. Although a confidence lower bound should not
have any dependency on $b$, it is intuitively unavoidable that it has
some dependency on $a$. Reiterating, an estimate $\hat{g}$ could be
an overestimate because of (i) not enough observed values near $a$, or
(ii) too many observed values near $b$. A similar statement holds regarding
confidence upper bounds, with the roles of $a$ and $b$ reversed.

We hypothesize that it is impossible for any confidence lower bound (\resp upper bound)
to completely eliminate the dependency on $a$ (\resp $b$), since it is always possible
that the confidence bounding procedure got ``unlucky'' and operated on a sample in which
values near $a$ (\resp $b$) were underrepresented. Taking this hypothesis as given,
this means that any symmetric confidence bounding procedure that returns bounds of
the form $[\hat{g} - \veps, \hat{g} + \veps]$ will have $\veps$ dependent on both $a$ and $b$ ---
that is, any symmetric confidence bounding procedure will have \pos.
As such, the first step to eliminating \pos from range-based confidence bounders is
to accept asymmetric error as a hard requirement: that is, we must consider confidence bounding
procedures that return bounds of the form $[\hat{g} - \veps_\ell, \hat{g} + \veps_r]$ for
which $\veps_\ell$ and $\veps_r$ are not necessarily equal.

\begin{figure}[t]
\begin{densecenter}
\includegraphics[width=\thesissub{.45\textwidth}{.75\linewidth}]{\figs/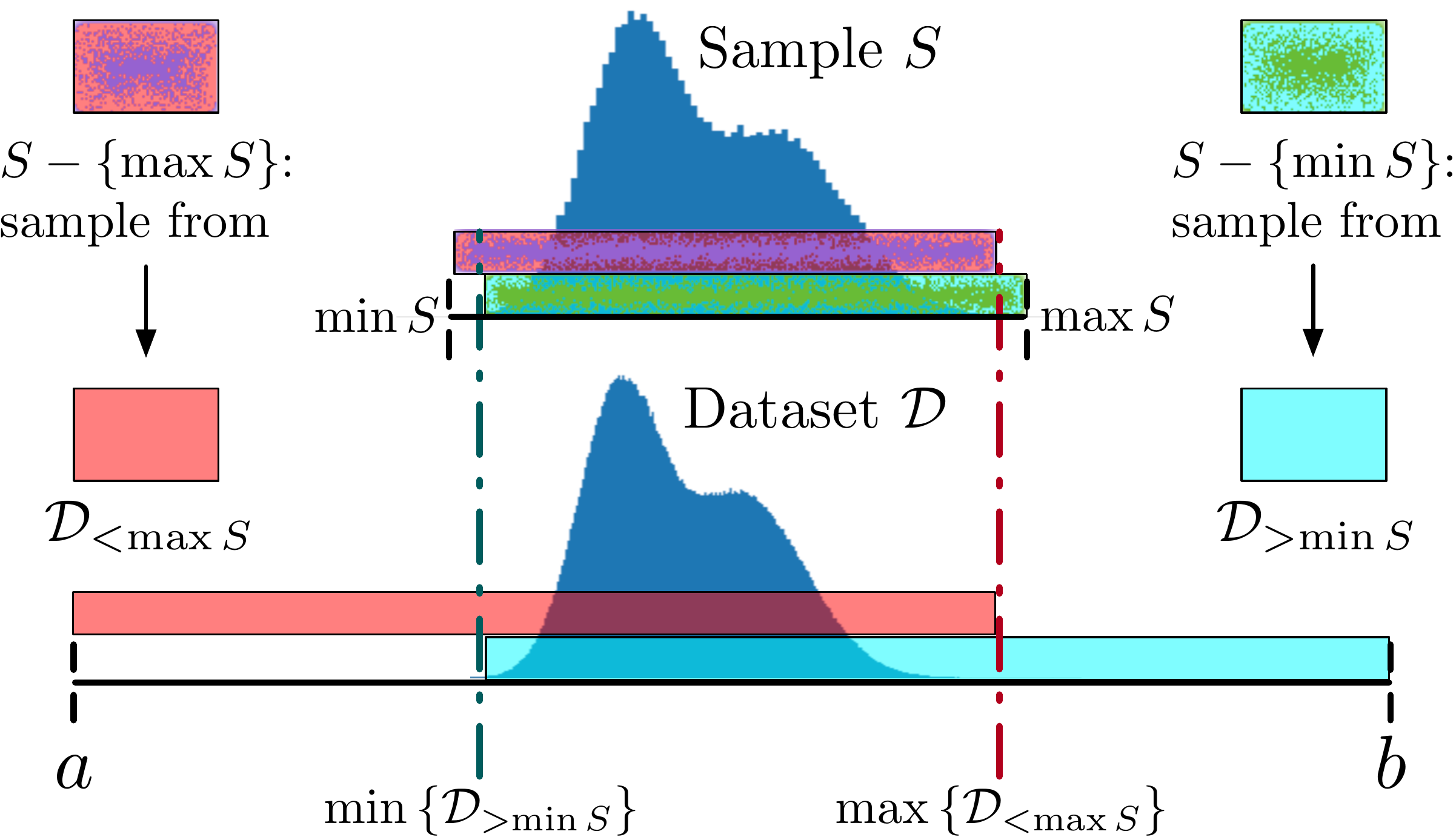}
\caption{Range trimming eliminates \pos for range-based error bounders.}
\label{fig:rangetrimming:dd}
\end{densecenter}
\end{figure}

\begin{algorithm}[t]
\caption{The \rangetrim meta-algorithm}\label{alg:rangetrim:dd}
\algrenewcomment[1]{\texttt{/*} #1 \texttt{*/} \\}
\SetKwInOut{Input}{Input}
\SetKwInOut{Output}{Output}
\SetKwRepeat{Do}{do}{while}
\SetKwBlock{Repeat}{Repeat}{}
\SetKwProg{myfunc}{function}{}{}
\SetKwFunction{sample}{sample\_without\_replacement}
\SetKwFunction{init}{init\_state}
\SetKwFunction{update}{update\_state}
\SetKwFunction{lbound}{Lbound}
\SetKwFunction{rbound}{Rbound}
{\thesisdel{\scriptsize}
\Input{Dataset $\dataset$ of $N$ values in $[a,b]$, error prob.\ $\delta$, sample size $m$}
\Output{Error bounds that fail to enclose $\AVG(\dataset)$ with probability $<\delta$}
\nonl\ \\
$S_\ell \gets \init{}$\;
$S_r \gets \init{}$\;
$a' \gets \sample{$\dataset$}$\;
$b' \gets a'$\;
\For{$\iter=1$ \KwTo $m-1$}{
	$v \gets \sample{$\dataset$}$\;
	$S_\ell \gets \update{$S_\ell, \min(v, b')$}$\;
	$S_r \gets \update{$S_r, \max(v, a')$}$\;
	$a' \gets \min(a', v)$\;
	$b' \gets \max(b', v)$\;
}
\Return{$\big[$\lbound{$S_\ell, a, b', N-1, \frac{\delta}{2}$},
\rbound{$S_r, a', b, N-1, \frac{\delta}{2}$}$\big]$}\;
}
\end{algorithm}

\subsection{Range Trimming}
Our approach to deriving an error bounder with neither \pma nor
\pos is to start with a symmetric bounder without \pma\ 
(such as \techsub{a Bernstein-based bounder}{that of \Cref{alg:ebsbounder:dd}}) and ``asymmetrize'' it
so that \lbound becomes independent of $b$, and \rbound becomes
independent of $a$, thereby eliminating \pos. The result, given
in \Cref{alg:rangetrim:dd},
\iftech
wraps any existing range-based error bounder.
\else
composes any range-based error bounder that exposes the following interface:

\fi

Besides the memory required to maintain state for the left and right
error bounders, $S_\ell$ and $S_r$, \Cref{alg:rangetrim:dd}
requires $\bigo{1}$
extra memory to maintain the \MIN and \MAX element seen so far
(which replace $a$ and $b$ when computing \rbound and \lbound,
respectively).

When $\dataset$ contains unique elements, \Cref{alg:rangetrim:dd} conceptually
performs the following steps:
\begin{denseenum}
\item Sample $S$ without replacement from $\dataset$.
\item Use \lbound to compute a $1-\frac{\delta}{2}$ lower confidence bound
      for $\AVG(\dslt{\max{S}})$, with $S-\{\max{S}\}$ as the sample,
      and with $a$ and $\max{S}$ in place of the normal range bounds $a$ and $b$,
      respectively.
\item Use \rbound to compute a $1-\frac{\delta}{2}$ upper confidence bound
      for $\AVG(\dsgt{\min{S}})$, with $S-\{\min{S}\}$ as the sample,
      and with $\min{S}$ substituted for the range bound lower bound $a$.
\end{denseenum}
Note that we use $\dslt{x}$ and $\dsgt{x}$ as shorthand for $\dataset \cap (-\infty, x)$
and $\dataset \cap (x, \infty)$, respectively.
The primary difference between these high-level steps and the pseudocode presented
in \Cref{alg:rangetrim:dd} is that \Cref{alg:rangetrim:dd} maintains $\min{S}$ and $\max{S}$
in an online, streaming fashion
(so that the sample $S$ does not need to be stored in memory), and
that the confidence interval returned by \Cref{alg:rangetrim:dd} is valid
even when $\dataset$ contains duplicates (although the returned confidence bounds
will bound the \AVG of sets that differ slightly from $\dslt{\max{S}}$ and $\dsgt{\min{S}}$).
That said, we restrict our discussion and analysis to the case where $\dataset$ contains
unique elements, for simplicity.

Correctness of \Cref{alg:rangetrim:dd} crucially depends on the fact that,
conditioned on the value of $\max{S}$ (and for any such value), the
remaining elements in $S$ (namely $S-\{\max{S}\}$) constitute a uniform
without-replacement sample from $\dslt{\max{S}}$, with a symmetric
statement for $\min{S}$ and $S-\{\min{S}\}$. At a high level, this
means that a confidence lower bound computed over $S-\{\max{S}\}$
is a valid confidence lower bound for $\AVG(\dslt{\max{S}})$,
and since $\AVG(\dslt{\max{S}}) \leq \AVG(\dataset)$, it is also
a valid confidence lower bound for $\AVG(\dataset)$, with symmetric
statements holding for the confidence upper bound, $S-\{\min{S}\}$,
and $\dsgt{\min{S}}$. These core ideas are illustrated in \Cref{fig:rangetrimming:dd}.

\subsection{Proof of Correctness}
In this section, we prove correctness of \Cref{alg:rangetrim:dd} (that is,
that it returns intervals that fail to enclose $\AVG(\dataset)$ with
probability less than $\delta$).
For the sake of simplicity, our analysis assumes that $\dataset$
contains no duplicate values,
although
\iftech
we show how to remove this assumption at the end of this section.
\else
it is possible to remove this assumption.
\fi
To begin, we first prove a crucial lemma about the sampling distribution
of $S-\{\max{S}\}$, given that $S$ was sampled uniformly without-replacement
from $\dataset$.

\begin{lemma}
\label{lem:rangetrim:dd}
Given a dataset $\dataset$ of $N$ unique real values in $[a, b]$ and a uniform without-replacement
sample $S$ of $m$ values from $\dataset$, if we denote $b'=\max{S}$,
the set $S - \{b'\}$ takes the distribution
of a uniform without-replacement sample from $\dslt{b'} = \dataset \cap [a,b')$,
for any applicable value of $b'\in\dataset$.
\end{lemma}
\begin{proof}
Because $S$ is drawn uniformly without-replacement from $\dataset$, any particular instance
satisfies
\[
\Psubarg{\dataset}{S=s} = \binom{|\dataset|}{|s|}^{-1}\indic{s\subseteq\dataset} = \binom{N}{m}^{-1}\indic{s\subseteq\dataset}
\]
where we use the notation $\Psubarg{\dataset}{S=s}$ to denote the probability that
$s$ was drawn uniformly without-replacement from $\dataset$, and $\indic{\cdot}$
denotes the indicator function.
We need to show that, for any $b' \in \dataset$,
\[
\Psubarg{\dataset}{S=s|\max{S}=b'} = \Psubarg{\dslt{b'}}{S=s-\{b'\}}\indic{\max(s)=b'}
\]
First, letting $s'$ be any set such that $|s'|=m-1$, we have that
\[
\Psubarg{\dslt{b'}}{S=s'} = \binom{|\dslt{b'}|}{m-1}^{-1}\indic{s'\subseteq\dslt{b'}}
\]
Next, consider $\Psubarg{\dataset}{S=s|\max{S}=b'}$. Bayes' rule gives that
\[
\Psubarg{\dataset}{S=s|\max{S}=b'} = \frac{\Psubarg{\dataset}{S=s \wedge \max{S}=b'}}{\Psubarg{\dataset}{\max{S}=b'}}
\]
We have $\Psubarg{\dataset}{S=s \wedge \max{S}=b'} = \Psubarg{\dataset}{S=s}\indic{\max(s)=b'}$
which is a known quantity, so the key is to compute the denominator
$\Psubarg{\dataset}{\max{S}=b'}$. Using the assumption that $\dataset$ contains unique elements,
we may proceed by analogy with binary strings. The rank of $b'$ within $\dataset$ (starting from
the smallest element) is $1+|\dslt{b'}|$, so we need to compute the number of binary strings
of length $N$ containing $m$ 1's and $(N-m)$ 0's such that position $1+|\dslt{b'}|$ has a 1,
and the remaining $(m-1)$ 1's are all at positions less than $1+|\dslt{b'}|$. This is precisely
the same as the number of binary strings of length $|\dslt{b'}|$ with $(m-1)$ 1's and
$(|\dslt{b'}|-m+1)$ 0's. Putting everything together, %
\begin{align*}
&\Psubarg{\dataset}{S=s|\max{S}=b'} = \frac{\Psubarg{\dataset}{S=s}\indic{\max(s)=b'}}{\Psubarg{\dataset}{\max{S}=b'}} \\
&\quad\quad\quad=\quad \frac{\binom{N}{m}^{-1}\indic{s\subseteq\dataset\wedge\max(s)=b'}}{\binom{|\dslt{b'}|}{m-1} / \binom{N}{m}} \\
&\quad\quad\quad=\quad\binom{|\dslt{b'}|}{m-1}^{-1}\indic{s\subseteq\dataset\wedge\max(s)=b'} \\
&\quad\quad\quad=\quad\binom{|\dslt{b'}|}{m-1}^{-1}\indic{s-\{b'\}\subseteq\dslt{b'}\wedge\max(s)=b'} \\
&\quad\quad\quad=\quad\Psubarg{\dslt{b'}}{S=s-\{b'\}}\indic{\max(s)=b'}
\end{align*}
which is precisely what we wanted to show.
\end{proof}

\topic{Wrinkle in \Cref{lem:rangetrim:dd} and Fix}
The proof of \Cref{lem:rangetrim:dd} assumes unique values;
\thesistechdel{\papertext{please see our extended technical report~\cite{techreport-dd}
for how to remove this assumption without loss of generality.}}
\thesisortech{we show here how to remove this assumption without loss of generality.
The uniqueness assumption
as used is necessary only to ensure that elements of $\dataset$ are
{\em totally ordered} under some relation ``$\prec$''
(with ``$\prec$'' $\equiv$ ``$<$'' in the proof). To fix, we can simply
augment every $v\in\dataset$ with an additional unique {\em label}
(where the set of labels are totally ordered) such
that item $v$ becomes $v' \equiv (v, v_i)$. Then, define ``$\prec$''
as a relation such that $v' \prec w'$ if $v < w$, or $v=w$ and $v_i < w_i$.
In this way, any $v', w' \in \dataset'$ satisfy exactly one of $v' \prec w'$
or $w' \prec v'$, and the proof of \Cref{lem:rangetrim:dd} goes through,
replacing $\dataset$ with $\dataset'$ and
``$<$'' with ``$\prec$'' where appropriate.}

We next give a symmetric statement for $S-\{\min{S}\}$ and $\dsgt{\min{S}}$
as the below corollary:
\begin{corollary}
\label{cor:rangetrim:dd}
Given a dataset $\dataset$ of $N$ unique real values in $[a, b]$ and a uniform without-replacement
sample $S$ of $m$ values from $\dataset$ such that $\min{S}=a'$, the set $S - \{a'\}$ is a uniform
without-replacement sample from $\dsgt{a'} = \dataset \cap (a',b]$,
for any applicable value of $a'\in\dataset$.
\end{corollary}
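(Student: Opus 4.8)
The plan is to obtain \Cref{cor:rangetrim:dd} from \Cref{lem:rangetrim:dd} essentially for free, by exploiting the symmetry between $\min$ and $\max$ rather than repeating the combinatorial counting argument. The key observation is that the reflection map $\phi(x) = a + b - x$ is an order-reversing involution of $[a, b]$ onto itself, under which minima and maxima swap roles. So the strategy is to push the problem through $\phi$, apply the lemma in the reflected world where we are now conditioning on a maximum, and pull the conclusion back.

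First I would define $\dataset' = \phi(\dataset) = \{a+b-x : x \in \dataset\}$, which is again a set of $N$ unique values in $[a,b]$. Since $\phi$ is a bijection of the ground set, applying it pointwise carries a uniform without-replacement sample $S$ of size $m$ from $\dataset$ to a uniform without-replacement sample $\phi(S)$ of size $m$ from $\dataset'$: every size-$m$ subset of $\dataset'$ has exactly one preimage under $\phi$, so all such subsets remain equiprobable. Because $\phi$ reverses order, $\min S = a'$ holds if and only if $\max \phi(S) = \phi(a') = a + b - a'$; write $b'' := a + b - a'$.

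Next I would invoke \Cref{lem:rangetrim:dd} on $\dataset'$ and $\phi(S)$: conditioned on $\max \phi(S) = b''$, the set $\phi(S) - \{b''\}$ is distributed as a uniform without-replacement sample from $\dataset' \cap [a, b'')$. Applying $\phi$ once more (it is its own inverse) then transports this statement back to $\dataset$. Concretely, $\phi(\phi(S) - \{b''\}) = S - \{a'\}$, and $\phi(\dataset' \cap [a, b'')) = \dataset \cap \phi([a, b'')) = \dataset \cap (a', b] = \dsgt{a'}$, where the open endpoint at $a'$ is exactly the image of the open endpoint at $b''$. Since $\phi$ preserves the uniform-without-replacement distribution, $S - \{a'\}$ is a uniform without-replacement sample from $\dsgt{a'}$, as claimed.

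The only delicate points are bookkeeping ones: verifying that $\phi$ genuinely preserves the uniform-without-replacement sampling distribution (immediate from it being a bijection on the ground set) and tracking the half-open interval endpoints through the reflection, so that $(a', b]$ rather than $[a', b]$ results. An alternative I would keep in reserve is to mirror the proof of \Cref{lem:rangetrim:dd} verbatim, replacing the rank $1 + |\dslt{b'}|$ of the maximum with the rank of the minimum counted from the top (i.e.\ using $|\dsgt{a'}|$), which produces the same binomial cancellation; but the reflection route is shorter and reuses the lemma as a black box.
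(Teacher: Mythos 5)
Your proof is correct and is essentially the paper's own argument made explicit: the paper states \Cref{cor:rangetrim:dd} without proof as the ``symmetric statement'' of \Cref{lem:rangetrim:dd}, and your order-reversing involution $\phi(x)=a+b-x$ is precisely the formalization of that symmetry, with the bijection argument for preserving uniform without-replacement sampling and the endpoint bookkeeping giving $\dataset\cap(a',b]$ both checking out. Nothing further is needed.
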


\thesisortech{\topic{Monotonicity Property and Correctness Proof}
Before proving the main result, we briefly describe
the {\em dataset size monotonicity property} obeyed by all
bounders in this paper. THis fact will be used in the
main correctness proof.
When $N$ is unknown, an upper bound on $N$ suffices, because
bounders in this paper all satisfy the following: for any $S$,
$a$, $b$, $N$, $\delta$, and $N' > N$,
\[ \lbound(S, a, b, N', \delta) \leq \lbound(S, a, b, N, \delta) \]
\[ \rbound(S, a, b, N', \delta) \geq \rbound(S, a, b, N, \delta) \]
That is, using an upper bound for $N$ can only make the CI looser,
and since \ssi range-based error bounders with the correct
dataset size $N$ fail with probability at most $\delta$, they must
also fail with probability at most $\delta$ for any $N' > N$.}

\thesisortech{We are now ready to prove correctness of \Cref{alg:rangetrim:dd}.}
\thesistechdel{\papertext{We now give a theorem on the correctness of \Cref{alg:rangetrim:dd}.}}
\begin{theorem}
\label{thm:rangetrim:dd}
Given \ssi range-based bounders \lbound and \rbound
for computing lower (\resp upper) confidence bounds
and a dataset $\dataset$ of $N$ unique values
known to all fall in the interval $[a, b] \subseteq \reals$,
\Cref{alg:rangetrim:dd} returns a $(1-\delta)$ confidence interval for $\AVG(\dataset)$.
\end{theorem}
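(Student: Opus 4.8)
The plan is to condition on the extreme order statistics of the sample and reduce the correctness of each one-sided bound to the \ssi guarantee of the underlying bounder applied to a suitably \emph{trimmed} dataset. First I would observe that the online \MIN/\MAX bookkeeping in \Cref{alg:rangetrim:dd} is merely an efficient realization of the three conceptual steps: a short induction on the running maximum shows that the state $S_\ell$ is assembled from exactly the multiset $S\setminus\{\max S\}$ (each non-maximal draw passes through unchanged, while a new running max is ``demoted'' to the previous one, so the overall maximum never enters $S_\ell$), with $b'=\max S$; symmetrically $S_r$ is assembled from $S\setminus\{\min S\}$ with $a'=\min S$. Thus it suffices to analyze the conceptual algorithm returning $\big[\lbound(S\setminus\{\max S\},a,\max S,N-1,\tfrac{\delta}{2}),\ \rbound(S\setminus\{\min S\},\min S,b,N-1,\tfrac{\delta}{2})\big]$.

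For the lower bound, I would fix an arbitrary realizable value $\beta\in\dataset$ and condition on the event $\{\max S=\beta\}$. By \Cref{lem:rangetrim:dd}, conditioned on this event the set $S\setminus\{\max S\}$ is distributed as a uniform without-replacement sample of $m-1$ values from $\dslt{\beta}$, whose elements all lie in $[a,\beta)\subseteq[a,\beta]$. Hence the hypotheses of the \ssi range-based bounder \lbound hold in this conditional world: with the \emph{true} size $|\dslt{\beta}|$ it would exceed $\AVG(\dslt{\beta})$ with probability less than $\tfrac{\delta}{2}$. Two adjustments remain. Since $\beta\in\dataset$ is excluded from $\dslt{\beta}$ we have $|\dslt{\beta}|\le N-1$, so by the dataset-size monotonicity property the value returned with the algorithm's argument $N-1$ is no larger than the one returned with $|\dslt{\beta}|$, and therefore still exceeds $\AVG(\dslt{\beta})$ with probability less than $\tfrac{\delta}{2}$. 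Moreover, $\dataset$ is obtained from $\dslt{\beta}$ by adjoining elements all $\ge\beta>\AVG(\dslt{\beta})$, which can only raise the average, so $\AVG(\dslt{\beta})\le\AVG(\dataset)$. Consequently $\{g_\ell>\AVG(\dataset)\}\subseteq\{g_\ell>\AVG(\dslt{\beta})\}$, so conditioned on $\max S=\beta$ the lower bound fails with probability less than $\tfrac{\delta}{2}$.

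Because this conditional bound holds for every realizable $\beta$, the law of total probability removes the conditioning and yields $\Parg{g_\ell>\AVG(\dataset)}<\tfrac{\delta}{2}$ unconditionally. I would then run the mirror-image argument for $g_r$, invoking \Cref{cor:rangetrim:dd} in place of \Cref{lem:rangetrim:dd} together with the symmetric inequality $\AVG(\dsgt{\alpha})\ge\AVG(\dataset)$ for $\alpha=\min S$, to obtain $\Parg{g_r<\AVG(\dataset)}<\tfrac{\delta}{2}$. A union bound over the two one-sided failure events then gives $\Parg{\AVG(\dataset)\notin[g_\ell,g_r]}<\delta$, which is exactly the claim. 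Finally, the uniqueness assumption can be discharged by the relabeling reduction discussed immediately after \Cref{lem:rangetrim:dd}.

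The step I expect to be the main obstacle, and the conceptual heart of the proof, is the conditioning argument: soundness of one-sided range trimming rests on the fact that conditioning on $\max S$ leaves the remaining draws \emph{exactly} uniform without replacement over $\dslt{\max S}$ (\Cref{lem:rangetrim:dd}), which is what licenses invoking the underlying bounder's \ssi guarantee inside the conditional probability space. The two subtler points are the size mismatch — using $N-1$ as a valid over-estimate of $|\dslt{\beta}|$ and absorbing the gap via monotonicity — and the monotone-average inequality $\AVG(\dslt{\beta})\le\AVG(\dataset)$, which are precisely the places where one must check that nothing is lost when transferring the guarantee from the trimmed dataset back to $\dataset$.
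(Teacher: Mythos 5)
Your proof is correct and follows essentially the same route as the paper's: conditioning on the extreme order statistics via \Cref{lem:rangetrim:dd} and \Cref{cor:rangetrim:dd}, invoking the \ssi guarantee on the trimmed dataset, absorbing the size mismatch through dataset-size monotonicity, transferring the guarantee back to $\AVG(\dataset)$ via $\AVG(\dslt{b'}) \leq \AVG(\dataset) \leq \AVG(\dsgt{a'})$, and union bounding the two one-sided failures. If anything, you are more explicit than the paper in two places it glosses over --- the induction showing the streaming $\MIN$/$\MAX$ bookkeeping exactly reproduces the statistics of $S\setminus\{\max S\}$ and $S\setminus\{\min S\}$, and the law-of-total-probability step that lifts the conditional guarantee to an unconditional one.
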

\begin{proof}
\thesistechdel{\papertext{Please see our extended technical report~\cite{techreport-dd}.}}
\thesisortech{\Cref{alg:rangetrim:dd} proceeds by drawing $S$ uniformly and without replacement
from $\dataset$ and computing $a'=\min{S}$, $b'=\max{S}$, $S_\ell$, and $S_r$,
where the latter two quantities capture relevant
statistics from the sample $S-\{b'\}$ and $S-\{a'\}$, respectively,
so we treat $S_\ell$ and $S_r$ as if $S_\ell = S-\{b'\}$ and $S_r = S-\{a'\}$.
By \Cref{lem:rangetrim:dd}, we have that $S_\ell$ is a uniform sample of $m-1$ values drawn without
replacement from $\dslt{b'}$, and likewise by \Cref{cor:rangetrim:dd} $S_r$ is a uniform sample
of $m-1$ values drawn without replacement from $\dsgt{a'}$. Because \lbound and \rbound
are assumed to be \ssi, range-based error bounders, we have that
\begin{align}
&\ \Parg{\lbound(S_\ell, a, b', N-1, \frac{\delta}{2}) > \AVG(\dataset)} \label{eq:supplied} \\
\leq&\ \Parg{\lbound(S_\ell, a, b', |\dslt{b'}|, \frac{\delta}{2}) > \AVG(\dataset)} \label{eq:monotone} \\
\leq&\ \Parg{\lbound(S_\ell, a, b', |\dslt{b'}|, \frac{\delta}{2}) > \AVG(\dslt{b'})} < \frac{\delta}{2}
\label{eq:trimmed}
\end{align}
and symmetrically for $\rbound(S_r, a', b, N-1, \frac{\delta}{2})$, but with ``$>$''
replaced with ``$<$'' in the probability expression above, and replacing
$\dslt{b'}$ with $\dsgt{a'}$.
$(\ref{eq:supplied})\rightarrow(\ref{eq:monotone})$ follows from the dataset size
monotonicity property of \lbound\ (\S\ref{sec:state:dd}), applicable
since $N-1 \geq |\dslt{b'}|$, and $(\ref{eq:monotone})\rightarrow(\ref{eq:trimmed})$
follows since $\AVG(\dslt{b'}) \leq \AVG(\dataset)$, as the former is clipped above $b'$
(and similarly for \rbound since $\AVG(\dsgt{a'})\geq\AVG(\dataset)$).
Union bounding over the cases for each of \lbound and \rbound,
the probability that \Cref{alg:rangetrim:dd} returns
an interval that does not enclose $\AVG(\dataset)$ is at most $\delta$.}
\end{proof}

\section{System Considerations}
\label{sec:system:dd}

In this section, we address a number
of implementation issues that become pertinent when applying
techniques of previous sections in a real system.
Although the techniques presented in this section are auxiliary to our
primary contribution and can be used with any CI approach, they are developed
with \ssi error bounders and strong probabilistic guarantees in mind.
First, we describe how to augment the techniques of \Cref{sec:algorithm:dd},
which apply for a fixed sample size taken without replacement from
a finite dataset of known size, with locality-aware
{\em scan-based} without-replacement sampling that
need not know $N$, and we further describe how to use this
layout to facilitate \SUM and \COUNT aggregations
(\S\ref{sec:scansample:dd}).
Next, we describe
an {\em optional stopping} routine that does not require
a sample size to be specified up-front (\S\ref{sec:optstop:dd}).
Finally, we describe
an {\em active scanning} architectural optimization that prioritizes
samples that facilitate early termination (\S\ref{sec:active:dd}),
all without losing guarantees proved in
\Cref{sec:algorithm:dd}.

These system details are implemented within the context
of \fv, which is our general relational column store 
for approximate
report generation with
guarantees. \fv uses the error bounders from \Cref{sec:algorithm:dd}
and pairs them with a practical architecture for without-replacement sampling.
\fv uses block-based bitmaps over categorical
attributes (similar to~\cite{macke2018adaptive})
for efficient processing of queries with predicates or groups.
Furthermore, for continuous attributes, \fv stores the minimum
and maximum values in a catalog, to be used as the range bounds
$a$ and $b$ for the desired range-based error bounder.

\subsection{Scan-Based Sampling for DB Aggregates}
\label{sec:scansample:dd}
We now describe how \fv implements without-replacement
sampling in a locality-aware manner by {\em scanning} over pre-shuffled
data, and furthermore how this approach can be
used to compute CIs for \COUNT and \SUM.
The up-front shuffling cost need only be paid
once in order to facilitate many queries, although care must be taken
to set the error probability $\delta$ small enough when running multiple queries
to avoid losing error bounder guarantees.
The high-level idea behind this implementation of without-replacement
sampling is not new and has been used in prior work for approximate query
processing~\cite{qin2014pf,wu2010continuous,zeng2015g,macke2018adaptive}
and other analytics~\cite{bismarck}.
We begin by introducing
{\em scrambles} and {\em aggregate views}:

\smacke{Figure for this?}

\begin{definition}[Scramble]
\label{def:scramble}
A {\em scramble} is an {\em ordered} copy of a relational table
that has been permuted randomly,
allowing for scan-based without-replacement sampling.
\end{definition}

Scanning a continuous column in a scramble is equivalent to sampling
without replacement. In fact, scanning any subset of data in
a continuous column in a scramble (assuming the subset is chosen
without knowledge of the order of data)
is also equivalent to sampling without replacement,
so that scanning a scramble can be used to sample without replacement
for any aggregate appearing in a query containing arbitrary filters
or \groupby clauses. We call such subsets {\em aggregate views}:

\begin{definition}[Aggregate View]
\label{def:sampview}
An {\em aggregate view} for some aggregate $A$ appearing
in a query (possibly belonging to a group induced by a \groupby clause)
is the set of values in a scramble that contribute toward 
the computation of $A$.
\end{definition}

Note that $\delta$ must be divided by the number of aggregate views
in a query (or an upper bound) to preserve error guarantees.

\topic{Computing CIs for \COUNT}
Ensuring that data in a scramble are permuted randomly makes it
easy to compute bounds on the selectivities of aggregate views,
and by extension on the \COUNT of tuples in each aggregate view,
using existing techniques~\cite{haas1997large,haas1999ripple}.
\thesistechdel{\papertext{Please see the full discussion in our extended technical
report~\cite{techreport-dd}; we briefly outline the main idea here.}}
\thesistext{We briefly outline the main idea.}
One can conceptually assign each row of a scramble a $1$ if it belongs
to the aggregate view of interest, and a $0$ otherwise. The \AVG of
this ``derived'' view (over the whole scramble) is exactly the selectivity
of the aggregate view, and we can use a Hoeffding-Serfling-based bounder
to compute a CI for the selectivity (using range bounds of $a=0$ and $b=1$).
Multiplying these bounds by the total
number of rows in the scramble then yields a CI for the \COUNT of rows
that participate in the aggregate view. \smacke{Argue that HS is appropriate?}

\papertext{As an aside, recall that the range-based error bounders we consider in \Cref{sec:experiments:dd}
all take as input the number of tuples in an aggregate view $N_i$.
In fact, an upper bound on $N_i$ suffices to preserve correctness.
Thus, the same method for computing CIs for \COUNT aggregates
can be used in conjunction with error bounders for \AVG by using
an upper bound on \COUNT in place of the exact dataset size $N_i$\thesistechsub{;}{.}}
\thesistechdel{\papertext{please see our extended technical report~\cite{techreport-dd} for details.}}
\techreport{In more detail, %
for a scramble with $R$ rows, $N$ of which are in the sample view $V$
for a query $Q$, the number of rows seen that belong to $V$, $m_v$, after
scanning $r$ rows of the scramble is a hypergeometric random variable~\cite{casella2002statistical}
whose mean is the selectivity $\sigma_v$ of $V$ multiplied by $r$, $\sigma_v\cdot r$. One could use
bounds specifically tailored to the hypergeometric distribution (or even perform an exact
computation) to compute an upper bound on $\sigma_v$ that holds \whp, but in this work we use
a simple strategy that uses Hoeffding-Serfling to bound $\sigma_v$, stated as follows.

\begin{lemma}
\label{lem:sel-ubound:dd}
The probability that a scan of a scramble of size $R$ that has processed
$r$ rows so far sees fewer than $(\sigma_v-\veps)\cdot r$
or more than $(\sigma_v+\veps)\cdot r$ rows belonging
to $V$ is at most $\delta$, for $\veps = \sqrt{\frac{\log(2/\delta)}{2r}\cdot(1-\frac{r-1}{R})}$.
\end{lemma}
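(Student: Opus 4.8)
The plan is to reduce the statement to a two-sided application of the Hoeffding-Serfling inequality (\Cref{lem:hs:dd}) on a suitable $0$/$1$-valued derived view. First I would observe that, because a scramble is a uniformly random permutation of the $R$ rows (\Cref{def:scramble}), examining the first $r$ rows is exactly a uniform without-replacement sample of size $r$ from the $R$ rows; hence the count $m_v$ of $V$-rows among them is hypergeometric, with mean $\sigma_v \cdot r$.

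Next I would set up the reduction explicitly: assign each of the $R$ rows the value $1$ if it belongs to $V$ and $0$ otherwise. This derived dataset $\dataset$ has $N = R$ values, all in $[a,b]=[0,1]$, and its average is exactly the selectivity, $\AVG(\dataset) = \sigma_v$ (duplicate values pose no problem, since the Hoeffding-Serfling inequality treats the data as a multiset). Scanning the first $r$ rows then yields without-replacement draws $X_1, \ldots, X_r$ from $\dataset$ whose running sum is $m_v$, so the sample mean equals $m_v/r$.

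Then I would invoke \Cref{lem:hs:dd} with $m = r$, $N = R$, $\mu = \sigma_v$, and $b-a = 1$, focusing on the $k=m$ term of the maximal inequality. Inverting the probability expression for failure probability $\delta/2$ gives, with probability at least $1 - \delta/2$,
\[ \frac{m_v}{r} - \sigma_v \leq \sqrt{\frac{\log(2/\delta)}{2r}\left(1 - \frac{r-1}{R}\right)} = \veps. \]
Applying the symmetric lower-tail bound (equivalently, Hoeffding-Serfling on $1 - X_t$) with failure probability $\delta/2$ controls the other direction. A union bound over the two tails then yields $\Parg{|m_v/r - \sigma_v| > \veps} \leq \delta$, and multiplying through by $r$ converts the fractional deviation into the claimed count deviation, so that $m_v < (\sigma_v - \veps)\,r$ or $m_v > (\sigma_v + \veps)\,r$ occurs with probability at most $\delta$.

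Once the derived-view reduction is in place, the argument is essentially bookkeeping; the only point requiring care is the factor of two inside $\log(2/\delta)$, which arises from splitting the failure budget evenly across the two tails via the union bound (mirroring the derivation of the Hoeffding-Serfling confidence bounds in \Cref{sec:bounders:dd}). I expect no substantive obstacle beyond confirming that the scramble's random permutation justifies treating the prefix scan as a genuine without-replacement sample, which follows directly from \Cref{def:scramble}.
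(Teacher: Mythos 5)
Your proposal is correct and follows essentially the same route as the paper, whose proof is simply ``follows immediately from application of the Hoeffding-Serfling inequality'': the $0$/$1$ derived-view reduction you spell out is exactly the setup the paper describes in the text immediately preceding the lemma, and your two-sided inversion with the $\delta/2$ split correctly recovers the $\log(2/\delta)$ factor in the stated $\veps$. You have merely made explicit the bookkeeping the paper leaves implicit.
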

\begin{proof}
Follows immediately from application of the Hoeffding-Serfling
inequality~\cite{serfling1974probability}.
\end{proof}
\Cref{lem:sel-ubound:dd} implies that, for a scan that has seen $m_v$
rows so far belonging to $V$, $\sigma_v$ is within
\[ \whs_v \pm \veps = \frac{m_v}{r} \pm \sqrt{\frac{\log(2/\delta)}{2r}\cdot\left(1-\frac{r-1}{R}\right)} \]
\whp. This in turn implies that $N$, the number of tuples belonging to $V$,
is within $[N^-, N^+] = [(\whs_v-\veps)\cdot R, (\whs_v+\veps)\cdot R]$, \whp.

\topic{Combining \Cref{lem:sel-ubound:dd} with error bounders}
For error bounders \lbound and \rbound of the form described in
\Cref{sec:algorithm:dd} that require the data range bounds $a$ and $b$
as well as the data size $N$, a $(1-\delta)$ confidence interval
is computed as
\[ \left[\lbound(S, a, b, N, \delta/2),\ \rbound(S, a, b, N, \delta/2)\right] \]

The following theorem describes how to use \Cref{lem:sel-ubound:dd}
to compute $(1-\delta)$ error bounds when $N$ is unknown.

\begin{theorem}
\label{thm:unknown-size}
Consider a query $Q$ operating over some size-$R$ scramble with corresponding
sample view $V$.
Suppose a scan the scramble
that has processed $k$ rows so far has seen $m_v$ rows belonging
to $V$ (from which $S$ is computed). Letting
\[ N^+ = \left(\frac{m_v}{r} + \sqrt{\frac{\log(1/(1-\alpha)\cdot\delta)}{2r}\cdot\left(1-\frac{r-1}{R}\right)}\right)\cdot R \]
then the interval
\[ \left[\lbound(S, a, b, N^+, \alpha\cdot\delta/2),\ 
         \rbound(S, a, b, N^+, \alpha\cdot\delta/2) \right] \]
is a $(1-\delta)$ confidence interval for the mean of $V$, for any $\alpha \in (0,1)$.
\end{theorem}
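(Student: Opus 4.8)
The plan is to split the total failure budget $\delta$ into two parts using the parameter $\alpha$: a portion $(1-\alpha)\delta$ guarding against the event that the online size estimate $N^+$ fails to upper-bound the true size $N$ of $V$, and a portion $\alpha\delta$ (split as $\alpha\delta/2$ each) guarding against failure of the two error bounders. First I would record a sampling fact: conditioned on having seen $m_v$ rows of $V$ among the first $r$ scanned rows, those rows form a uniform without-replacement sample of size $m_v$ from $V$. This follows from exchangeability of the scramble (Definition~\ref{def:scramble}) and is what lets us invoke the \ssi guarantees of \lbound and \rbound even though $m_v$ is itself random, since the \ssi property holds uniformly over all sample sizes and hence survives averaging over the distribution of $m_v$.

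Next I would control the size estimate. Applying the one-sided form of the Hoeffding-Serfling bound behind \Cref{lem:sel-ubound:dd} to the derived $0/1$ view (range $[0,1]$), using $\log(1/\delta_s)$ rather than $\log(2/\delta_s)$ because only an upper bound on the selectivity is required, with $\delta_s = (1-\alpha)\delta$ shows that the true selectivity $\sigma_v$ exceeds $\frac{m_v}{r} + \sqrt{\frac{\log(1/((1-\alpha)\delta))}{2r}(1-\frac{r-1}{R})}$ with probability below $(1-\alpha)\delta$. Multiplying through by $R$, this is exactly the event $\{N > N^+\}$, so $\Parg{N > N^+} < (1-\alpha)\delta$.

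The core of the argument decomposes the failure event according to whether $N^+ \geq N$:
\[ \Parg{\AVG(V) \notin [\lbound(S,a,b,N^+,\tfrac{\alpha\delta}{2}),\, \rbound(S,a,b,N^+,\tfrac{\alpha\delta}{2})]} \leq \Parg{\text{fail} \wedge N^+ \geq N} + \Parg{N^+ < N}. \]
The second term is below $(1-\alpha)\delta$ by the previous step. For the first term I would invoke the dataset size monotonicity property (\S\ref{sec:state:dd}): on the event $\{N^+ \geq N\}$ and for the fixed sample $S$, we have $\lbound(S,a,b,N^+,\cdot) \leq \lbound(S,a,b,N,\cdot)$ and $\rbound(S,a,b,N^+,\cdot) \geq \rbound(S,a,b,N,\cdot)$, so the $N^+$-interval \emph{contains} the $N$-interval. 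Hence whenever the wider $N^+$-interval misses $\AVG(V)$ the narrower $N$-interval also misses it, giving the deterministic inclusion $\{\text{fail with }N^+\} \cap \{N^+ \geq N\} \subseteq \{\text{fail with }N\}$. Thus the first term is at most $\Parg{\text{fail with }N}$, which is below $\alpha\delta$ by a union bound over the two \ssi bounders invoked at the \emph{correct} size $N$ and level $\alpha\delta/2$ each. Summing yields $\alpha\delta + (1-\alpha)\delta = \delta$.

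The main obstacle is the mutual dependence among the random sample size $m_v$, the data-derived bound $N^+$, and the bounders, all computed from a single scan; naively conditioning on $N^+$ would entangle the two budgets. The trick that makes them add cleanly is to avoid conditioning on $N^+$ and instead exploit the \emph{deterministic} interval containment on $\{N^+ \geq N\}$ furnished by the monotonicity property, which reduces the ``good-size'' case to the ordinary \ssi guarantee at the true $N$; that guarantee holds uniformly over sample sizes and therefore tolerates the randomness of $m_v$ without further work.
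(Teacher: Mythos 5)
Your proof is correct and takes essentially the same route as the paper's: decompose the failure event according to whether $N \leq N^+$, bound $\Parg{N > N^+}$ by $(1-\alpha)\delta$ via \Cref{lem:sel-ubound:dd}, and bound the remaining term by $\alpha\delta$ using the dataset-size monotonicity of \lbound and \rbound together with their \ssi guarantees at level $\alpha\delta/2$ each. Your one refinement—bounding the joint probability on $\{N^+ \geq N\}$ through the deterministic interval containment rather than the conditional probability $\Parg{\text{fail} \mid N \leq N^+}$ that the paper writes—is the same idea executed slightly more carefully, since it avoids conditioning on a sample-dependent event.
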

\begin{proof}
Conditioning over whether $N\leq N^+$ or $N>N^+$,
the probability that the aforementioned
interval $[L,R]$ fails to contain the desired mean $\mu$ is
\[ \Parg{\mu\notin[L,R]\ |\ N > N^+}\cdot\Parg{N > N^+} \]
\[ + \Parg{\mu\notin[L,R]\ |\ N\leq N^+}\cdot\Parg{N\leq N^+} \]
\[ \leq \Parg{N > N^+} + \Parg{\mu\notin[L,R]\ |\ N\leq N^+} \]
By \Cref{lem:sel-ubound:dd},
the first probability is at most $(1-\alpha)\cdot\delta$, and \lbound and \rbound
are such that the probability of the second term is at most $\alpha\cdot\delta$.
The sum of these equals $\delta$, implying that the interval is a valid $(1-\delta)$
confidence interval, completing the proof.
\end{proof}
Throughout \Cref{sec:experiments:dd}, we fix $\alpha=0.99$, giving most of the weight
to the confidence interval computation, corresponding to a looser upper bound for $N$.}

\topic{Computing CIs for \SUM}
Now that we have established how to compute CIs for \AVG and \COUNT,
we briefly describe how to combine these two techniques to compute CIs
for \SUM{}\thesistechdel{\papertext{ (please see the technical report~\cite{techreport-dd}
for more discussion)}}. Given a $(1-\frac{\delta}{2})$ confidence interval for \COUNT
as $[c_\ell, c_r]$ and a $(1-\frac{\delta}{2})$ confidence interval for \SUM
as $[\gell, \garr]$,
union bounding gives $[c_\ell\cdot\gell, c_r\cdot\garr]$
as a $(1-\delta)$ confidence interval for \SUM.

\subsection{Optional Stopping}
\label{sec:optstop:dd}
\techreport{\begin{algorithm}[t]
\caption{The \optstop meta-algorithm}\label{alg:optstop:dd}
\algrenewcomment[1]{\texttt{/*} #1 \texttt{*/} \\}
\SetKwInOut{Input}{Input}
\SetKwInOut{Output}{Output}
\SetKwRepeat{Do}{do}{while}
\SetKwBlock{Repeat}{Repeat}{}
\SetKwProg{myfunc}{function}{}{}
\SetKwFunction{sample}{sample\_without\_replacement}
\SetKwFunction{init}{init\_state}
\SetKwFunction{update}{update\_state}
\SetKwFunction{lbound}{Lbound}
\SetKwFunction{rbound}{Rbound}
{\thesisdel{\scriptsize}
\Input{Dataset $\dataset$ of $N$ values in $[a,b]$, error probability $\delta$}
\Output{Error bounds that fail to enclose $\AVG(\dataset)$ with probability $<\delta$}
\nonl\ \\
$S \gets \init{}$\;
\For{$k=1$ \KwTo $\infty$}{
\label{line:loop-round:dd}
	\For{$\iter=1$ \KwTo $B$}{
		$v \gets \sample{$\dataset$}$\;
		$S \gets \update{$S, v$}$\;
	}
	$\delta' \gets (6/\pi^2) \cdot (\delta / k^2)$\;
	$L_k\gets$\lbound{$S, a, b, \frac{\delta'}{2}$}\;
	$R_k\gets$\rbound{$S, a, b, \frac{\delta'}{2}$}\;
	\If{\texttt{\upshape{should\_stop}}$(%
	\big[\max_{j\leq k}\{L_j\}, \min_{j\leq k}\{R_j\}\big])$}{
\label{line:stop-cond:dd}
	\texttt{break}\;
	}
}
\Return{$\big[\max_k\{L_k\}, \min_k\{R_k\}\big]$}
}
\end{algorithm}
}
The techniques discussed in \Cref{sec:algorithm:dd} describe how to
compute high-probability bounds on error given statistics computed from a
particular sample of $m$ datapoints.
Fixing a sample size ahead of time is oftentimes impractical,
since it is usually unknown how many samples are needed to ensure
CIs that are ``just tight enough'' to facilitate downstream applications
on the part of the user or the system.
For example, one approach (\eg taken by VerdictDB~\cite{park2018verdictdb})
is to first compute error bounds around an approximate aggregate, and then
run an exact query if these bounds are too loose.

Another approach, which we take in this \work, is to continue
taking samples until a bound on the error is provably small enough.
For this approach, care must be taken to avoid losing guarantees
offered by range-based error bounders, since the tighter of two $(1-\delta)$
confidence intervals for a particular aggregate is itself not necessarily
a $(1-\delta)$ confidence interval. \papertext{As the technique employed
is orthogonal to the primary contribution of this work, which is to eliminate
\pma and \pos from range-based error bounders, we omit discussion here and
instead give a treatment in the technical report~\cite{techreport-dd}.}

\techreport{Various techniques have been developed for computing {\em sequentially-valid}
confidence intervals as new samples are
taken~\cite{wald2004sequential,zhao2016adaptive,Kim2015,lipton1993efficient,lipton1990practical,lipton1989estimating}.
In addition to techniques for
sequential estimation~\cite{wald2004sequential}
for sequences of \iid random variables from a known family of distributions,
various concentration results applicable to \AVG
which make no distributional assumptions have
likewise been derived~\cite{zhao2016adaptive,Kim2015}. Unfortunately, these
existing results are derived from variants of Hoeffding's inequality, and
therefore suffer from \pma. For the sake of simplicity, we use a much simpler
meta-algorithm that can be used in conjunction with any range-based error
bounder, including those that leverage our \rangetrim technique, given in
\Cref{alg:optstop:dd}. Although \Cref{alg:optstop:dd} requires more samples than
the aforementioned techniques when used in conjunction with Hoeffding-
or Hoeffding-Serfling-based error bounders, we consider the tradeoff worthwhile
due to its generality and simplicity and leave better sequential error bounders
to future work.

\topic{Analysis of Algorithm~\ref{alg:optstop:dd}}
\Cref{alg:optstop:dd} proceeds in ``rounds'', with each iteration of the outer
loop on \cref{line:loop-round:dd} forming a round. During each round, $B$ without-replacement
samples are taken and used to incrementally update the state of any range-based
error bounder that implements our interface from \Cref{sec:state:dd}. At the end of
each round, confidence intervals are recomputed, with the input error probability
$\delta'$ decayed ``enough'' to ensure that the probability of error across {\em all}
rounds is at most $\delta$. If the stopping condition on \cref{line:stop-cond:dd} is met, then
the algorithm terminates; otherwise, it proceeds to the next round, decaying $\delta'$
appropriately to control the overall error probability.

We now give a proof of correctness of \Cref{alg:optstop:dd}.
\begin{theorem}
With probability at least $(1-\delta)$,
the $\{L_k\}$ and $\{R_k\}$ computed by \Cref{alg:optstop:dd} satisfy
$\AVG(\dataset) \in [L_k, R_k]$ for every $k$ in the outer loop.
In particular,
$\AVG(\dataset) \in [\max_{k\geq1}{L_k}, \min_{k\geq1}{R_k}]$
with probability at least $(1-\delta)$.
\end{theorem}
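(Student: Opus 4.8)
The plan is to prove this by a union bound over all rounds of the outer loop, leveraging the key design choice in \Cref{alg:optstop:dd}: the per-round error budget is set to $\delta' = (6/\pi^2)(\delta/k^2)$, which decays fast enough in $k$ that the total error summed over all (possibly infinitely many) rounds is still at most $\delta$. The whole argument hinges on sample-size independence of \lbound and \rbound, since we inspect the bounds afresh at every sample size $B, 2B, 3B, \dots$ and each such snapshot must be individually valid.

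First I would fix a single round $k$ and bound the probability that the interval recomputed in that round fails to enclose $\AVG(\dataset)$. At round $k$, the state $S$ summarizes a without-replacement sample of size $kB$ from $\dataset$, and \lbound and \rbound are each invoked with error parameter $\delta'/2$. Because \lbound and \rbound are assumed to be \ssi error bounders, the definition of an \ssi lower (\resp upper) bounder yields $\Parg{L_k > \AVG(\dataset)} < \delta'/2$ and $\Parg{R_k < \AVG(\dataset)} < \delta'/2$ irrespective of the sample size. A union bound within the round then gives $\Parg{\AVG(\dataset) \notin [L_k, R_k]} < \delta'$.

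Next I would union-bound across all rounds $k \geq 1$. The probability that there exists any round whose freshly computed interval misses $\AVG(\dataset)$ is at most
\[ \sum_{k=1}^{\infty} \frac{6}{\pi^2}\cdot\frac{\delta}{k^2} = \frac{6\delta}{\pi^2}\sum_{k=1}^{\infty}\frac{1}{k^2} = \frac{6\delta}{\pi^2}\cdot\frac{\pi^2}{6} = \delta, \]
where the series is evaluated via the Basel identity $\sum_{k\geq 1}1/k^2 = \pi^2/6$, which is exactly why the $6/\pi^2$ normalization appears in the choice of $\delta'$. Taking complements, with probability at least $1-\delta$ we have $\AVG(\dataset) \in [L_k, R_k]$ simultaneously for every $k$, establishing the first claim. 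The ``in particular'' statement then follows deterministically on this good event: if $\AVG(\dataset) \geq L_k$ for every $k$ then $\AVG(\dataset) \geq \max_{k\geq1} L_k$, and symmetrically $\AVG(\dataset) \leq \min_{k\geq1} R_k$, so $\AVG(\dataset) \in [\max_{k\geq1} L_k, \min_{k\geq1} R_k]$ on the same event of probability at least $1-\delta$.

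The step deserving the most care—and the reason \ssi bounders are indispensable here rather than a mere convenience—is the handling of the infinitely many, adaptively chosen rounds. The stopping rule on \cref{line:stop-cond:dd} may halt at any round, and the round-$k$ sample is nested inside the round-$(k{+}1)$ sample, so the events ``round $k$ fails'' are heavily correlated; fortunately the union bound requires no independence whatsoever, so this nesting causes no difficulty. What genuinely matters is that each per-round guarantee must hold for the \emph{actual finite} sample size observed at that round, which is precisely what sample-size independence provides and what an asymptotic (CLT- or bootstrap-based) interval would fail to deliver, since the latter's coverage only holds in the limit and could be violated at any of the finite snapshots we examine.
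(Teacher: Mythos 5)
Your proposal is correct and follows essentially the same route as the paper's proof: a union bound over all rounds, with the per-round budget $\delta_k = (6/\pi^2)(\delta/k^2)$ summing to exactly $\delta$ via the Basel identity $\sum_{k\geq1}1/k^2 = \pi^2/6$, relying on the \ssi property so that each round's bounds are valid at its finite sample size. The paper's version is terser --- it leaves implicit the within-round union bound over the $\delta'/2$ calls to \lbound and \rbound and the deterministic passage to $[\max_k L_k, \min_k R_k]$, both of which you spell out correctly.
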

\begin{proof}
Denote the $\delta'$ used at iteration $k$ with $\delta_k$. Union bounding
over rounds, the probability of a mistake is at most
\[ \sum_{k\geq1}\delta_k = \frac{6}{\pi^2}\sum_{k\geq1}\frac{\delta}{k^2} = \frac{6}{\pi^2}\cdot\frac{\pi^2}{6}\delta = \delta\]
using the identity $\sum_{k\geq1}\frac{1}{k^2} = \frac{\pi^2}{6}$,
completing the proof.
\end{proof}

\fv performs I/O at the level of blocks,
so instead of computing bounds every $B$ samples
as described in the pseudocode of \Cref{alg:optstop:dd}, we compute bounds
after every $B$ block read. In our experiments
(\Cref{sec:experiments:dd}), we set $B=40000$.
We leave %
development of alternative approaches
to future work.}

\topic{Stopping Conditions\techreport{ for Algorithm~\ref{alg:optstop:dd}}}
\techreport{Correctness of \Cref{alg:optstop:dd} is independent of whether the error bounder uses
our \rangetrim technique (please see \techsub{our extended technical
report~\cite{techreport-dd}}{\Cref{sec:rtbounder:dd} in the appendex} for an implementation
of our \rangetrim technique in terms of the interface from \Cref{sec:state:dd}), and
it is furthermore independent of stopping condition.}
We consider several stopping
conditions used in our system implementation:
\begin{denseding}
\iftech
\item \textbf{Desired Samples Taken} (\enough): If a fixed number of
	samples are requested, do not use \Cref{alg:optstop:dd}; instead,
	terminate query processing once a desired number of tuples contribute
	to the partial aggregate(s) in the query.
	\label{stop:taken}
	\item \textbf{Sufficient Absolute Accuracy} (\intervalsmallerthan{\veps}):
	The interval width is sufficiently small.
	\label{stop:absacc}
\fi
	\item \textbf{Sufficient Relative Accuracy} (\relintervalsmallerthan{\veps}):
	The interval width is sufficiently small
	(relative to the possible correct values implied by the interval).
	\label{stop:relacc}
	\item \textbf{Threshold Side Determined} (\notinside{v}):
	The interval does not contain some threshold value $v$,
	indicating that the true \AVG is \whp either less than
	or greater than the threshold $v$.
	\label{stop:thresh}
	\item \textbf{Top- or Bottom-$K$ Separated}: In a query
	with multiple groups, the error bounds of the groups with either $K$ smallest or largest
	aggregates do not intersect those of any of the remaining groups.
	\label{stop:sep}
	\item \textbf{Groups Ordered Correctly}: In a query
	with multiple groups, the error bounds for each group intersect none of the
	other groups' error bounds, indicating that the correct ordering of group
	aggregates has been determined~\cite{Kim2015}.
	\label{stop:ordered}
\end{denseding}
Different stopping conditions apply to different queries.
For example, stopping conditions~\ref{stop:relacc} and~\ref{stop:thresh}
might be used for the query in \Cref{fig:putative:dd}.

\subsection{Active Scanning}
\label{sec:active:dd}
For queries with \sql{GROUP BY}s, different groups may require different
numbers of samples to achieve stopping conditions of the types considered
in \Cref{sec:optstop:dd}. For simple scans that simply read blocks of the
scramble in the order in which they appear, it is impossible to control
the relative number of tuples for each group, leading to potential
inefficiencies. For example, consider one of the queries in our
experiments, \fqref{airlinepos}, which selects airlines with
average delay above some threshold. This query uses stopping condition~\ref{stop:thresh} in
order to determine when to terminate, since, when this stopping condition
has been achieved, it as been determined \whp whether each airline has average
delay above or below the threshold.  Those groups (airlines) for which the average delay
is near \blue{\texttt{\$thresh}} require more samples than those for which the average delay
is far from \blue{\texttt{\$thresh}} in order to achieve condition~\ref{stop:thresh}. If these
groups are sparse within the scramble, a scan will look at much
more data than necessary.

For this reason, we process queries that perform \sql{GROUP BY}s with an
adaptive sampling approach using
{\em active scanning}, which is a block-skipping technique
that only processes blocks that contain tuples for so-called {\em active groups},
skipping any other blocks.
The notion of an active group depends on the stopping condition,
but in brief, active groups are groups that should be prioritized
for sampling to more quickly achieve a corresponding stopping condition.
\thesistechdel{\papertext{Please see our extended technical report~\cite{techreport-dd}
for a description of active groups for each applicable stopping condition.}}

\iftech
\topic{Active Groups for Stopping Conditions}
We now describe how we determine active groups, or groups that should be prioritized for
sampling, for each of the stopping conditions discussed in \S\ref{sec:optstop:dd}.
\\ \ref{stop:taken} (Desired Samples Taken):
      Under this condition, we consider a group active
      as long as fewer than the desired $m$ samples have been taken that contribute to
      that group's aggregate.
\\ \ref{stop:absacc} (Sufficient Absolute Accuracy):
	  We consider a group active as long as its confidence bounds exceed $\veps$
	  in width; \ie, \intervalgeq{\veps}.
\\ \ref{stop:relacc} (Sufficient Relative Accuracy):
	  Same as the previous, but a group is active if \relintervalgeq{\veps}.
\\ \ref{stop:thresh} (Threshold Side Determined):
	  A group is active as long as the threshold side has not been
	  determined; \ie, \isinside{v}.
\\ \ref{stop:sep} (Top- or Bottom-$K$ Separated):
	 The activeness condition for this stopping condition is
	 the most complicated out of any we consider.
	 First, consider the sorting the aggregates for all the groups in increasing order. A group
	 among the top-$K$ is active if its lower confidence bound $\gell$ crosses the midpoint
	 between the aggregate value for the smallest of the top-$K$ and the largest of the bottom
	 $N-K$.
	 Likewise, a group among the
	 bottom $N-K$ is active if its upper confidence bound $\garr$ crosses this midpoint.
	 Analogous statements hold if we consider the separation between the bottom-$K$ and the top $N-K$
	 as the stopping condition, but with
	 the aggregates sorted in decreasing order.
\\ \ref{stop:ordered} (Groups Ordered Correctly):
	 A group is active if its interval $\interval$ intersects the interval of any other group.

\topic{Async Lookahead}
\fi
We furthermore accelerate active scanning with an
asynchronous lookahead technique from prior work~\cite{macke2018adaptive},
which we briefly describe here.
Active scanning with lookahead uses block-based bitmap indexes to efficiently check whether a block
contains tuples for any active group. Instead of synchronously checking
whether a given block contains tuples for active groups by iterating over
each active group and querying the index, a separate {\em lookahead thread}
iterates over a batch of $1024$ blocks and marks blocks for processing or skipping,
for each group. By iterating over an entire batch of $1024$ blocks for a given active group,
bitmaps for the group tend to be in cache more often, making the index
lookup operation more efficient. We refer the reader to~\cite{macke2018adaptive} for more details.
In our experiments in \Cref{sec:experiments:dd},
we set the block size to $25$ rows, so a batch of $1024$ blocks contains
a total of $25600$ rows.

\section{Empirical Study}
\label{sec:experiments:dd}
In this section, we perform an extensive empirical evaluation
of various error bounders and sampling strategies on real data.
\techreport{
\begin{table}[t]
\begin{densecenter}
{\papertext{\scriptsize}\begin{tabular}{ | c || c|c|c|c|} \hline
  \rowcolor[HTML]{C0C0C0} 
  \textbf{Dataset} & Size & \#Tuples & \#Attributes & Replications\\ \hline \hline
  \flights & 32 GiB & \numflightstuples  & 5  & $5\times$ \\ \hline
\end{tabular}}
\caption{Descriptions of \flights Dataset \smacke{Revisit this}} %
\label{tab:datasets:dd}
\end{densecenter}
\end{table}
}

\begin{table}[t]
\begin{densecenter}
{\papertext{\scriptsize}\begin{tabular}{|$c||^c^c|^C{\thesissub{\techsub{3.2cm}{2.55cm}}{7.5cm}}|}\hline
\rowcolor[HTML]{C0C0C0}\rowstyle{\bfseries}
Query                              & \multicolumn{2}{^c|}{Stop When}             & Parameters Varied \\\hline\hline
\fqref{ord}                & (\ref{stop:relacc})  & \relintervalsmallerthan{\blue{\veps}}       &
\blue{\$airport}~(\Cref{fig:vary-sel:dd}), \blue{$\veps$}~(\Cref{fig:vary-eps:dd})\\\hline
\fqref{airlinepos}           & (\ref{stop:thresh})  & \notinside{\blue{\text{\$thresh}}} & \blue{\$thresh}~(\Cref{fig:vary-having:dd})  \\\hline
\fqref{minlatedelay}               & (\ref{stop:sep})     & \bottomseparated{2}  &   \blue{\$min\_dep\_time}\papertext{~}\techreport{\hfill}(\Cref{fig:vary-deptime:dd})      \\\hline
\iftech
\fqref{ordten}                     & (\ref{stop:thresh})  & \notinside{10}       &    \na     \\\hline
\fqref{negdelay}                   & (\ref{stop:thresh})  & \notinside{0}        &    \na     \\\hline
\fqref{multigroup}                 & (\ref{stop:sep})   & \topseparated{5} &  \na \\\hline
\fqref{hpweekbar}                  & (\ref{stop:ordered}) & \ordered             &   \na      \\\hline
\fqref{topkdelay}                  & (\ref{stop:sep})     & \topseparated{1}     &   \na      \\\hline
\fqref{airlinemax}                 & (\ref{stop:sep})     & \topseparated{1}     &   \na      \\\hline
\else
\fi
\end{tabular}}
\caption{\papertext{\scriptsize}Summary of stopping conditions used for queries
provided in \Cref{fig:flightsqueries:dd}.
Template variable arguments shown in \blue{blue}.}
\label{tab:qtemplates:dd}
\undefnonheader
\end{densecenter}
\end{table}

\subsection{Flights Dataset and Queries}
\iftech
We evaluate various error bounding techniques on the
publicly available \flights dataset~\cite{flights}, extracting attributes for
origin airport, airline, departure delay, departure time, and day of week.
The details for this dataset are summarized in \Cref{tab:datasets:dd}.
The replication value indicates how many times the dataset was
replicated to create a larger dataset and ensure sufficient scale for our experiments.
We eliminated rows with ``N/A'' or erroneous values for  any column appearing
in one or more of our queries.
\else
We evaluate various error bounding techniques on the
publicly available \flights dataset~\cite{flights}, and extract
five attributes corresponding to the
origin airport, airline, departure delay, departure time, and day of week.
To ensure sufficient scale of the data, we perform $5$ replications,
giving a $32$ GiB dataset of $606$ million tuples in total.
\fi

\techsub{\begin{figure}[t]}{\begin{figure*}[t]}
\begin{densecenter}
\techreport{\begin{multicols}{2}}

\begin{flightsquery}[ord]{avg delay for \blue{\$airport}}
\begin{lstsql}
<@\qheader{}@>
SELECT AVG(DepDelay) FROM flights WHERE Origin = <@\blue{\$airport}@>
\end{lstsql}
\end{flightsquery}

\begin{flightsquery}[airlinepos]{airlines with avg delay above \blue{\$thresh}}
\begin{lstsql}
<@\qheader{}@>
SELECT Airline FROM flights
GROUP BY Airline HAVING AVG(DepDelay) > <@\blue{\Small\$thresh}@>
\end{lstsql}
\end{flightsquery}

\begin{flightsquery}[minlatedelay]{2 airlines with min avg delay after \blue{\$min\_dep\_time}}
\begin{lstsql}
<@\qheader{}@>
SELECT Airline FROM flights WHERE DepTime > <@\blue{\Small\$min\_dep\_time}@>
GROUP BY Airline ORDER BY AVG(DepDelay) ASC LIMIT 2
\end{lstsql}
\end{flightsquery}

\iftech
\begin{flightsquery}[ordten]{whether ORD has avg delay $>10$}
\begin{lstsql}
<@\qheader{}@>
SELECT (CASE WHEN AVG(DepDelay) > 10 THEN 1 ELSE 0 END)
FROM flights WHERE Origin = 'ORD'
\end{lstsql}
\end{flightsquery}

\begin{flightsquery}[negdelay]{airports with negative avg departure delay}
\begin{lstsql}
<@\qheader{}@>
SELECT Origin FROM flights
GROUP BY Origin HAVING AVG(DepDelay) < 0
\end{lstsql}
\end{flightsquery}

\begin{flightsquery}[multigroup]{5 worst days for afternoon delays across airports}
\begin{lstsql}
<@\qheader{}@>
SELECT DayOfWeek, Origin FROM flights
WHERE DepTime > <@\textcolor{codegreen}{1:50pm}@> GROUP BY DayOfWeek, Origin
ORDER BY AVG(DepDelay) DESC LIMIT 5
\end{lstsql}
\end{flightsquery}

\begin{flightsquery}[hpweekbar]{avg delay by day of week for airline HP}
\begin{lstsql}
<@\qheader{}@>
SELECT DayOfWeek, AVG(DepDelay) FROM flights
WHERE Airline = 'HP' GROUP BY DayOfWeek
\end{lstsql}
\end{flightsquery}

\begin{flightsquery}[topkdelay]{origin airport with highest departure delay}
\begin{lstsql}
<@\qheader{}@>
SELECT Origin FROM flights GROUP BY Origin
ORDER BY AVG(DepDelay) DESC LIMIT 1<@%
@>
\end{lstsql}
\end{flightsquery}

\begin{flightsquery}[airlinemax]{airline with maximum avg delay}
\begin{lstsql}
<@\qheader{}@>
SELECT Airline FROM flights GROUP BY Airline
ORDER BY AVG(DepDelay) DESC LIMIT 1
\end{lstsql}
\end{flightsquery}%
\else
\fqtechlabel{ordten}
\fqtechlabel{negdelay}
\fqtechlabel{multigroup}
\fqtechlabel{hpweekbar}
\fqtechlabel{topkdelay}
\fqtechlabel{airlinemax}
\fi
\techreport{\end{multicols}}
\papertext{\vspace{-4pt}}
\caption{SQL \techreport{and semantics} for \papertext{the first three} \flights queries. Template parameters are shown in \blue{\$blue}.}
\label{fig:flightsqueries:dd}
\end{densecenter}
\techsub{\end{figure}}{\end{figure*}}

\topic{Queries and Query Templates}
We evaluate our techniques on a diverse set of queries
that include various filters and \sql{GROUP BY} clauses
and exercise all the stopping conditions described in
\Cref{sec:optstop:dd}\techreport{ (except conditions~\ref{stop:taken}
and~\ref{stop:absacc},
which gives similar behavior to condition~\ref{stop:relacc})}.
\iftech
The queries themselves are given
in \Cref{fig:flightsqueries:dd}, and the accompanying
stopping conditions are summarized in \Cref{tab:qtemplates:dd}.
Additionally, several queries are parametrized, in order to reveal
interesting data-dependent behavior by varying corresponding parameters.
Any query parameters varied are
also summarized in \Cref{tab:qtemplates:dd}, with parameters
shown in \blue{blue}.
\else
The first $3$ of the $9$ queries in our set are presented
in full in \Cref{fig:flightsqueries:dd} and are additionally
parameterized --- by varying the corresponding parameter (shown
in \blue{blue} in \Cref{fig:flightsqueries:dd}), we reveal interesting
data-dependent behavior.
The stopping conditions for these queries are summarized in \Cref{tab:qtemplates:dd}.
Descriptions of the remaining $6$ queries are
presented in the technical report~\cite{techreport-dd} to save space.
\fi

\subsection{Experimental Setup}
The core of our experiments consists of two ablation studies,
intended to evaluate the impact of both our error bounder
innovations {\em and} that of our architectural innovations.
In particular, we evaluate various error bounders with and
without our \rangetrim technique developed in \Cref{sec:algorithm:dd},
and for the best error bounder (\bernrt), we furthermore evaluate
the impact of leaving out features of our active scanning sampling
strategy described in \Cref{sec:system:dd}.

We set $\delta=10^{-15}$ for all queries, as we expect
users of with-guarantees AQP to desire results that are
correct in an {\em effectively deterministic} manner.
Union bounding over the
number of queries run, the upper bound on the error
probability will still be sufficiently
small to guarantee correctness of all queries, for
any practical number of queries encountered.

\emtitle{Approaches.} We used the following strategies to bound error when
running queries in \Cref{fig:flightsqueries:dd}:

\topic{Error Bounders}
\begin{denselist}
\item \bernrt. This uses the empirical Bernstein-Serfling error bounder described in
\Cref{sec:pathologies:dd}, coupled with our \rangetrim technique described in \Cref{sec:algorithm:dd},
               which eliminates \pos.
\item \bern. Same as the previous, but without \rangetrim. \bern and \bernrt
             are included to evaluate the impact of an error bounder without \pma.
\item \hoefrt. This uses the Hoeffding-Serfling error bounder described in
\Cref{sec:pathologies:dd}, coupled with our \rangetrim technique described in \Cref{sec:algorithm:dd},
               which eliminates \pos from \hoef{} (but does not fix \pma).
\item \hoef. Same as the previous, but without \rangetrim.
\item \exact. This strawman approach eschews approximation and runs queries exactly,
to serve as a simple baseline.
\end{denselist}

We furthermore used the following strategies for sampling when
running queries in \Cref{fig:flightsqueries:dd}:

\topic{Sampling Strategies}
\begin{denselist}
\item \peek. This uses the active scanning technique to prioritize groups
that are preventing satisfaction of various stopping conditions, along with cache-efficient
queries to bitmaps with lookahead (see \Cref{sec:active:dd} for details).
\item \sync. This uses active scanning, but processes each block synchronously when
deciding whether to read it, incurring
high overhead since queries to bitmaps typically result in cache misses.
\item \scan. This strategy does not leverage bitmaps
in order to decide whether to read a block for active scanning
(but may leverage bitmaps for evaluation of whether a block contains
tuples that satisfy a fixed predicate, such as the one appearing in
\fqref{ord}). Without any predicate, this approach
simply processes all blocks in the scramble sequentially.
Note that the \exact baseline described previously always uses \scan,
as only approximate approaches can prune groups.
\end{denselist}

\emtitle{Environment.} Experiments were run on single Intel Xeon E5-2630 node with 125 GiB of RAM and with 8 physical cores
(16 logical) each running at 2.40 GHz, although we restrict our experiments to a single thread
(excepting the \peek sampling strategy, which uses one extra thread), noting
that our techniques can be easily parallelized.
The Level 1, Level 2, and
Level 3 CPU cache sizes are, respectively: 512 KiB, 2048 KiB, and
20480 KiB. We ran Linux with kernel version 2.6.32.
We report results for data stored in-memory,
since the cost of main memory has decreased to the point that many interactive workloads
can be performed entirely in-core.
Each approximate query was started from a random position in the shuffled data.
We found wall clock time to be
stable for all approaches, and report times
as the average of 3 runs for all methods.

\subsection{Metrics}
We gather several metrics in order to test two hypothesis:
one, that our error bounding strategies in conjunction with
our sampling strategies lead to speedups over simpler baselines;
and two, that they do so without sacrificing correctness of query
results.

\topic{Correctness of Query Results}
The most important metric is the fraction of queries run that
returned correct results. {\em Across all methods, all queries,
and all parameter settings, results either matched the ground truth
determined from an \exact evaluation, or were within error tolerance
in the case of \fqref{ord} and \fqref{hpweekbar}.} This is expected,
given that we are considering \ssi error bounders with
strong probabilistic guarantees in this \work, coupled with the fact
that our \rangetrim technique and system architecture do not compromise
these guarantees. As such, we expect fewer than $\delta=10^{-15}$
fraction of queries to yield incorrect results, which rounds down to
a cool $0$.

\topic{Estimate Error}
For a given requested error bound $\veps$ supplied to applicable queries,
we measure the actual error. The observed
error should always fall within the requested error bound.

\topic{Wall-Clock Time}
Our primary metric evaluates the end-to-end time required for
various error bounders and various sampling strategies
(where the \exact baseline is included as a ``sampling strategy''), across
all the queries considered.

\topic{Number of Blocks Fetched}
We also measure the number of blocks fetched from main memory
into CPU cache when using various approaches. This is mainly
due to the fact that error bounders incur additional CPU overhead
and therefore wall-clock time,
with \bern and \bernrt incurring the highest overhead, so measuring
blocks fetched for these approaches removes this confounding variable
by decoupling performance from CPU attributes.

\subsection{Results}
\label{sec:results}

\techreport{
\thesissub{\begin{table*}[t]}{\begin{figure}[t]}
\begin{densecenter}
{\thesistext{\small} \begin{tabular}{|l|c||c|c|c|c|} \hline
\multicolumn{1}{|c|}{\cellcolor[HTML]{C0C0C0} \textbf{Query}}
&  \cellcolor[HTML]{C0C0C0} & \multicolumn{4}{^c|}{\cellcolor[HTML]{C0C0C0}\textbf{Avg Speedup over \exact{} (raw time in (s))}} \\ \hline \hline
                 & \exact{} (s) & \hoef & \hoefrt & \bern & \bernrt \\ \hline \hline
\fqrefp{ord}{\thesissub{\$airport='ORD',$\veps=.5$}{'ORD',$0.5$}}  & $21.40$ & $61.58\times$~~(0.35) & $60.17\times$~~(0.36) & $1721.06\times$~~(0.01) & $\mathbf{3093.02\times}$~~(0.01) \\ \hline
\fqrefp{airlinepos}{\$thresh=0} & $46.10$ & $267.75\times$~~(0.17) & $374.92\times$~~(0.12) & $2440.25\times$~~(0.02) & $\mathbf{5135.43\times}$~~(0.01) \\ \hline
\fqrefp{minlatedelay}{\thesisdel{\$min\_dep\_time=}10:50pm} & $28.14$ & $1.19\times$~~(23.58) & $1.74\times$~~(16.14) & $9.57\times$~~(2.94) & $\mathbf{18.58\times}$~~(1.51) \\ \hline
\fqref{ordten} & $21.03$ & $13.38\times$~~(1.57) & $13.64\times$~~(1.54) & $\mathbf{991.50\times}$~~(0.02) & $956.72\times$~~(0.02) \\ \hline
\fqref{negdelay} & $49.15$ & $0.48\times$~~(102.41) & $0.90\times$~~(54.62) & $1.86\times$~~(26.47) & $\mathbf{3.77\times}$~~(13.05) \\ \hline
\fqref{multigroup} & $65.74$ & $1.19\times$~~(55.09) & $1.26\times$~~(52.01) & $12.48\times$~~(5.27) & $\mathbf{21.63\times}$~~(3.04) \\ \hline
\fqref{hpweekbar} & $29.62$ & $0.99\times$~~(29.93) & $1.00\times$~~(29.77) & $2.21\times$~~(13.39) & $\mathbf{2.51\times}$~~(11.79) \\ \hline
\fqref{topkdelay} & $49.31$ & $1.08\times$~~(45.47) & $1.08\times$~~(45.87) & $5.60\times$~~(8.81) & $\mathbf{5.83\times}$~~(8.45) \\ \hline
\fqref{airlinemax} & $46.69$ & $1.16\times$~~(40.33) & $1.34\times$~~(34.81) & $143.84\times$~~(0.32) & $\mathbf{157.94\times}$~~(0.30) \\ \hline
\end{tabular}}
\captionof{table}{\papertext{\scriptsize}Summary of average query speedups and latencies for various error bounders.}
\label{tab:bounder-latencies:dd}
\vspace{-1em}
\end{densecenter}
\thesissub{\end{table*}}{\end{figure}}
}
\papertext{
\begin{table}[t]
\begin{densecenter}
{\scriptsize \begin{tabular}{|l|c||c|c|c|c|} \hline
\multicolumn{1}{|c|}{\cellcolor[HTML]{C0C0C0} \textbf{Query}}
&  \multicolumn{5}{^c|}{\cellcolor[HTML]{C0C0C0}\textbf{Avg Time to Completion (s)}} \\ \hline \hline
                 & \exact & \hoef & \textsf{H}\plusrt & \bern & \textsf{B}\plusrt \\ \hline \hline
\fqrefp{ord}{ORD,$\veps=0.5$}  & $21.40$ & $0.35$ & $0.36$ & $0.01$ & $\mathbf{0.01}$ \\ \hline
\fqrefp{airlinepos}{\$thresh=0} & $46.10$ & $0.17$ & $0.12$ & $0.02$ & $\mathbf{0.01}$ \\ \hline
\fqrefp{minlatedelay}{10:50pm} & $28.14$ & $23.58$ & $16.14$ & $2.94$ & $\mathbf{1.51}$ \\ \hline
\fqref{ordten} & $21.03$ & $1.57$ & $1.54$ & $0.02$ & $\mathbf{0.02}$ \\ \hline
\fqref{negdelay} & $49.15$ & $102.41$ & $54.62$ & $26.47$ & $\mathbf{13.05}$ \\ \hline
\fqref{multigroup} & $65.74$ & $55.09$ & $52.01$ & $5.27$ & $\mathbf{3.04}$ \\ \hline
\fqref{hpweekbar} & $29.62$ & $29.93$ & $29.77$ & $13.39$ & $\mathbf{11.79}$ \\ \hline
\fqref{topkdelay} & $49.31$ & $45.47$ & $45.87$ & $8.81$ & $\mathbf{8.45}$ \\ \hline
\fqref{airlinemax} & $46.69$ & $40.33$ & $34.81$ & $0.32$ & $\mathbf{0.30}$ \\ \hline
\end{tabular}}
\captionof{table}{Summary of query latencies for various error bounders.}
\label{tab:bounder-latencies:dd}
\vspace{-1em}
\end{densecenter}
\end{table}
}

\thesissub{\begin{table}[t]}{\begin{figure}[t]}
\begin{densecenter}
{\papertext{\scriptsize}\begin{tabular}{|l|c||c|c|} \hline
\multicolumn{1}{|c|}{\cellcolor[HTML]{C0C0C0} \textbf{Query}}
&  \cellcolor[HTML]{C0C0C0} & \multicolumn{2}{c|}{\cellcolor[HTML]{C0C0C0}\textbf{Avg Speedup over \scan{} (time in (s))}} \\ \hline \hline
                 & \scan{} (s) & \sync & \peek \\ \hline \hline
\fqrefp{minlatedelay}{\thesisdel{\techsub{10:50pm}{\footnotesize10:50pm}}} & $2.04$ & $1.15\times$~~(1.76) & $\mathbf{1.20\times}$~~(1.69) \\ \hline
\fqref{negdelay} & $45.18$ & $1.11\times$~~(40.74) & $\mathbf{3.43\times}$~~(13.18) \\ \hline
\fqref{multigroup} & $4.10$ & $1.24\times$~~(3.32) & $\mathbf{1.36\times}$~~(3.03) \\ \hline
\fqref{hpweekbar} & $11.05$ & $\mathbf{1.14\times}$~~(9.68) & $1.13\times$~~(9.80) \\ \hline
\fqref{topkdelay} & $47.12$ & $1.40\times$~~(33.76) & $\mathbf{5.35\times}$~~(8.81) \\ \hline
\end{tabular}}
\captionof{table}{Summary of query speedups and latencies for various sampling strategies,
restricted to \groupby queries that take more than $500$ms for \scan with \bernrt.}
\label{tab:sampstrat-latencies:dd}
\end{densecenter}
\thesissub{\end{table}}{\end{figure}}

\begin{figure}[t]
\begin{densecenter}
\resizebox{\linewidth}{!}{\input{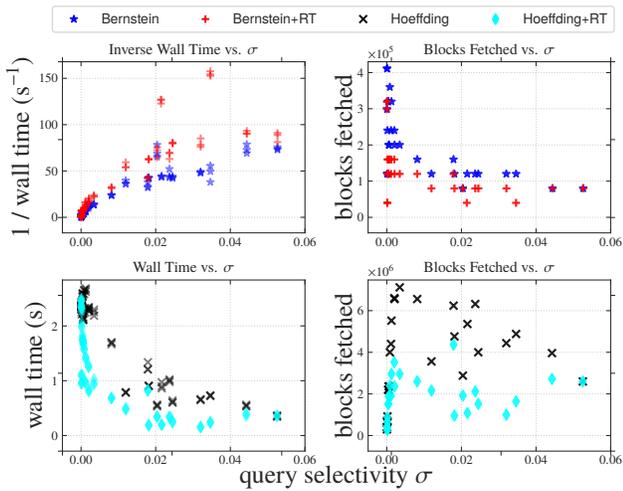}}
\caption{Effect of query selectivity on wall time and blocks fetched,
for selectivity determined by varying the origin airport used to
filter \fqrefp{ord}{$\veps=.5$}. Note that \bern shows inverse
time for clarity.}
\label{fig:vary-sel:dd}
\end{densecenter}
\end{figure}

\begin{figure}[t]
\begin{densecenter}
\resizebox{\linewidth}{!}{\input{\figs/vary-err-and-thresh.pgf}}
\begin{minipage}[t]{.5\linewidth}
\centering
\vspace{-\parskip}\vspace{-\abovedisplayskip}\papertext{\vspace{-1em}}
\subcaption{\label{fig:vary-eps:dd}}
\end{minipage}%
\begin{minipage}[t]{.5\linewidth}
\centering
\vspace{-\parskip}\vspace{-\abovedisplayskip}\papertext{\vspace{-1em}}
\subcaption{\label{fig:vary-having:dd}}
\end{minipage}%
\caption{\textbf{\subref{fig:vary-eps:dd}}:
Effect of requested maximum relative error $\veps$ on
actual relative error achieved for \fqref{ord}.
\textbf{\subref{fig:vary-having:dd}}:
Data required for different \texttt{HAVING} thresholds used in \fqref{airlinepos}.
The group aggregates for also displayed for comparison.}
\label{fig:vary-err-and-thresh:dd}
\end{densecenter}
\end{figure}

\begin{figure}[t!]
\begin{densecenter}
\resizebox{\linewidth}{!}{\input{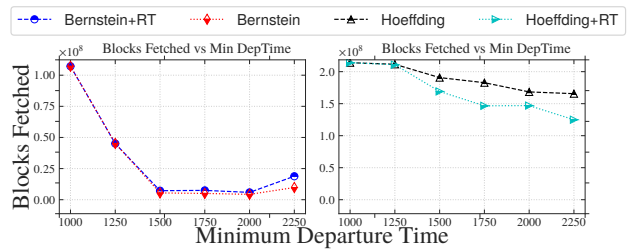}}
\caption{Effect of minimum departure time on blocks fetched
for \fqref{minlatedelay}.}
\label{fig:vary-deptime:dd}
\end{densecenter}
\end{figure}

In this section, we present results of our empirical study.

\subsubsection{Impact of Error Bounder Used} \hfill

\frameme{{\bf \em Summary.} Using the \bernrt error bounder resulted in enormous
speedups (\maxexactspeedup over \exact and \maxhoefspeedup over
\hoef), and additionally was almost always on par or better than
\bern{} ($2\times$ faster under certain conditions).}

We evaluate \hoefrt and \bernrt error bounders, along with \hoef
and \bern{} (to ablate our \rangetrim technique) and an \exact
query processor (to ablate any benefits due to approximation)
against all the queries in \Cref{fig:flightsqueries:dd}, with the
resulting time measurements summarized in \Cref{tab:bounder-latencies:dd}.

We can make a number of interesting observations. First of all, note that
all error bounders incur additional overhead --- in the case
of \fqref{negdelay} where techniques
like \hoef and \hoefrt needed to process all the data in order to terminate
(due to \pma), they actually ran {\em more slowly} than \exact.
Using \bern, which does not suffer from \pma, yielded significant
benefits over \exact, \hoef, and \hoefrt across all queries. In cases
where \hoef and \hoefrt showed improvements over \exact, \bern
amplified these improvements (\fqref{ord}, \fqref{airlinepos}).

Using \rangetrim in conjunction with both \hoef and \bern typically
led to similar performance, with a few queries exhibiting clearly superior
performance (\fqref{negdelay}, \fqref{multigroup}, and \fqref{minlatedelay}).
These queries have the following in common: they all have sparse groups with low selectivity
(either because of the large number of groups in the case of \fqref{negdelay} and \fqref{minlatedelay},
or because of the restrictive filter in the case of \fqref{negdelay}),
and they are all ``easy'' to approximate,
in that none of the groups require too many samples in order to achieve
the relevant stopping condition. (\fqref{topkdelay} also has many groups,
but some of them require many samples due to a large number of airports
with average delay near the max.) This is an ideal condition for \bernrt to
show benefit: sparse groups will bottleneck the query, but \rangetrim
will achieve termination faster since these sparse groups tend to
have fewer outliers than do non-sparse groups. For such bottlenecking sparse
groups, the range bounds
for the \sql{DepDelay} column are overly-conservative
and dominate the sampling complexity. In this case, \bern, which
has \pos, will require twice as many samples for such groups --- and since
these groups are the bottleneck, it will require roughly twice as much time,
an intuition reflected in \Cref{tab:bounder-latencies:dd}.

\subsubsection{Impact of Sampling Strategy Used}

\frameme{{\bf \em Summary.} Using \peek sampling was almost
always better than \sync, in some cases significantly
(more than $3\times$ for \fqref{negdelay} and \fqref{topkdelay}),
while \sync sampling only showed modest gains against simple
\scan-based sampling (up to $1.40\times$, \fqref{topkdelay}).}

\noindent We evaluate the impact of various sampling strategies when used in conjunction
with the \bernrt error bounder, the results of which are summarized
in \Cref{tab:sampstrat-latencies:dd}. In some cases
(\fqref{negdelay} and \fqref{topkdelay}), the performance
of the \scan baseline when used in conjunction with \bernrt
was on par with that of the \exact baseline, indicating that
some form of block skipping can be crucial for queries with \sql{GROUP BY}s.
When implementing active scanning synchronously, however, the improvement
was mediocre across the board, while the active scanning with
lookahead achieved significantly better performance for
\fqref{negdelay} and \fqref{topkdelay}. It is no coincidence that
these are the same queries for which \scan performance using approximation
is similar to \exact performance. This indicates that there were a few
sparse groups preventing termination when \scan is used, which is the
very case for which the greatest benefit can be derived from (an efficient
implementation of) block skipping.

\subsubsection{Impact of Data and Query Characteristics}

\noindent To better understand various data- and query-dependent aspects
of our techniques, we now study the effect of varying the parameters
supplied to \fqref{ord}, \fqref{airlinepos}, and \fqref{minlatedelay}.

\topic{Selectivity $\sigma$ of filter}

\frameme{{\bf \em Summary.} Wall clock time decreases as
the fraction of tuples passing \fqref{ord}'s filter increase,
while blocks fetched first increases, then decreases. \rangetrim
gives the most benefit for filters of intermediate selectivity.}

Different \sql{Origin} attribute values used for filtering
\fqref{ord} have different selectivities. By varying the filter
attribute value, we reveal interesting behavior impacted by the
selectivity of the filter. (We consider selectivity as a number
and not a quality, so that larger proportions of tuples satisfy
predicates with higher selectivity.) For all four error bounding
techniques considered, wall time and blocks fetched are plotted
versus query selectivity in \Cref{fig:vary-sel:dd}.
\bern and \bernrt are plotted separately from \hoef and \hoefrt
for presentation.

As selectivity increases, wall-clock time decreases, as one might
expect, with the benefits of \rangetrim being more obvious for
\hoef than for \bern. The performance gap between techniques with and without \rangetrim generally
decreases with increasing selectivity --- perhaps because filters with
higher selectivity tend to have range bounds that are not as conservative
when compared with the a priori range bounds known to hold for the entire column.

Interestingly, as selectivity increases, the number of blocks fetched first
increases, then decreases. This is likely because the sparsest filters require
examining all the data before terminating, obviating early stopping benefits.
After a certain point, however, early termination kicks in, happening more
quickly as fewer tuples are filtered.

\topic{$\veps$ for stopping condition~\ref{stop:relacc}}

\frameme{{\bf \em Summary.} For different upper bounds on relative
error, the actual relative error in the query result is always within
the requested error, for all error bounders applied to \fqref{ord}.
The achieved relative error drops to 0 more quickly for the more
conservative bounders \hoef and \hoefrt
as the requested error is decreased.}

By varying the requested maximum relative error $\veps$ for
query \fqref{ord}, we reveal its impact on the relative
error achieved for various error bounders, shown in \Cref{fig:vary-eps:dd}.
The main takeaways are that, for all error bounders, the achieved relative
error generally decreases as the requested error bound decreases, with
\hoef-based bounders dropping more quickly, as they are more conservative
due to \pma.

\topic{HAVING threshold for stopping condition~\ref{stop:thresh}}

\frameme{{\bf \em Summary.} \sql{HAVING} thresholds that are closer to
group aggregates require more samples in order to achieve stopping
condition~\ref{stop:thresh}, and \hoef-based error bounders in particular
are more sensitive than \bern-based error bounders for the same
threshold.}

By varying the \sql{HAVING} threshold used to filter groups / airlines
post-aggregate in \fqref{airlinepos} and measuring its effect on the
number of blocks fetched for a particular query, we reveal interesting data-dependent
behavior impacted by the true aggregates for each airline, depicted in
\Cref{fig:vary-having:dd}. This figure also plots the group aggregates
using a horizontal bar chart
sharing the same x-axis as the \sql{HAVING} threshold, revealing that
it is ``harder'' to determine which side of the \sql{HAVING} threshold
a given group is if its aggregate is close to the threshold. Indeed,
from \Cref{fig:vary-having:dd}, we see that the initial thresholds near
$0$ are very easy for all groups, allowing for very fast termination.
The first spike in number of blocks fetched occurs between $6$ and $7$,
corresponding to the aggregate for airline \textsf{NW}. At and after this
point, we see spikes in blocks fetched for both \hoef-based and \bern-based
error bounders whenever the threshold approaches one or more airline aggregate
values, although we note that \bern-based error bounders appear to be more
robust, requiring the threshold to be much closer before they are adversely
affected as compared to \hoef-based bounders.

\topic{Minimum Departure Time for \fqref{minlatedelay}}

\frameme{{\bf \em Summary.} As the minimum departure time is increased,
the spread of average delay between airlines increases, making it easier
to separate the two airlines with the minimum average delays and
achieve stopping condition~\ref{stop:sep} earlier. At the same time,
termination becomes bottlenecked on sparse airlines, increasing the
gap between similar bounders with and without \rangetrim.}

By varying the minimum departure time \mindeptime
in \fqref{minlatedelay}, we reveal its impact on the number of blocks
fetched for various error bounders, shown in \Cref{fig:vary-deptime:dd}.
This plot exhibits two interesting data-dependent behaviors worth unpacking.
First, as the \mindeptime increases, the variance in average
delay between different airlines increases, perhaps because some airlines
tend to have flights that are delayed more for later flights as compared
with other airlines. This makes it easier to achieve stopping
condition~\ref{stop:sep}, since the average delays become more spread out
with increasing minimum departure time, so we observe a decreasing trend
in the number of blocks fetched.
At the same time, as \mindeptime increases, the
selectivity of the various groups decreases. Since all
the groups are sparser, the groups for which stopping condition~\ref{stop:sep}
is bottlenecked are also sparser. Since we have an ``easy'' query (due to
the higher variance between groups) for which sparse groups are bottlenecking
termination, we tend to see a bigger performance gap between bounders with and without
our \rangetrim technique.

\section{Related Work}
\label{sec:related:dd}
In this section, we survey related literature and highlight
similarities and differences with this work.

\topic{Approximate Query Processing (AQP)}
We survey the AQP literature along two dimensions: first, online versus offline;
second, approaches with strong versus asymptotic guarantees.

\emtitle{Online versus Offline AQP.}
Online sampling-based AQP schemes select samples as queries are issued,
contrasted with offline schemes which compute strata ahead of time.
Although our approach does perform a shuffle offline, it is nevertheless closer to
online schemes, as it uses the scramble to compute samples on the fly as
in~\cite{macke2018adaptive,bismarck,qin2014pf,wu2010continuous,zeng2015g}.
Online schemes can use index structures like bitmaps to materialize
relevant samples
on-the-fly~\cite{Hellerstein1997,Kim2015,rahman2016ve,macke2018adaptive},
or obey an accuracy constraint for computing predefined aggregates
without indices~\cite{hou1988statistical,hou1989processing}.
Offline schemes, on the other hand,
materialize samples ahead-of-time~\cite{SPS,blinkdb,agarwal2012blink,icicles}
based off workload assumptions, sometimes tuning the computed strata as new
workload information is available~\cite{blinkdb,icicles}.

While we implement our error bounders without \pma or \pos in the context
of a system for online AQP, our core algorithmic techniques are orthogonal
to the exact approach, and could be paired with either online or offline schemes.

\emtitle{Sample-size-independent versus Asymptotic Guarantees.}
Most of the AQP systems from prior work have traditionally leveraged asymptotic
error bounders~\cite{blinkdb,agarwal2012blink,acharya1999aqua,park2018verdictdb},
though some have mentioned allowing either approach as an option~\cite{Hellerstein1997}.
Other approaches have leveraged
deterministic~\cite{potti2015daq,haas1997large} or concentration-based
error bounding techniques~\cite{SPS,alabi2016pfunk,macke2018adaptive,rahman2016ve,Kim2015}
under range-based or other very mild assumptions.
In some cases, novel asymptotic error bounding techniques have been
developed~\cite{park2018verdictdb,zeng2014analytical,haas1997large} to be used in conjunction with
existing systems. Our approach is analogous to these, but instead of basing
our techniques on asymptotic methods, we develop error bounding techniques
with guarantees independent of sample size, starting from existing concentration-based
methods and systematically ameliorating various pathologies.

\topic{Access Patterns for Informative Samples}
A number of techniques have been developed to optimize access to relevant data
for analytical queries.
\iftech
Sampling-based
approaches~\cite{Kim2015,olken1993random,chaudhuri_optimized_2007,chaudhuri2001overcoming,li_wander_2016,wu2010continuous,macke2018adaptive,Babcock2003,haas1999ripple,icicles,SPS}
attempt to retrieve tuples that will shrink approximation error as quickly as possible.
Index structures such as bitmaps~\cite{Kim2015,macke2018adaptive}
or inverted indexes~\cite{SPS} have been employed to facilitate this, or to simply
accelerate exact analytical queries by quickly retrieving relevant
tuples~\cite{wu2001compressed,chan1998bitmap,needletailoptimal}.
We make use of the sampling engine developed in~\cite{macke2018adaptive}, which leverages
bitmaps and active scanning to adaptively prioritize different groups in the data
while a query is running. While our \rangetrim technique is technically orthogonal
to whatever data access method is employed, it demonstrates the most gains over
existing error bounding techniques when few samples are needed to terminate.
Active scanning is particularly useful for skipping
to data needed to terminate when they are sparse.

Another access strategy worth mentioning explicitly comes from
from~\cite{chaudhuri2001overcoming} and leverages an outlier index.
\else
Please see the technical report~\cite{techreport-dd} for a more extensive survey;
here we focus on two techniques in particular:
{\em outlier indexing} and {\em priority sampling}.

\fi
Outlier indexing~\cite{chaudhuri2001overcoming} works by computing
approximate aggregates derived by combining an estimate from the main table
and an exact aggregate from the so-called ``outlier index'', which stores
all the rows with outlier values.
The benefit of the outlier index is that it shrinks the range of the data
from which samples are taken, allowing for faster convergence of approximate answers.
One could think of the outlier index as an offline analogy of our own \rangetrim technique.
Outlier indexing has some additional limitations that \rangetrim does not have;
namely, it cannot be used to facilitate queries with aggregates involving arbitrary
expressions, since such expressions can drastically change the set of outlying values.
That said, for simple aggregates the two approaches are orthogonal,
and could be leveraged together.

Priority sampling~\techreport{\cite{duffield2007priority,alon2005estimating,thorup2006confidence}}\papertext{\cite{duffield2007priority,alon2005estimating}}
is also particularly useful for coping with outliers. \techreport{If the attribute being aggregated has values $\{w_i\}$,
priority sampling computes $\{\alpha_i\}$ (where $\alpha_i \iidsim \Unif(0, 1)$) and estimates
$\sum_i w_i$ using the subset of the $\{w_i\}$ with the $k$ largest {\em priorities}, where
the priority for the $i$th tuple is given by $w_i / \alpha_i$.}
While priority sampling applies in the presence of arbitrary filters and can furthermore
be modified to allow for computation of \AVG aggregates
in addition to \SUM, it has the drawback that the attribute or expression being aggregated must be known
ahead of time (to say nothing of arbitrary expressions),
so that the tuples can be sorted in descending order of priority,
a limitation our techniques do not have.

\topic{Statistical Estimators and Confidence Intervals}
The well-known error bounders in statistics and probability leverage
asymptotic techniques~\cite{student1908probable,efron1992bootstrap,efron1979bootstrap,hajek1960limiting}\techreport{,}\papertext{.}
\iftech
while those that give strong guarantees independent of sample size
beyond Hoeffding's and Serfling's seminal work~\cite{hoeffding1963probability,serfling1974probability}
are relatively more obscure~\cite{fishman1991confidence,anderson1969confidence}.
We surveyed these in \Cref{sec:problem:dd} when we discussed
the empirical Bernstein-Serfling error bounder developed by
Bardenet et al.~\cite{bardenet2015concentration}, which we adapt for use
in a database setting with our \rangetrim technique.
\else
We already surveyed relatively more obscure \ssi bounders
in \Cref{sec:problem:dd} when we discussed
the empirical Bernstein-Serfling error bounder developed by
Bardenet et al.~\cite{bardenet2015concentration}, which we adapt for use
in a database setting with our \rangetrim technique.
\fi

\section{Conclusion and Future Work}
\label{sec:conclusion:dd}

We categorized existing conservative error bounders in terms of
two pathologies, \pma and \pos, and developed a technique, \rangetrim,
for eliminating \pos from any range-based error bounder.
\iftech
We showed the advantage of using the empirical Bernstein-Serfling bounder in the
context of a real system we are developing, \fv, that accelerates
approximate queries significantly over a Hoeffding-Serfling-based error bounder,
which suffers from \pma. We furthermore showed that augmenting this error
bounder with our \rangetrim technique leads to an additional $2\times$ in
the best case, without ever hurting performance in the worst case.
\else
We showed how an \ssi Bernstein-based bounder without \pma
can significantly accelerate approximate queries,
and how our \rangetrim technique, which eliminates \pos, leads to an additional
$2\times$ speedup in the best case, without ever hurting performance in the worst case.
\fi
By implementing our distribution-aware techniques in the context of \fv,
which
\iftech
is aware of practical considerations such as locality, optional stopping, and
block skipping in order to prioritize
\else
prioritizes
\fi
groups that require more samples in order to facilitate
early termination, we demonstrate significant speedups (\maxexactspeedup over exact processing
and \maxhoefspeedup over traditional techniques based on Hoeffding) without losing guarantees.
This suggests a viable path toward practical with-guarantees \thesissub{AQP for workload-agnostic
analytics\papertext{.}\techreport{; future work could include, for example, the development
of an optimizer that intelligently determines when to leverage traditional
data layouts and index structures for exact query processing and when to
leverage a scramble for approximate results with exact quality.}}{data analytics.}

\ifvldb
\bibliographystyle{abbrv}
\else
\bibliographystyle{IEEEtran}
\fi
{\papertext{\scriptsize}
\iftech
  \ifvldb
  \bibliography{main}
  \else
  \bibliography{IEEEabrv,main}
  \fi
\else
  \ifvldb
  \bibliography{main}
  \else
  \bibliography{IEEEabrv,main-compact}

\begin{thebibliography}{10}

\bibitem{flights}
{Flight Records}.
\newblock \url{http://stat-computing.org/dataexpo/2009/the-data.html}, 2009.

\bibitem{mssqlagg}
Microsoft sql server 2019 documentation: Clr user-defined aggregates --
  requirements.
\newblock
  \url{https://docs.microsoft.com/en-us/sql/relational-databases/clr-integration-database-objects-user-defined-functions/clr-user-defined-aggregates-requirements?view=sql-server-ver15},
  2017.
\newblock {Date accessed: 2020-02-27}.

\bibitem{oracleagg}
Oracle documentation: Using user-defined aggregate functions.
\newblock
  \url{https://docs.oracle.com/cd/B28359_01/appdev.111/b28425/aggr_functions.htm},
  2020.
\newblock {Date accessed: 2020-02-24}.

\bibitem{psqlagg}
Postgresql documentation: User-defined aggregates.
\newblock \url{https://www.postgresql.org/docs/12/xaggr.html}, 2020.
\newblock {Date accessed: 2020-02-24}.

\bibitem{acharya1999aqua}
S.~Acharya, P.~B. Gibbons, V.~Poosala, and S.~Ramaswamy.
\newblock The aqua approximate query answering system.
\newblock In {\em ACM Sigmod Record}, volume~28, pages 574--576. ACM, 1999.

\bibitem{agarwal2012blink}
S.~Agarwal, A.~P. Iyer, A.~Panda, S.~Madden, B.~Mozafari, and I.~Stoica.
\newblock Blink and it's done: interactive queries on very large data.
\newblock 2012.

\bibitem{blinkdb}
S.~Agarwal, B.~Mozafari, A.~Panda, H.~Milner, S.~Madden, and I.~Stoica.
\newblock Blinkdb: Queries with bounded errors and bounded response times on
  very large data.
\newblock In {\em EuroSys}, pages 29--42, New York, NY, USA, 2013. ACM.

\bibitem{alabi2016pfunk}
D.~Alabi and E.~Wu.
\newblock Pfunk-h: approximate query processing using perceptual models.
\newblock In {\em Proceedings of the 1st Workshop on Human-In-the-Loop Data
  Analytics}, page~10, 2016.

\bibitem{alon2005estimating}
N.~Alon, N.~Duffield, C.~Lund, and M.~Thorup.
\newblock Estimating arbitrary subset sums with few probes.
\newblock In {\em Proceedings of the twenty-fourth ACM SIGMOD-SIGACT-SIGART
  symposium on Principles of database systems}, pages 317--325, 2005.

\bibitem{anderson1969confidence}
T.~W. Anderson.
\newblock Confidence limits for the expected value of an arbitrary bounded
  random variable with a continuous distribution function.
\newblock Technical report, STANFORD UNIV CA DEPT OF STATISTICS, 1969.

\bibitem{Babcock2003}
B.~Babcock, S.~Chaudhuri, and G.~Das.
\newblock {Dynamic sample selection for approximate query processing}.
\newblock In {\em SIGMOD}, New York, New York, USA, 2003.

\bibitem{bardenet2015concentration}
R.~Bardenet, O.-A. Maillard, et~al.
\newblock Concentration inequalities for sampling without replacement.
\newblock {\em Bernoulli}, 21(3):1361--1385, 2015.

\bibitem{berry1941accuracy}
A.~C. Berry.
\newblock The accuracy of the gaussian approximation to the sum of independent
  variates.
\newblock {\em Transactions of the american mathematical society},
  49(1):122--136, 1941.

\bibitem{Boyd2004}
S.~Boyd and L.~Vandenberghe.
\newblock {\em {Convex Optimization}}.
\newblock Cambridge University Press, 2004.

\bibitem{casella2002statistical}
G.~Casella and R.~L. Berger.
\newblock {\em Statistical inference}, volume~2.
\newblock Duxbury Pacific Grove, CA, 2002.

\bibitem{chan1998bitmap}
C.-Y. Chan and Y.~E. Ioannidis.
\newblock Bitmap index design and evaluation.
\newblock In {\em ACM SIGMOD Record}, volume~27, pages 355--366. ACM, 1998.

\bibitem{chan1983algorithms}
T.~F. Chan, G.~H. Golub, and R.~J. LeVeque.
\newblock Algorithms for computing the sample variance: Analysis and
  recommendations.
\newblock {\em The American Statistician}, 37(3):242--247, 1983.

\bibitem{chaudhuri2001overcoming}
S.~Chaudhuri, G.~Das, M.~Datar, R.~Motwani, and V.~Narasayya.
\newblock Overcoming limitations of sampling for aggregation queries.
\newblock In {\em ICDE}, pages 534--542. IEEE, 2001.

\bibitem{chaudhuri_optimized_2007}
S.~Chaudhuri, G.~Das, and V.~Narasayya.
\newblock Optimized {Stratified} {Sampling} for {Approximate} {Query}
  {Processing}.
\newblock {\em ACM Trans. Database Syst.}, 32(2), 2007.

\bibitem{chen2016effective}
C.~Chen, W.~Wang, X.~Wang, and S.~Yang.
\newblock Effective order preserving estimation method.
\newblock In {\em Australasian Database Conference}, pages 369--380. Springer,
  2016.

\bibitem{SPS}
B.~Ding, S.~Huang, S.~Chaudhuri, K.~Chakrabarti, and C.~Wang.
\newblock Sample + seek: Approximating aggregates with distribution precision
  guarantee.
\newblock In {\em SIGMOD}, 2016.

\bibitem{duffield2007priority}
N.~Duffield, C.~Lund, and M.~Thorup.
\newblock Priority sampling for estimation of arbitrary subset sums.
\newblock {\em Journal of the ACM (JACM)}, 54(6):32--es, 2007.

\bibitem{dvoretzky1956asymptotic}
A.~Dvoretzky, J.~Kiefer, J.~Wolfowitz, et~al.
\newblock Asymptotic minimax character of the sample distribution function and
  of the classical multinomial estimator.
\newblock {\em The Annals of Mathematical Statistics}, 27(3):642--669, 1956.

\bibitem{efron1992bootstrap}
B.~Efron.
\newblock Bootstrap methods: another look at the jackknife.
\newblock In {\em Breakthroughs in statistics}, pages 569--593. Springer, 1992.

\bibitem{efron1979bootstrap}
B.~Efron et~al.
\newblock Bootstrap methods: Another look at the jackknife.
\newblock {\em The Annals of Statistics}, 7(1):1--26, 1979.

\bibitem{esseen1956moment}
C.-G. Esseen.
\newblock A moment inequality with an application to the central limit theorem.
\newblock {\em Scandinavian Actuarial Journal}, 1956(2):160--170, 1956.

\bibitem{bismarck}
X.~Feng, A.~Kumar, B.~Recht, and C.~R{\'e}.
\newblock Towards a unified architecture for in-rdbms analytics.
\newblock In {\em Proceedings of the 2012 ACM SIGMOD International Conference
  on Management of Data}, pages 325--336, 2012.

\bibitem{fisher2011incremental}
D.~Fisher.
\newblock Incremental, approximate database queries and uncertainty for
  exploratory visualization.
\newblock In {\em 2011 IEEE Symposium on Large Data Analysis and
  Visualization}, pages 73--80. IEEE, 2011.

\bibitem{fishman1991confidence}
G.~S. Fishman.
\newblock Confidence intervals for the mean in the bounded case.
\newblock {\em Statistics \& probability letters}, 12(3):223--227, 1991.

\bibitem{icicles}
V.~Ganti, M.-L. Lee, and R.~Ramakrishnan.
\newblock Icicles: Self-tuning samples for approximate query answering.
\newblock In {\em VLDB}, volume 176, 2000.

\bibitem{haas1996hoeffding}
P.~J. Haas.
\newblock {\em Hoeffding inequalities for join-selectivity estimation and
  online aggregation}.
\newblock IBM, 1996.

\bibitem{haas1997large}
P.~J. Haas.
\newblock Large-sample and deterministic confidence intervals for online
  aggregation.
\newblock In {\em Proceedings. Ninth International Conference on Scientific and
  Statistical Database Management (Cat. No. 97TB100150)}, pages 51--62. IEEE,
  1997.

\bibitem{haas1999ripple}
P.~J. Haas and J.~M. Hellerstein.
\newblock Ripple joins for online aggregation.
\newblock {\em ACM SIGMOD Record}, 28(2):287--298, 1999.

\bibitem{hajek1960limiting}
J.~H{\'a}jek.
\newblock Limiting distributions in simple random sampling from a finite
  population.
\newblock {\em Publications of the Mathematical Institute of the Hungarian
  Academy of Sciences}, 5:361--374, 1960.

\bibitem{Hellerstein1997}
J.~M. Hellerstein, P.~J. Haas, and H.~J. Wang.
\newblock {Online aggregation}.
\newblock {\em ACM SIGMOD Record}, 26(2):171--182, jun 1997.

\bibitem{hoeffding1963probability}
W.~Hoeffding.
\newblock Probability inequalities for sums of bounded random variables.
\newblock {\em Journal of the American statistical association},
  58(301):13--30, 1963.

\bibitem{hou1988statistical}
W.-C. Hou, G.~Ozsoyoglu, and B.~K. Taneja.
\newblock Statistical estimators for relational algebra expressions.
\newblock In {\em Proceedings of the seventh ACM SIGACT-SIGMOD-SIGART symposium
  on Principles of database systems}, pages 276--287, 1988.

\bibitem{hou1989processing}
W.-C. Hou, G.~Ozsoyoglu, and B.~K. Taneja.
\newblock Processing aggregate relational queries with hard time constraints.
\newblock In {\em ACM SIGMOD Record}, volume~18, pages 68--77. ACM, 1989.

\bibitem{jermaine2008scalable}
C.~Jermaine, S.~Arumugam, A.~Pol, and A.~Dobra.
\newblock Scalable approximate query processing with the dbo engine.
\newblock {\em ACM Transactions on Database Systems (TODS)}, 33(4):23, 2008.

\bibitem{Kim2015}
A.~Kim, E.~Blais, A.~Parameswaran, P.~Indyk, S.~Madden, and R.~Rubinfeld.
\newblock {Rapid sampling for visualizations with ordering guarantees}.
\newblock {\em PVLDB}, 8(5):521--532, Jan. 2015.

\bibitem{needletailoptimal}
A.~Kim, L.~Xu, T.~Siddiqui, S.~Huang, S.~Madden, and A.~Parameswaran.
\newblock Optimally leveraging density and locality for exploratory browsing
  and sampling.
\newblock In {\em Proceedings of the 3rd Workshop on Human-In-the-Loop Data
  Analytics}, pages 1--7, 2018.

\bibitem{kwon2017sampling}
B.~C. Kwon, J.~Verma, P.~J. Haas, and C.~Demiralp.
\newblock Sampling for scalable visual analytics.
\newblock {\em IEEE computer graphics and applications}, 37(1):100--108, 2017.

\bibitem{li_wander_2016}
F.~Li, B.~Wu, K.~Yi, and Z.~Zhao.
\newblock Wander {Join}: {Online} {Aggregation} via {Random} {Walks}.
\newblock In {\em Proceedings of the 2016 {International} {Conference} on
  {Management} of {Data}}, {SIGMOD} '16, pages 615--629, New York, NY, USA,
  2016. ACM.

\bibitem{li2018approximate}
K.~Li and G.~Li.
\newblock Approximate query processing: What is new and where to go?
\newblock {\em Data Science and Engineering}, 3(4):379--397, 2018.

\bibitem{ling1974comparison}
R.~F. Ling.
\newblock Comparison of several algorithms for computing sample means and
  variances.
\newblock {\em Journal of the American Statistical Association},
  69(348):859--866, 1974.

\bibitem{lipton1989estimating}
R.~J. Lipton and J.~F. Naughton.
\newblock Estimating the size of generalized transitive closures.
\newblock In {\em Proceedings of the 15th Int. Conf. on Very Large Data Bases},
  1989.

\bibitem{lipton1990practical}
R.~J. Lipton, J.~F. Naughton, and D.~A. Schneider.
\newblock {\em Practical selectivity estimation through adaptive sampling},
  volume~19.
\newblock ACM, 1990.

\bibitem{lipton1993efficient}
R.~J. Lipton, J.~F. Naughton, D.~A. Schneider, and S.~Seshadri.
\newblock Efficient sampling strategies for relational database operations.
\newblock {\em Theoretical Computer Science}, 116(1):195--226, 1993.

\bibitem{techreport-dd}
S.~Macke, M.~Aliakbarpour, I.~Diakonikolas, A.~Parameswaran, and R.~Rubinfeld.
\newblock Scalable aggregation with distribution-sensitive interval guarantees.
\newblock Technical report, Available at:
  \url{https://smacke.net/papers/ddavg.pdf}, 2020.

\bibitem{macke2018adaptive}
S.~Macke, Y.~Zhang, S.~Huang, and A.~Parameswaran.
\newblock Adaptive sampling for rapidly matching histograms.
\newblock {\em Proceedings of the VLDB Endowment}, 11(10):1262--1275, 2018.

\bibitem{massart1990tight}
P.~Massart.
\newblock The tight constant in the dvoretzky-kiefer-wolfowitz inequality.
\newblock {\em The annals of Probability}, pages 1269--1283, 1990.

\bibitem{mozafari2017approximate}
B.~Mozafari.
\newblock Approximate query engines: Commercial challenges and research
  opportunities.
\newblock In {\em SIGMOD}, pages 521--524. ACM, 2017.

\bibitem{mozafari2015handbook}
B.~Mozafari and N.~Niu.
\newblock A handbook for building an approximate query engine.
\newblock {\em IEEE Data Eng. Bull.}, 38(3):3--29, 2015.

\bibitem{olken1993random}
F.~Olken.
\newblock {\em Random sampling from databases}.
\newblock PhD thesis, University of California, Berkeley, 1993.

\bibitem{park2018verdictdb}
Y.~Park, B.~Mozafari, J.~Sorenson, and J.~Wang.
\newblock Verdictdb: Universalizing approximate query processing.
\newblock In {\em Proceedings of the 2018 International Conference on
  Management of Data}, pages 1461--1476, 2018.

\bibitem{pol2005relational}
A.~Pol and C.~Jermaine.
\newblock Relational confidence bounds are easy with the bootstrap.
\newblock In {\em Proceedings of the 2005 ACM SIGMOD international conference
  on Management of data}, pages 587--598, 2005.

\bibitem{potti2015daq}
N.~Potti and J.~M. Patel.
\newblock Daq: a new paradigm for approximate query processing.
\newblock {\em Proceedings of the VLDB Endowment}, 8(9):898--909, 2015.

\bibitem{qin2014pf}
C.~Qin and F.~Rusu.
\newblock Pf-ola: a high-performance framework for parallel online aggregation.
\newblock {\em Distributed and Parallel Databases}, 32(3):337--375, 2014.

\bibitem{rahman2016ve}
S.~Rahman, M.~Aliakbarpour, H.~K. Kong, E.~Blais, K.~Karahalios,
  A.~Parameswaran, and R.~Rubinfeld.
\newblock I've seen ``enough'': Incrementally improving visualizations to
  support rapid decision making.
\newblock In {\em VLDB}, 2017.

\bibitem{serfling1974probability}
R.~J. Serfling.
\newblock Probability inequalities for the sum in sampling without replacement.
\newblock {\em The Annals of Statistics}, pages 39--48, 1974.

\bibitem{student1908probable}
Student.
\newblock The probable error of a mean.
\newblock {\em Biometrika}, pages 1--25, 1908.

\bibitem{thorup2006confidence}
M.~Thorup.
\newblock Confidence intervals for priority sampling.
\newblock {\em ACM SIGMETRICS Performance Evaluation Review}, 34(1):252--263,
  2006.

\bibitem{valiant2013probably}
L.~Valiant.
\newblock {\em Probably Approximately Correct: Nature{\~O}s Algorithms for
  Learning and Prospering in a Complex World}.
\newblock Basic Books (AZ), 2013.

\bibitem{wainwright2015basic}
M.~Wainwright.
\newblock Basic tail and concentration bounds.
\newblock {\em URl: https://www. stat. berkeley.
  edu/.../Chap2\_TailBounds\_Jan22\_2015. pdf (visited 12/31/2017)}, 2015.

\bibitem{wald2004sequential}
A.~Wald.
\newblock {\em Sequential analysis}.
\newblock Courier Corporation, 2004.

\bibitem{wald1939confidence}
A.~Wald, J.~Wolfowitz, et~al.
\newblock Confidence limits for continuous distribution functions.
\newblock {\em The Annals of Mathematical Statistics}, 10(2):105--118, 1939.

\bibitem{welford1962note}
B.~Welford.
\newblock Note on a method for calculating corrected sums of squares and
  products.
\newblock {\em Technometrics}, 4(3):419--420, 1962.

\bibitem{wu2001compressed}
K.~Wu, E.~Otoo, and A.~Shoshani.
\newblock Compressed bitmap indices for efficient query processing.
\newblock {\em Lawrence Berkeley National Laboratory}, 2001.

\bibitem{wu2010continuous}
S.~Wu, B.~C. Ooi, and K.-L. Tan.
\newblock Continuous sampling for online aggregation over multiple queries.
\newblock In {\em SIGMOD}, pages 651--662. ACM, 2010.

\bibitem{zeng2015g}
K.~Zeng, S.~Agarwal, A.~Dave, M.~Armbrust, and I.~Stoica.
\newblock G-ola: Generalized on-line aggregation for interactive analysis on
  big data.
\newblock In {\em SIGMOD}, pages 913--918. ACM, 2015.

\bibitem{zeng2014analytical}
K.~Zeng, S.~Gao, B.~Mozafari, and C.~Zaniolo.
\newblock The analytical bootstrap: a new method for fast error estimation in
  approximate query processing.
\newblock In {\em Proceedings of the 2014 ACM SIGMOD international conference
  on Management of data}, pages 277--288, 2014.

\bibitem{zhao2016adaptive}
S.~Zhao, E.~Zhou, A.~Sabharwal, and S.~Ermon.
\newblock Adaptive concentration inequalities for sequential decision problems.
\newblock In {\em Advances in Neural Information Processing Systems}, pages
  1343--1351, 2016.

\end{thebibliography}
  \fi
\fi
}

\techreport{\clearpage \nobalance \appendix
\label{sec:appendix:dd}

\iftech

\section{R\lowercase{ange}T\lowercase{rim} B\lowercase{ounder} P\lowercase{seudocode}}
\label{sec:rtbounder:dd}

\def\inner{\texttt{inner}}
\def\undef{\texttt{undefined}}
\thesissub{\begin{algorithm}[h]}{\begin{algorithm}[t]}
\caption{\rangetrim error bounder}\label{alg:rangetrimbounder:dd}
\algrenewcomment[1]{\texttt{/*} #1 \texttt{*/} \\}
\SetKwInOut{Input}{Input}
\SetKwInOut{Output}{Output}
\SetKwRepeat{Do}{do}{while}
\SetKwBlock{Repeat}{Repeat}{}
\SetKwProg{myfunc}{function}{}{}
\SetKwFunction{init}{init\_state}
\SetKwFunction{update}{update\_state}
\SetKwFunction{lbound}{Lbound}
\SetKwFunction{rbound}{Rbound}
{\thesisdel{\scriptsize}
\Input{Inner \ssi range-based error bounder \inner}
\nonl\ \\
\myfunc{\init{}}{
\Return{\upshape{\texttt{\big\{\\
\quad$S_\ell$: \inner.\init{},\\
\quad$S_r$: \inner.\init{},\\
\quad$a'$: \undef,\\
\quad$b'$: \undef\\
\big\}}}}\;
}
\nonl\ \\
\myfunc{\update{$S, v$}}{
\uIf{$a', b'$ \upshape{are \undef}}{
	$S_\ell' \gets S_\ell$\;
	$S_r' \gets S_r$\;
	$a'' \gets v$\;
	$b'' \gets v$\;
}\uElse{
	$S_\ell' \gets$ \inner.\update{$S_\ell, \min{(v, b')}$}\;
	$S_r' \gets$ \inner.\update{$S_r, \max{(v, a')}$}\;
	$a'' \gets \min{(a', v)}$\;
	$b'' \gets \max{(b', v)}$\;
}
\Return{\upshape{\texttt{\big\{$S_\ell$: $S_\ell'$, $S_r$: $S_r'$, $a'$: $a''$, $b'$: $b''$\big\}}}}\;
}
\nonl\ \\
\myfunc{\lbound{$S, a, b, N, \delta$}}{
\Return{\upshape{\inner.\lbound{$S.S_\ell, a, S.b', N-1, \delta$}}}\;
}
\nonl\ \\
\myfunc{\rbound{$S, a, b, N, \delta$}}{
\Return{\upshape{\inner.\rbound{$S.S_r, S.a', b, N-1, \delta$}}}\;
}
}
\end{algorithm}

We give an implementation of our \rangetrim
technique in terms of the interface from \Cref{sec:state:dd}
in \Cref{alg:rangetrimbounder:dd}.

\section{H\lowercase{andling} A\lowercase{rbitrary} E\lowercase{xpressions}}
\label{sec:exprrange:dd}
In this \work, we assumed that column $c_i$ was known
to lie in some range $[a_i, b_i]$. We then showed how
to feed these bounds into our \rangetrim procedure to compute
conservative CIs for, \eg, \AVG($c_i$). In general, however,
we may want to compute an aggregate involving an arbitrary
expression in terms of several columns. That is, we may want
to compute CIs for, \eg, \AVG($f(c_1,\ldots,c_n)$).
We now show how to do so for a large class of $f$
by optimizing over such $f$ (while using the
individual bounds $[a_i, b_i]$ for each column as constraints)
in order to compute {\em derived} range bounds of the form
\[ \left[\inf_{c_1,\ldots,c_n}f(c_1,\ldots,c_n), \sup_{c_1,\ldots,c_n}f(c_1,\ldots,c_n)\right]\]

\topic{Applicable Expressions}
To compute a derived lower range bound, we need to be able to
either solve or compute a lower bound for the following optimization
problem:
\begin{align*}
\min_{c_1,\ldots,c_n} \quad & f(c_1,\ldots,c_n)\\
\textrm{\sthat} \quad & a_i \leq c_i \leq b_i, \quad \forall 1 \leq i\leq n
\end{align*}
The case for the derived upper range bound is analogous, but with
$f$ replaced by $-f$. We show how to compute both lower and upper
derived bounds under two kinds of conditions: (i) {\em the monotonicity condition}; \ie, $f$ is monotone in each $c_i$,
and (ii) {\em the convexity condition}; \ie, either $f$ or $-f$ is convex.
This handles a large number of expressions in practice.

\emtitle{1. Expressions Monotone in each Column.}
If $f$ is monotone in each column $c_i$, one simply needs to
check whether $a_i$ or $b_i$ gives the smaller (\resp larger)
value when computing the lower (\resp upper) bound,
and evaluate $f$ on the boundaries for each of these cases.

\emtitle{2. Convex or Concave Expressions.}
Without loss of generality, we now consider the case of convex $f$.
A large body of existing work focuses on minimizing a convex function subject to convex constraints;
please see Boyd \etal~\cite{Boyd2004} for relevant background.
In our case, the constraints are all linear (and are sometimes referred to as ``box'' constraints),
and most kinds of convex functions in practice can be optimized efficiently with off-the-shelf software
under such constraints, so we do not go into detail here.

Maximizing a $f$ under box constraints is more difficult. Since $f$ is convex, the maximum (and therefore
the derived upper range bound we seek) will occur at some set of boundary points; \ie, if $a_i\leq c_i\leq b_i$,
we know that the maximum will occur at one of $c_i=a_i$ or $c_i=b_i$. If we have $n$ columns involved
in the expression $f$, however, we will require evaluating $f$ on all $2^n$ combinations of boundary points
for the constraints. Fortunately, database aggregates over expressions typically do not involve more than $2$
or $3$ columns, and any $n\leq 20$ or so can be handled without trouble.

\begin{example}
Suppose the user issues a query to compute \AVG($(2c_1 + 3c_2 - 1)^2$) involving columns $c_1$ and $c_2$,
where we have range bounds $c_1\in[-3, 1]$ and $c_2\in[-1, 3]$. The minimum of $(2c_1 + 3c_2 - 1)^2$ subject
to these constraints is simply $0$, and can be found via quadratic programming.
The maximum can be obtained
by checking the boundaries $(-3, -1)$, $(-3, 3)$, $(1, -1)$, and $(1, 3)$, and we see that it occurs
at $(1, 3)$, for which $(2\cdot1 + 3\cdot3 - 1)^2 = 100$; thus, the derived range bounds will be $[0, 100]$.
\end{example}

\section{P\lowercase{roof of} T\lowercase{heorem}~\ref{THM:DKWNOREPLACE:DD}}
\label{sec:dkwnoreplace:dd}
In \Cref{sec:bounders:dd}, we claimed that the DKW inequality holds for sampling
without replacement from a finite population; we now sketch the proof.

\dkwnoreplace*
\begin{proof}
Sketch: following the original paper from Wald and Wolfowitz
on confidence limits for CDFs~\cite{wald1939confidence},
it suffices to consider the CDF for mass distributed uniformly
at each integer $1, 2, \ldots, N$. For each without-replacement
sample size $m$ and each deviation $\veps$, we would like to be
able to claim that
\[ \Parg{\sup|F_N - \ecdf_{N,m}| \geq \veps} < \Parg{\sup|F_{N'} - \ecdf_{N',m}| \geq \veps} \]
for every $N' > N$ --- that is, in some sense, the CDF becomes monotonically ``harder''
to estimate as we increase the dataset size. Unfortunately, this turns out to
not be the case, but in fact the claim follows if we merely prove the weaker condition
that
\[ \Parg{\sup|F_N - \ecdf_{N,m}| \geq \veps} < \Parg{\sup|F_{N'} - \ecdf_{N',m}| \geq \veps} \]
for {\em infinitely many} $N' > N$, implying that, as $N' \rightarrow \infty$,
the resulting probability to which $\Parg{\sup|F_{N'} - \ecdf_{N',m}|}$ converges
(necessarily bounded by the probability computed in the DKW inequality)
is an upper bound for the corresponding probability at every
finite $N'$, from which the claim would follow.

We show this via construction: namely, we show that, for every $N$, $m$, and $\veps$,
\[ \Parg{\sup|F_N - \ecdf_{N,m}| \geq \veps} < \Parg{\sup|F_{2N} - \ecdf_{2N,m}| \geq \veps} \]
That is, the CDF becomes monotonically harder to estimate each time we double the dataset size.
To show this, we consider two cases. Case 1: if point $2i-1$ is sampled, then point $2i$ is not
sampled, and vice versa for every $i=1,\ldots,N$. Conditioned on this event, the probability that
$\sup|F_{2N} - \ecdf_{2N,m}| \geq \veps$ at least the (unconditioned) probability that
$\sup|F_N - \ecdf_{N,m}| \geq \veps$, since samples at odd indices can only increase the deviation,
and samples at even indices cannot decrease it. Case 2: there is at least one $i$ for which points
at both indices $2i-1$ and $2i$ are sampled. Each such point is conceptually similar to
reducing $m$ by 1 in the original dataset of size $N$, but randomly weighting one of the samples
by 2 instead of 1. It can be shown that each time this is done, the probability that
$\sup|F_N - \ecdf_{N,m}| \geq \veps$ increases.
\end{proof}

\fi
}

\end{document}